\newcommand{\R}{{\mathbb {R}}}
\renewcommand{\leq}{\leqslant}
\renewcommand{\geq}{\geqslant}
\newcommand{\RR}{\mathbb{R}}
\renewcommand{\Re}{\operatorname{Re}}
\renewcommand{\Im}{\operatorname{Im}}
\newtheorem{theorem}{Theorem}
\newtheorem{thm}[theorem]{Theorem}
\theoremstyle{plain}
\newtheorem{definition}[theorem]{Definition}
\newtheorem{lem}[theorem]{Lemma}
\newtheorem{prop}[theorem]{Proposition}
\newtheorem{remark}[theorem]{Remark}
\newtheorem{obs}[theorem]{Observation}
\numberwithin{equation}{section}
\numberwithin{theorem}{section}
\theoremstyle{definition}
\appto{\bibsetup}{\sloppy}
\newcommand{\EE}{\mathbb{E}} 
\newcommand{\PP}{\mathbb{P}} 
\newcommand{\QQ}{\mathbb{Q}} 
\newtheorem{ass}{Assumption}
\title{Pricing VIX options under the Heston-Hawkes stochastic volatility model}
\author{Oriol Zamora Font\thanks{Department of Mathematics, University of Oslo, P.O. Box 1053 Blindern, N-0316 Oslo, Norway, Email: oriolz@math.uio.no}}
\date{\today}
\begin{document}
\maketitle

\begin{abstract}
We derive a semi-analytical pricing formula for European VIX call options under the Heston-Hawkes stochastic volatility model introduced in \cite{arxiv}. This arbitrage-free model incorporates the volatility clustering feature by adding an independent compound Hawkes process to the Heston volatility. Using the Markov property of the exponential Hawkes an explicit expression of $\text{VIX}^2$ is derived as a linear combination of the variance and the Hawkes intensity. We apply qualitative ODE theory to study the existence of some generalized Riccati ODEs. Thereafter, we compute the joint characteristic function of the variance and the Hawkes intensity exploiting the exponential affine structure of the model. Finally, the pricing formula is obtained by applying standard Fourier techniques.

{\it Keywords:} VIX options, Volatility with self-exciting jumps, Hawkes process, Option pricing, Semi-analytical pricing formula. 

{\it AMS classification MSC2020:} 60G55, 60H10, 91G15, 91G20. 
\end{abstract}

\section{Introduction}\label{Intro}
Volatility trading has become remarkably important in finance in recent years and plays a crucial role in risk management, portfolio diversification, asset pricing and econometrics. The Chicago Board Options Exchange (CBOE) established the first volatility index, the $\text{VXO}$, in January 1993. According to \cite{whaley}, the $\text{VXO}$ index represents the implied volatility of a hypothetical at-the-money option on the S\&P 100 (OEX) expiring in 30 days. The CBOE introduced in September 2003 a new volatility index, the $\text{VIX}$, in a model-free procedure. The $\text{VIX}$ is computed as a weighted sum of out-of-the money S\&P 500 (SPX) calls and puts expiring in 30 days across all available strikes. A convenient property of the $\text{VIX}$ index is that its square approximates the 30-day conditional risk neutral expectation of the return variance, that is, the 30-day variance swap, see \cite{carrandwu}. Many further volatility indices have been introduced like the $\text{VXN}$, the $\text{VXD}$, the $\text{VCAC}$ and others \cite{intro1}. 

\cite{derivatives1} and \cite{derivatives2} described in 1989 and 1993, respectively, the need of volatility derivatives to allow traders to better manage their portfolio and hedge against volatility risk. In that regard, the CBOE Futures Exchange (CFE) introduced $\text{VIX}$ futures contracts in March 2004 followed by the launch of VIX options in February 2006. Thereafter, many new trading opportunities became available for traders giving them more control on volatility risk. The trading popularity of volatility derivatives has consistently increased through the years. This increase is, to some extent,  because $\text{VIX}$ movements are negatively correlated with movements in SPX returns.

Due to the importance of volatility trading in finance, a considerable amount of research has been conducted to develop and expand stochastic volatility models for pricing VIX derivatives. In particular, including jumps in the stock and/or the volatility is a common practice. For instance, \cite{intro1} shows that a mean reverting logarithmic diffusion with Poisson jumps approximates the behaviour of the VIX index and an analytical pricing formula for European VIX options is derived. \cite{Sepp1} uses a Heston model with Poisson jumps in the volatility to derive a pricing formula for volatility derivatives and \cite{Sepp2} for the Heston model with Poisson jumps both in returns and volatility. Also, \cite{zhulian,zhulian2} obtain an exact solution for the pricing of VIX futures in the Heston model with simultaneous Poisson jumps in the asset price and the volatility.

An important reason to go beyond the Poisson jump diffusion models is that the volatility clustering feature and the high-volatility effect \cite{self1,SELF2} are not included due to the constant intensity of the Poisson process. In that regard, \cite{state1,state2,VIXJUMPS1,intro1} obtain pricing formulas for VIX derivatives under a stochastic volatility model with jumps in the asset and/or volatility with state-dependent intensity. Additionally, \cite{VIXcluster} proposes a Heston model with Hawkes jumps in the stock price and \cite{VIXJUMPS1} a Heston model with Hawkes jumps both in the asset and the volatility, respectively. The main purpose of these models is to include the jump clustering effect in the stock and/or the volatility, that is, the occurrence of a jump increases the probability of future jumps. Other fundamental works about the VIX and volatility derivatives are \cite{OTHER1,OTHER2,OTHER3,OTHER4,OTHER5,OTHER6}.

The objective of our work is to obtain a semi-analytical pricing formula for European VIX call options under the Heston-Hawkes stochastic volatility model introduced in \cite{arxiv}. Essentially, the Heston-Hawkes model is a Heston model with a compound Hawkes process in the volatility. In reference to that, the SVCIJ-H model in \cite{VIXJUMPS1} and their approach present some similarities but also some important differences with our paper that are thoroughly described in Section \ref{sec3}. Fundamentally, the SVCIJ-H model in \cite{VIXJUMPS1} has simultaneous and independent Hawkes jumps in the stock and volatility and the size of the jumps is exponentially distributed. In contrast, the self-exciting jumps are only present in the volatility of the Heston-Hawkes model but their size has a general distribution, being the exponential distribution a particular case. This leads to technical challenges in qualitative ODE theory which are properly tackled in Section \ref{CF}. 

First and foremost, the existence of risk neutral probability measures is crucial to make sense of the pricing problem. To the best of our knowledge, the usual procedure in the aforementioned literature is to present the model directly under $\QQ$ after assuming that the model is already arbitrage-free. Nevertheless, \cite{BS96,Ryd99} present some stochastic volatility models where no equivalent local martingale measure exist. Moreover, in terms of risk management practices, \cite{Stein16,Stein162} show that exposures computed under the risk neutral measure are essentially arbitrary and the passage from $\PP$ to $\QQ$ is required. Therefore, a proper study about the existence of risk neutral probability measures is fundamental for the pricing of VIX derivatives. In that respect, \cite{arxiv} proves that the Heston-Hawkes stochastic volatility model is arbitrage-free and an explicit family of equivalent martingale measures is given. 

Our approach to derive a semi-analytical pricing formula is based on the use of the following equality 
\begin{align}\label{formula}
    \EE\left[\max\{\sqrt{X}-K,0\}\right]=\frac{1}{2\sqrt{\pi}}\int_0^\infty\Re\left[\frac{1-\text{erf}\left(K\sqrt{\phi}\right)}{\phi^{3/2}}\EE\left[e^{\phi X}\right]\right]d\phi_I,
\end{align}
where $\text{erf}$ is the error function and some technical conditions described in Section \ref{sec3} must hold. In order to use the previous formula, an explicit expression of $\text{VIX}^2$ is required. Using the Markov property of the exponential Hawkes, it is possible to derive such an explicit expression as a linear combination of the variance and the Hawkes intensity. Another requirement for using the equality is a closed expression of the joint characteristic function of the variance and the Hawkes intensity. Thanks to the exponential affine structure of the model, the characteristic function can be computed after solving some generalized Riccati ODEs. In addition, to apply \eqref{formula} it is crucial to check that the domain of the characteristic function intersects the complex numbers with strictly positive real part. 

The paper is organized as follows. In Section \ref{sec2} we outline the Heston-Hakwes stochastic volatility model introduced in \cite{arxiv}. In Section \ref{sec3} we present the problem of pricing European $\text{VIX}$ call options and the technical results that are required to derive a semi-analytical pricing formula. In Section \ref{prel} we summarize the arbitrage-free property of the stochastic volatility model proven in \cite{arxiv} and some results on the compensators of the Hawkes process under the risk neutral probability measures given in \cite{arxiv2}. In Section \ref{CF} we compute the joint characteristic function of the variance and the Hawkes intensity using the exponential affine structure of the model. Before, we study the solution of some generalized Riccati ODEs that appear in the expression of the characteristic function. In Section \ref{EXPVIX} we derive an explicit expression of the $\text{VIX}$ index using the Markov property of the exponential Hawkes. Finally, in Section \ref{SEC7} we present the semi-analytical pricing formula for European VIX call options under the Heston-Hawkes stochastic volatility model. In the appendix we give all the technical proofs. 

\section{Heston-Hawkes stochastic volatility model}\label{sec2}
We outline the Heston-Hawkes stochastic volatility model introduced in \cite{arxiv}. Essentially, this model is an extension of the well-known Heston model that incorporates the volatility clustering effect by adding an independent compound Hawkes process to the variance.

Let $T\in \R$, $T>0$ be a fixed time horizon. On a complete probability space $(\Omega,\mathcal{A},\PP)$, we consider a two-dimensional standard Brownian motion $(B,W)=\left\{\left(B_t,W_t\right), t\in[0,T]\right\}$ and its minimally augmented filtration $\mathcal{F}^{(B,W)}=\{\mathcal{F}^{(B,W)}_t, t\in[0,T]\}$. On $(\Omega,\mathcal{A},\PP)$, we also consider a Hawkes process $N=\{N_t, t\in[0,T]\}$ with stochastic intensity given by
\begin{align*}
    \lambda_t=\lambda_0+\alpha\int_0^t e^{-\beta(t-s)}dN_s,
\end{align*}
or, equivalently, 
\begin{align}\label{dynlambda}
    d\lambda_t=-\beta(\lambda_t-\lambda_0)dt+\alpha dN_t,
\end{align}
where $\lambda_0>0$ is the initial intensity, $\beta>0$ is the speed of mean reversion and $\alpha\in(0,\beta)$ is the self-exciting factor. Note that the stability condition $\alpha<\beta$ holds. See \cite[Section 2]{Hawkesfinance} and \cite[Section 3.1.1]{article1} for further details on Hawkes processes. 
Then, we consider a sequence of i.i.d., strictly positive and integrable random variables $\{J_i\}_{i\geq 1}$ and the compound Hawkes process $L=\{L_t, t\in[0,T]\}$ given by
\begin{align*}
    L_t=\sum_{i=1}^{N_t}J_i.
\end{align*}   
We make the following assumption on the distribution of the jumps size. 

\begin{ass}\label{as} 
There exists $\epsilon_J>0$ such that the moment generating function $M_J(t)=\EE[e^{tJ_1}]$ of $J_1$ is well defined in $(-\infty,\epsilon_J)$. Moreover, $(-\infty,\epsilon_J)$ is the maximal domain in the sense that
\begin{align*}
    \lim_{t\rightarrow\epsilon_J^-}M_J(t)=\infty.
\end{align*}
Since $\epsilon_J>0$, all positive moments of $J_1$ are finite. Note that it is a rather mild assumption, and that the exponential distribution belongs to the class of distributions satisfying the condition. 
\end{ass}
We assume that $(B,W), N$ and $\{J_i\}_{i\geq 1}$ are independent of each other. We write $\mathcal{F}^L=\left\{\mathcal{F}^L_t, t\in[0,T]\right\}$ for the minimally augmented filtration generated by $L$ and
\begin{align*}
    \mathcal{F}=\{\mathcal{F}_t=\mathcal{F}_t^{(B,W)}\vee\mathcal{F}_t^L, t\in[0,T]\},
\end{align*}
for the joint filtration. We assume that $\mathcal{A}=\mathcal{F}_T$ and we will work with $\mathcal{F}$. Since $(B,W)$ and $L$ are independent processes, $(B,W)$ is also a two-dimensional $(\mathcal{F},\PP)$-Brownian motion. 

Finally, with all these ingredients, we introduce the Heston-Hawkes stochastic volatility model. We assume that the interest rate is deterministic and constant equal to $r$. The stock price $S=\{S_t, t\in[0,T]\}$ and its variance $v=\{v_t, t\in[0,T]\}$ are given by
\begin{align}\label{2}
    \frac{dS_t}{S_t} & =\mu_tdt+\sqrt{v_t}\left(\sqrt{1-\rho^2} dB_t+\rho dW_t\right), \\
    \label{21} dv_t & =-\kappa\left(v_t-\Bar{v}\right)dt+\sigma\sqrt{v_t}dW_t+\eta dL_t,
\end{align}
where $S_0>0$ is the initial price of the stock, $\mu: [0,T] \rightarrow \R$ is a measurable and bounded function, $\rho\in(-1,1)$ is the correlation factor, $v_0>0$ is the initial value of the variance, $\kappa>0$ is the variance's mean reversion speed, $\Bar{v}>0$ is the long-term variance, $\sigma>0$ is the volatility of the variance and $\eta>0$ is a scaling factor. We assume that the Feller condition $2\kappa\Bar{v}\geq\sigma^2$ is satisfied, see \cite[Proposition 1.2.15]{AlfonsiAurélien2015ADaR}. For more details and further results on the model see \cite{arxiv}.

\section{Pricing VIX options}\label{sec3}
The objective of this paper is to derive a semi-analytical pricing formula for European $\text{VIX}$ call options under the Heston-Hawkes stochastic volatility model. Namely, let $\QQ$ be a risk neutral probability measure, the price $\mathcal{C}^\QQ(t,T,K)$ at time $t\in[0,T]$ of an European $\text{VIX}$ call option with maturity time $T$ and strike $K$ is given by
\begin{align}\label{call}
    \mathcal{C}^\QQ(t,T,K)=e^{-r(T-t)}\EE^{\QQ}\left[\max\{\text{VIX}^\QQ_T-K,0\}|\mathcal{F}_t\right],
\end{align}
where the VIX index can be defined (with mathematical simplification) in the following way
    \begin{align*}
        \left(\text{VIX}_t^\QQ\right)^2=-\frac{2}{\Delta}\EE^{\QQ}\left[\log\left(e^{-r\Delta}\frac{S_{t+\Delta}}{  S_t}\right)\Big|\mathcal{F}_t\right] \cdot 100^2,
    \end{align*}
where $\Delta=\frac{30}{365}$ and $S$ follows the dynamics in \eqref{2}. See \cite{VIX1,state1,VIX3,zhulian} for further details on the definition of the VIX index. Our approach to obtain the pricing formula involves the use of Fourier transform techniques. Precisely, we will employ the following convenient equality
\begin{align}\label{fourier}
    \EE\left[\max\{\sqrt{X}-K,0\}\right]=\frac{1}{2\sqrt{\pi}}\int_0^\infty\Re\left[\frac{\text{erfc}\left(K\sqrt{\phi}\right)}{\phi^{3/2}}\EE\left[e^{\phi X}\right]\right]d\phi_I,
\end{align}
where $\phi=\phi_R+i\phi_I\in\mathbb{C}$ with $\phi_R>0$, $X$ is a positive random variable such that $\EE\left[e^{\phi_RX}\right]<\infty$ and $\text{erfc}$ is the complementary error function defined by $\text{erfc}(z)=1-\text{erf}(z)$. For a reference of this formula see \cite{VIXJUMPS1, for1, for2}. The usefulness of this formula relies on the fact that we can derive an explicit expression of $\text{VIX}^2$ and its characteristic function. Several technical issues arise if the formula given in \eqref{fourier} is going to be applied to derive a pricing formula for \eqref{call}:
\begin{enumerate}
    \item The existence of risk neutral probability measures is required to make sense of \eqref{call}. To the best of our knowledge, the common procedure in the literature regarding pricing of VIX options under stochastic volatility models with jumps is to give the model directly under $\QQ$ after assuming that the model is arbitrage-free. It is discussed in \cite{BS96,Ryd99}  that this property is not obvious in general and a proper study is required. \cite{arxiv} proves that the Heston-Hawkes stochastic volatility is arbitrage-free and an explicit family of equivalent martingale measures is given. This is crucial to make sense of this problem and we summarize the results of \cite{arxiv} in Section \ref{prel}. Moreover, \cite{Stein16,Stein162} show that exposures computed under the risk neutral measure are essentially arbitrary and the passage from $\PP$ to $\QQ$ is crucial for risk management practices. 
    
    \item An explicit expression of $\text{VIX}^2$ is necessary  in order to use \eqref{fourier}. In Section \ref{EXPVIX} we obtain such expression by exploiting the Markov property of the exponential Hawkes \cite[Remark 1.22]{ETH}. 
    
    \item There must exist $\phi=\phi_R+i\phi_I\in\mathbb{C}$ with $\phi_R>0$ such that $\EE\left[e^{\phi_RX}\right]<\infty$ to use the formula in \eqref{fourier}. In our setting, this condition is reduced to the existence of $\phi,\psi\in\mathbb{C}$ with $\Re(\phi)>0$ and $\Re(\psi)>0$ such that $\EE^\QQ\left[e^{\phi v_T+\psi\lambda_T}\right]<\infty$ where $\QQ$ is a risk neutral probability measure. This study is conducted in Section \ref{CF}. 
    
    \item A closed expression of the characteristic function $\EE\left[e^{\phi X}\right]$ is needed to obtain a tractable pricing formula. In Section \ref{CF} we derive an (almost) explicit expression of the conditional characteristic function $\EE^\QQ\left[e^{\phi v_T+\psi\lambda_T}|\mathcal{F}_t\right]$ for $t\in[0,T]$ using the exponential affine structure of the model \cite{duffie,VIXJUMPS1}.
\end{enumerate}

After presenting the model, the problem and the approach, it is pertinent to thoroughly explain the connection of our paper with \cite{VIXJUMPS1}.
\begin{remark}
    It is important to mention the connection of our paper with \cite{VIXJUMPS1}. While the objective is the same and the stochastic volatility models are similar, there is a significant amount of differences that are worth to highlight. 
    
    The model SVCIJ-H presented in \cite{VIXJUMPS1} has simultaneous and independent self-exciting jumps in the stock and the variance and their size is exponentially distributed. In our model, the self-exciting jumps occur only in the variance and their size has a general distribution, being the exponential distribution a particular case. 
    
    The stochastic volatility models presented in \cite{VIXJUMPS1} are given directly under $\QQ$ while we justify the existence of a family of equivalent martingale measures for our model. As commented before, the assumption that the model is arbitrage-free may not be obvious \cite{BS96,Ryd99} and the passage from $\PP$ to $\QQ$ is necessary for risk management procedures like computing exposures \cite{Stein16,Stein162}. 
    
    Since the jumps in our setting can have an arbitrary distribution, the ODEs that appear when computing the characteristic function take a more general form than the ones in \cite[Proposition 2]{VIXJUMPS1}. In addition, in Section \ref{CF} we prove that the domain of the characteristic function intersects the complex numbers with strictly positive real part, which is crucial to use the formula in \eqref{fourier}. However, we miss this study in  \cite[Proposition 2]{VIXJUMPS1} for the SVCIJ-H model.
\end{remark}

\section{Arbitrage-free property of the model and compensators}\label{prel}
We summarize the results proven in \cite{arxiv} regarding the existence and positivity of the variance process and the change of measure. In addition, we outline the results in \cite{arxiv2} about the compensator of the Hawkes process under the risk neutral probability measures.

The existence and positivity results of the variance are given in Proposition \ref{p2} and the change of measure is compacted in Theorem \ref{risk}. The results about the compensator of the Hawkes process under the risk neutral measure are given in Proposition \ref{oldprop}.
\begin{prop}\label{p2}
Equation \eqref{21} has a pathwise unique strong solution and the solution is strictly positive.
\end{prop}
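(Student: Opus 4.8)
The equation \eqref{21} is a CIR-type stochastic differential equation driven by an additional finite-variation jump term $\eta\,dL_t$, so the plan is to reduce the analysis to the classical Feller/CIR theory by working pathwise between successive jump times of the Hawkes process $N$. Since $N$ is a point process with $\PP$-a.s. finitely many jumps on the compact interval $[0,T]$, we may enumerate its jump times as $0=\tau_0<\tau_1<\tau_2<\cdots$, with only finitely many of them in $[0,T]$. On each stochastic interval $[\tau_{i},\tau_{i+1})$ the process $L$ is constant, hence $v$ solves the pure-diffusion CIR equation
\begin{align*}
    dv_t=-\kappa\left(v_t-\bar v\right)dt+\sigma\sqrt{v_t}\,dW_t,
\end{align*}
with (random but $\mathcal{F}_{\tau_i}$-measurable) initial datum $v_{\tau_i}$. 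By the classical theory (for instance \cite[Proposition 1.2.15]{AlfonsiAurélien2015ADaR}), for each fixed positive starting point this equation has a pathwise unique strong solution which, under the Feller condition $2\kappa\bar v\geq\sigma^2$ that we assume, stays strictly positive. At the jump time $\tau_{i+1}$ one sets $v_{\tau_{i+1}}=v_{\tau_{i+1}-}+\eta J_{i+1}$, which is again strictly positive since $v_{\tau_{i+1}-}>0$, $\eta>0$ and $J_{i+1}>0$ by hypothesis. Concatenating these pieces produces a càdlàg process on $[0,T]$, and because only finitely many jumps occur the induction closes after finitely many steps.

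The key steps, in order, are: (i) fix a sample path and list the Hawkes jump times on $[0,T]$; (ii) on the first interval $[0,\tau_1)$ invoke existence, uniqueness and strict positivity for the CIR equation started at $v_0>0$; (iii) define the value at $\tau_1$ by adding the jump $\eta J_1>0$, preserving strict positivity; (iv) iterate over the finitely many remaining intervals, using at each stage that the new starting value is strictly positive and $\mathcal{F}_{\tau_i}$-measurable so that the strong-solution property is inherited; (v) check that the pasted process is adapted, càdlàg and solves \eqref{21} globally, and that pathwise uniqueness propagates from each subinterval (if two solutions agreed up to $\tau_i$, they agree on $[\tau_i,\tau_{i+1})$ by CIR uniqueness and at $\tau_{i+1}$ they receive the same deterministic jump increment).

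The main obstacle is not the diffusive part — that is standard — but making the pathwise patching argument rigorous at the level of adaptedness and strong solutions: one must verify that the jump times $\tau_i$ are $\mathcal{F}$-stopping times, that $v_{\tau_i}$ is $\mathcal{F}_{\tau_i}$-measurable, and that solving the CIR equation on $[\tau_i,\tau_{i+1})$ with an $\mathcal{F}_{\tau_i}$-measurable initial condition yields an $\mathcal{F}$-adapted process (this uses the independence of $(B,W)$ from $(N,\{J_i\})$ together with the strong Markov property of the CIR flow, or alternatively a direct Yamada–Watanabe argument on each interval). A secondary technical point is to confirm that the strict positivity is uniform enough to guarantee $\sqrt{v_t}$ is well behaved and that no explosion occurs before $T$; both follow because there are only finitely many jumps and the CIR solution is non-explosive. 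I would present (i)–(iv) in detail and relegate the measurability bookkeeping in (v) to a short paragraph, citing the standard CIR results for the analytic content.
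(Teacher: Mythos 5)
Your argument is sound: interlacing between the $\PP$-a.s. finitely many Hawkes jump times on $[0,T]$, invoking pathwise uniqueness, strong existence and strict positivity for the CIR piece under the Feller condition $2\kappa\bar v\geq\sigma^2$, and noting that each jump increment $\eta J_{i+1}>0$ preserves strict positivity, is the standard and correct route, and the details you flag (the $\tau_i$ being $\mathcal{F}$-stopping times, solvability from an $\mathcal{F}_{\tau_i}$-measurable initial value, propagation of uniqueness across jumps) are exactly the bookkeeping that needs the extra paragraph you promise. The paper itself gives no argument for this proposition — it only cites Propositions 2.1 and 2.2 of \cite{arxiv} — so your proposal essentially reconstructs the proof that the citation stands in for, and I see no gap in it.
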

\begin{proof}
    See \cite[Proposition 2.1]{arxiv} and \cite[Proposition 2.2]{arxiv}.
\end{proof}

In the next result we define the constant $c_l$, which depends on the model parameters. Essentially, it is a strictly positive constant such that for $c<c_l$ the following holds
\begin{align*}
    \EE\left[\exp\left(c\int_0^Tv_udu\right)\right]<\infty.
\end{align*}
This constant appears in Theorem \ref{risk}, which is the main result regarding the change of measure. 

\begin{prop}\label{novikov}
For $c\leq\frac{\kappa^2}{2\sigma^2}$, define $D(c):=\sqrt{\kappa^2-2\sigma^2c}$, $\Lambda(c):=\frac{2\eta c\left(e^{D(c)T}-1\right)}{D(c)-\kappa+\left(D(c)+\kappa\right)e^{D(c)T}}$ and
\begin{align*}
    c_l:=\sup\left\{c\leq\frac{\kappa^2}{2\sigma^2}: \Lambda(c)< \epsilon_J \hspace{0.3cm}\text{and}\hspace{0.3cm} M_J\left(\Lambda(c)\right)\leq\frac{\beta}{\alpha}\exp\left(\frac{\alpha}{\beta}-1\right) \right\}.
\end{align*}
Then, $c_l>0$ and for $c<c_l$
\begin{align*}
    \EE\left[\exp\left(c\int_0^Tv_udu\right)\right]<\infty.
\end{align*}
\end{prop}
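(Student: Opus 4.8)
\emph{Step 1: the trivial regime and positivity of $c_l$.} If $c\le0$ there is nothing to prove, since $v>0$ by Proposition~\ref{p2} gives $\exp\!\big(c\int_0^Tv_u\,du\big)\le1$; so fix $c\in(0,c_l)$. Writing $\Lambda(c)=\eta\Phi^{(c)}(T)$, where $\Phi^{(c)}$ solves the scalar Riccati equation $(\Phi^{(c)})'=\tfrac{\sigma^2}{2}(\Phi^{(c)})^2-\kappa\Phi^{(c)}+c$ with $\Phi^{(c)}(0)=0$, one reads off from the explicit formula in the statement (or from the ODE comparison principle) that on $[0,\kappa^2/2\sigma^2]$ the map $c\mapsto\Lambda(c)$ is well defined, continuous and non-decreasing, with $\Lambda(0)=0$; hence $c\mapsto M_J(\Lambda(c))$ is continuous and non-decreasing wherever finite. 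At $c=0$ both defining constraints hold strictly: $\Lambda(0)=0<\epsilon_J$ and $M_J(0)=1<\tfrac{\beta}{\alpha}\exp(\tfrac{\alpha}{\beta}-1)$, the latter because $e^{x-1}>x$ for $x=\tfrac{\alpha}{\beta}\in(0,1)$ (the convex function $x\mapsto e^{x-1}-x$ vanishes only at $x=1$). By continuity a whole interval $[0,c_l)$ lies in the defining set, so $c_l>0$; moreover, choosing $c'\in(c,c_l]$ in the set, monotonicity yields $\Lambda(c)\le\Lambda(c')<\epsilon_J$ and $M_J(\Lambda(c))\le M_J(\Lambda(c'))\le\tfrac{\beta}{\alpha}\exp(\tfrac{\alpha}{\beta}-1)$. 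Put $\Phi:=\Phi^{(c)}$, which is non-negative, non-decreasing, globally defined, bounded above by $\tfrac{\kappa-D(c)}{\sigma^2}$, and satisfies $\eta\Phi(T)=\Lambda(c)$.

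\emph{Step 2: conditioning on the jump path and the CIR exponential moment.} Condition on $\mathcal F^L_T$. Since $(B,W)\perp L$, conditionally $W$ is still a Brownian motion and, by Proposition~\ref{p2}, $v$ is the strictly positive strong solution of \eqref{21} with $L$ frozen at its realised path; between consecutive jump times it is a genuine CIR diffusion, and $N_T<\infty$ a.s. I would invoke the classical exponential-moment formula for the CIR process with a linear terminal weight (see e.g.\ \cite{AlfonsiAurélien2015ADaR}): $\EE[\exp(c\int_s^tV_u\,du+wV_t)\mid V_s=x]=\exp(\Phi_w(t-s)x+\Psi_w(t-s))$ for $0\le w<\tfrac{\kappa+D(c)}{\sigma^2}$, where $\Phi_w$ solves the same Riccati equation from $\Phi_w(0)=w$ and $\Psi_w(r)=\kappa\bar v\int_0^r\Phi_w(u)\,du$. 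Composing this identity over the successive inter-jump intervals $[0,\tau_1),\dots,[\tau_{N_T},T]$ — each jump turning a factor $\exp(wv_{\tau_i})$ into $\exp(wv_{\tau_i^-})\exp(w\eta J_i)$, and the flow (semigroup) property of the autonomous Riccati ODE collapsing the composed $\Phi_w$'s and $\Psi_w$'s to $\Phi$ and $\Psi(s):=\kappa\bar v\int_0^s\Phi(u)\,du$ — yields, $\PP$-a.s.,
\begin{equation*}
\EE\Big[\exp\Big(c\!\int_0^T\!v_u\,du\Big)\,\Big|\,\mathcal F^L_T\Big]=\exp\Big(\Phi(T)v_0+\Psi(T)+\eta\sum_{i=1}^{N_T}\Phi(T-\tau_i)\,J_i\Big),
\end{equation*}
the terminal weights encountered along the way being $\Phi(T-\tau_i)\le\Phi(T)\le\tfrac{\kappa-D(c)}{\sigma^2}<\tfrac{\kappa+D(c)}{\sigma^2}$, so the CIR formula applies at every step.

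\emph{Step 3: averaging over the jumps and the Hawkes exponential moment.} Taking expectations and using that $\{J_i\}_{i\ge1}$ is i.i.d.\ and independent of $N$, that $\eta\Phi(T-\tau_i)\le\eta\Phi(T)=\Lambda(c)<\epsilon_J$ (so $M_J$ is finite there), and the monotonicity of $M_J$ and $\Phi$,
\begin{equation*}
\EE\Big[\exp\Big(c\!\int_0^T\!v_u\,du\Big)\Big]=e^{\Phi(T)v_0+\Psi(T)}\,\EE\Big[\prod_{i=1}^{N_T}M_J\big(\eta\Phi(T-\tau_i)\big)\Big]\le e^{\Phi(T)v_0+\Psi(T)}\,\EE\big[\theta^{N_T}\big],
\end{equation*}
with $\theta:=M_J(\Lambda(c))$ satisfying $1<\theta\le\tfrac{\beta}{\alpha}\exp(\tfrac{\alpha}{\beta}-1)$. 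To bound $\EE[\theta^{N_T}]$ I would use the exponential affine structure of the Markov process $(\lambda_t)$: let $b$ solve the generalized Riccati equation $b'=\theta e^{\alpha b}-\beta b-1$, $b(0)=0$, and set $a(s):=\beta\lambda_0\int_0^sb(u)\,du$. The right-hand side $R(b)=\theta e^{\alpha b}-\beta b-1$ is strictly convex, with $R(0)=\theta-1>0$ and minimum $\tfrac{\beta}{\alpha}\big(1+\log\theta-\log\tfrac{\beta}{\alpha}\big)-1$, which is $\le0$ precisely when $\theta\le\tfrac{\beta}{\alpha}\exp(\tfrac{\alpha}{\beta}-1)$; hence $R$ has a non-negative zero, $b$ increases monotonically and stays bounded, and $a,b$ are globally defined. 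By It\^o's formula the two ODEs annihilate the drift of $Y_t:=\theta^{N_t}\exp(a(T-t)+b(T-t)\lambda_t)$, $t\in[0,T]$, so $Y$ is a non-negative local martingale, hence a supermartingale, and $\EE[\theta^{N_T}]=\EE[Y_T]\le\EE[Y_0]=\exp(a(T)+b(T)\lambda_0)<\infty$; with the last display this proves the claim. (The same threshold reappears via the cluster representation of the Hawkes process: a cluster is a subcritical $\mathrm{Poisson}(\alpha/\beta)$ Galton--Watson tree whose total progeny $\Xi$ has $\EE[\theta^\Xi]<\infty$ iff $\theta\le\tfrac{\beta}{\alpha}\exp(\tfrac{\alpha}{\beta}-1)$ — the smallest solution $h\ge1$ of $h=\theta e^{(\alpha/\beta)(h-1)}$, the equality case following by monotone convergence — while $N_T$ is dominated by the total cluster size of a $\mathrm{Poisson}(\lambda_0T)$ number of immigrants.)

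\emph{Main obstacle.} The heart of the matter is the Hawkes exponential moment in Step~3: pinning down the sharp constant $\tfrac{\beta}{\alpha}\exp(\tfrac{\alpha}{\beta}-1)$ and, especially, treating the boundary case $\theta=\tfrac{\beta}{\alpha}\exp(\tfrac{\alpha}{\beta}-1)$, where the Riccati solution only approaches its equilibrium asymptotically (equivalently, where the total progeny sits exactly at the edge of its exponential integrability, so that finiteness there rests on the polynomial correction in its tail). A more routine, bookkeeping-type difficulty is making the iteration in Step~2 precise — composing the terminal-weighted CIR formulae across the random inter-jump intervals and checking that no terminal weight ever reaches the critical value $\tfrac{\kappa+D(c)}{\sigma^2}$, so that every exponential moment invoked is finite.
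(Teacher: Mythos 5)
Your argument is sound, and note that for this proposition the paper itself contains no internal proof: it simply defers to \cite[Lemma 3.1 and Proposition 3.5]{arxiv}, so the only meaningful check is correctness, and your proof passes it. Your route is clearly the one the constants were built for: $\Lambda(c)=\eta\Phi^{(c)}(T)$ with $\Phi^{(c)}$ the CIR Riccati flow started at $0$ (the explicit formula in the statement is exactly that solution), conditioning on the jump path reduces the problem to $\EE[\theta^{N_T}]$ with $\theta=M_J(\Lambda(c))$, and the affine supermartingale $Y_t=\theta^{N_t}e^{a(T-t)+b(T-t)\lambda_t}$ works because the generalized Riccati $b'=\theta e^{\alpha b}-\beta b-1$, $b(0)=0$, stays bounded precisely when $\theta\leq\frac{\beta}{\alpha}\exp\left(\frac{\alpha}{\beta}-1\right)$ --- I checked the drift cancellation and the threshold computation and both are right; this is also the same supermartingale device the present paper uses in Proposition \ref{cf}. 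Two remarks. First, the ``main obstacle'' you flag at the boundary $\theta=\frac{\beta}{\alpha}\exp\left(\frac{\alpha}{\beta}-1\right)$ is not actually an obstacle for your argument: there the two zeros of $R$ merge into a double root $b^*>0$, the solution $b$ increases to $b^*$ only asymptotically, hence remains bounded on $[0,T]$, and the supermartingale bound goes through verbatim (the delicate tail analysis you mention is only needed for the cluster-representation alternative, which you do not rely on). Second, the genuinely incomplete points are the routine ones you already identify: justifying the conditioning on $\mathcal{F}^L_T$ (e.g.\ by initially enlarging the filtration with $\mathcal{F}^L_T$ and using $W\perp L$ so that $W$ remains a Brownian motion, then iterating the terminal-weighted CIR formula over the a.s.\ finitely many inter-jump intervals via the flow property, with every terminal weight $\Phi(T-\tau_i)\leq\Phi(T)<\frac{\kappa-D(c)}{\sigma^2}$ strictly below the critical level since $c<c_l\leq\frac{\kappa^2}{2\sigma^2}$ forces $D(c)>0$), and checking local integrability of the jump term before invoking the supermartingale property; both are standard and do not affect the validity of the scheme.
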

\begin{proof}
    See \cite[Lemma 3.1]{arxiv} and \cite[Proposition 3.5]{arxiv}.
\end{proof}
\begin{thm}\label{risk}Let $a\in\RR$ and define $\theta_t^{(a)}:=\frac{1}{\sqrt{1-\rho^2}}\left(\frac{\mu_t-r}{\sqrt{v_t}}-a\rho\sqrt{v_t}\right)$, 
\begin{align*}
    Y_t^{(a)} &:=\exp\left(-\int_0^t\theta_u^{(a)}dB_u-\frac{1}{2}\int_0^t(\theta_u^{(a)})^2du\right), \\
    Z_t^{(a)} &:=\exp\left(-a\int_0^t\sqrt{v_u}dW_u-\frac{1}{2}a^2\int_0^tv_udu\right)
\end{align*}
and $X_t^{(a)}:=Y_t^{(a)}Z_t^{(a)}$. Recall the definition of $c_l$ in Proposition \ref{novikov}. \begin{enumerate}
    \item $X^{(a)}$ is a $(\mathcal{F},\PP)$-martingale for $|a|<\sqrt{2c_l}$.
    \item The set 
\begin{align*}
    \mathcal{E}:=\left\{\QQ(a) \hspace{0.2cm}\text{given by}\hspace{0.2cm} \frac{d\QQ(a)}{d\PP}=X_T^{(a)} \hspace{0.2cm}\text{with}\hspace{0.2cm} |a|<\sqrt{2c_l}\right\} 
\end{align*}
is a set of equivalent local martingale measures.
\item Let $\QQ(a)\in\mathcal{E}$, the process $(B^{\QQ(a)},W^{\QQ(a)})$ defined by
    \begin{align*}
    dB_t^{\QQ(a)}&:=dB_t+\theta_t^{(a)}dt, \notag\\
    dW_t^{\QQ(a)}&:=dW_t+a\sqrt{v_t}dt
\end{align*}
is a a two-dimensional standard $(\mathcal{F},\QQ(a))$-Brownian motion.
\item Let $\QQ(a)\in\mathcal{E}$, the dynamics of $S$ and $v$ are given by
\begin{align}
    \frac{dS_t}{S_t} &=rdt+\sqrt{v_t}\left(\sqrt{1-\rho^2} dB_t^{\QQ(a)}+\rho dW_t^{\QQ(a)}\right), \nonumber \\
     \label{volqa}dv_t &= -\kappa^{(a)}(v_t-\Bar{v}^{(a)})dt+\sigma\sqrt{v_t}dW_t^{\QQ(a)}+\eta dL_t.
\end{align}
where $\kappa^{(a)}=\kappa+a\sigma>0$ and $\Bar{v}^{(a)}=\frac{k\Bar{v}}{k+a\sigma}>0$.
\item If $\rho^2<c_l$, $\EE^{\QQ(a)}\left[\exp\left(\frac{\rho^2}{2}\int_0^Tv_udu\right)\right]<\infty$ and the set 
\begin{align*}
\mathcal{E}_m:=\left\{\QQ(a)\in\mathcal{E}: |a|<\min\left\{\frac{\sqrt{2c_l}}{2},\sqrt{c_l-\rho^2}\right\}\right\}
\end{align*}
is a set of equivalent martingale measures.
\end{enumerate}
\end{thm}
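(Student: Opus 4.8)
The plan is to obtain (2)--(4) as corollaries of (1) via Girsanov's theorem, L\'{e}vy's characterization and a short drift computation, and to treat (5) separately; (5) is the delicate item. The one structural fact I would use throughout is that the model factors through the ``volatility $\sigma$-algebra'' $\mathcal{G}:=\mathcal{F}^{W}_{T}\vee\mathcal{F}^{L}_{T}$: the variance $v$ and the process $\theta^{(a)}$ are $\mathcal{G}$-adapted, while $B$ is a Brownian motion independent of $\mathcal{G}$ under $\PP$. For (1): since $\langle B,W\rangle\equiv0$ we have $\langle Y^{(a)},Z^{(a)}\rangle\equiv0$, so $X^{(a)}=Y^{(a)}Z^{(a)}$ is the stochastic exponential of $-\int_{0}^{\cdot}\theta_{u}^{(a)}\,dB_{u}-a\int_{0}^{\cdot}\sqrt{v_{u}}\,dW_{u}$, a nonnegative $\PP$-local martingale as soon as $\int_{0}^{T}(\theta_{u}^{(a)})^{2}\,du<\infty$ $\PP$-a.s., which reduces to $\int_{0}^{T}v_{u}^{-1}\,du<\infty$ $\PP$-a.s.\ and hence follows from the Feller condition $2\kappa\bar{v}\geq\sigma^{2}$ (the positive jumps $\eta\,dL$ only helping). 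A nonnegative local martingale being a supermartingale, $X^{(a)}$ is a true martingale on $[0,T]$ iff $\EE^{\PP}[X_{T}^{(a)}]=1$. Conditioning on $\mathcal{G}$, $\int_{0}^{T}\theta_{u}^{(a)}\,dB_{u}$ is centred Gaussian with finite conditional variance $\int_{0}^{T}(\theta_{u}^{(a)})^{2}\,du$, so $\EE^{\PP}[Y_{T}^{(a)}\mid\mathcal{G}]=1$; since $Z_{T}^{(a)}$ is $\mathcal{G}$-measurable, $\EE^{\PP}[X_{T}^{(a)}]=\EE^{\PP}[Z_{T}^{(a)}]$, and $Z^{(a)}$ is a true martingale by Novikov's criterion because $\EE^{\PP}[\exp(\tfrac12 a^{2}\int_{0}^{T}v_{u}\,du)]<\infty$ once $\tfrac12 a^{2}<c_{l}$, i.e.\ $|a|<\sqrt{2c_{l}}$ (Proposition \ref{novikov}).

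Given (1), $X_{T}^{(a)}>0$ $\PP$-a.s., so each $\QQ(a)$ with $|a|<\sqrt{2c_{l}}$ is equivalent to $\PP$. Girsanov's theorem, applied with the density process $X^{(a)}$ — the stochastic exponential of $M:=-\int_{0}^{\cdot}\theta_{u}^{(a)}\,dB_{u}-a\int_{0}^{\cdot}\sqrt{v_{u}}\,dW_{u}$ — gives that $B^{\QQ(a)}=B+\int_{0}^{\cdot}\theta_{u}^{(a)}\,du$ and $W^{\QQ(a)}=W+a\int_{0}^{\cdot}\sqrt{v_{u}}\,du$ are continuous $\QQ(a)$-local martingales; the equivalent measure change leaves brackets unchanged, so $\langle B^{\QQ(a)}\rangle_{t}=\langle W^{\QQ(a)}\rangle_{t}=t$ and $\langle B^{\QQ(a)},W^{\QQ(a)}\rangle\equiv0$, whence L\'{e}vy's characterization yields (3). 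Substituting $dB_{t}=dB_{t}^{\QQ(a)}-\theta_{t}^{(a)}\,dt$ and $dW_{t}=dW_{t}^{\QQ(a)}-a\sqrt{v_{t}}\,dt$ into \eqref{2}--\eqref{21} and using the identity $\sqrt{1-\rho^{2}}\,\sqrt{v_{t}}\,\theta_{t}^{(a)}=(\mu_{t}-r)-a\rho v_{t}$ collapses the drift of $S$ to $r$ and turns the drift of $v$ into $-(\kappa+a\sigma)\bigl(v_{t}-\tfrac{\kappa\bar{v}}{\kappa+a\sigma}\bigr)$, which is (4); here $\kappa^{(a)}=\kappa+a\sigma>0$ since $|a|<\sqrt{2c_{l}}\leq\kappa/\sigma$ (recall $c_{l}\leq\kappa^{2}/(2\sigma^{2})$). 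In particular $e^{-rt}S_{t}$ is a $\QQ(a)$-local martingale, which is (2).

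For (5) there are two points. First, the exponential moment. By (4) the variance follows \eqref{volqa} under $\QQ(a)$, with $\kappa$ replaced by $\kappa^{(a)}$, $\bar{v}$ by $\bar{v}^{(a)}$, and — the density $X^{(a)}$ being continuous — the compound Hawkes part and the jump law unchanged; hence $\EE^{\QQ(a)}[\exp(\tfrac{\rho^{2}}{2}\int_{0}^{T}v_{u}\,du)]<\infty$ is precisely the statement of Proposition \ref{novikov} applied to these $\QQ(a)$-dynamics, i.e.\ $\tfrac{\rho^{2}}{2}<c_{l}^{(a)}$, where $c_{l}^{(a)}$ is the constant of Proposition \ref{novikov} computed with $\kappa^{(a)}$ in place of $\kappa$ (note $2\kappa^{(a)}\bar{v}^{(a)}=2\kappa\bar{v}\geq\sigma^{2}$, so the Feller condition persists). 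Comparing $c_{l}^{(a)}$ with $c_{l}$ — controlling, under the shift $\kappa\mapsto\kappa^{(a)}$, the threshold $\tfrac{(\kappa^{(a)})^{2}}{2\sigma^{2}}$ together with the constraints $\Lambda^{(a)}(\cdot)<\epsilon_{J}$ and $M_{J}(\Lambda^{(a)}(\cdot))\leq\tfrac{\beta}{\alpha}e^{\alpha/\beta-1}$ — yields $\tfrac{\rho^{2}}{2}<c_{l}^{(a)}$ exactly when $\rho^{2}<c_{l}$ and $|a|<\min\{\sqrt{2c_{l}}/2,\sqrt{c_{l}-\rho^{2}}\}$, i.e.\ on $\mathcal{E}_{m}$. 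Second, the true martingality of $e^{-rt}S_{t}$ under $\QQ(a)$: it suffices that the nonnegative $\PP$-local martingale $\bigl(X_{t}^{(a)}e^{-rt}S_{t}\bigr)_{t\in[0,T]}$ have $\PP$-expectation $S_{0}$; conditioning the product on $\mathcal{G}$ and integrating out the $B$-Gaussian (again via $\sqrt{1-\rho^{2}}\,\sqrt{v_{t}}\,\theta_{t}^{(a)}=(\mu_{t}-r)-a\rho v_{t}$, with the $\int_{0}^{T}v_{u}\,du$ and deterministic terms cancelling) one gets
\[
\EE^{\PP}\bigl[X_{T}^{(a)}e^{-rT}S_{T}\mid\mathcal{G}\bigr]=S_{0}\exp\Bigl((\rho-a)\!\int_{0}^{T}\!\sqrt{v_{u}}\,dW_{u}-\tfrac12(\rho-a)^{2}\!\int_{0}^{T}\!v_{u}\,du\Bigr),
\]
the stochastic exponential of $(\rho-a)\int_{0}^{\cdot}\sqrt{v_{u}}\,dW_{u}$, whose $\PP$-expectation equals $S_{0}$ by Novikov's criterion since $|\rho-a|\leq|\rho|+|a|<|\rho|+\sqrt{c_{l}-\rho^{2}}\leq\sqrt{2c_{l}}$ (Cauchy--Schwarz) and Proposition \ref{novikov} applies with $c=\tfrac12(\rho-a)^{2}$. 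Thus $\EE^{\PP}[X_{T}^{(a)}e^{-rT}S_{T}]=S_{0}$, so that process is a true $\PP$-martingale, $e^{-rt}S_{t}$ a true $\QQ(a)$-martingale, and $\QQ(a)\in\mathcal{E}_{m}$ an equivalent martingale measure.

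I expect the second paragraph (Girsanov plus the drift computation) to be purely routine. The genuine obstacle is in (5): the exponential-moment estimate for the $\QQ(a)$-law of $v$ and, above all, tracking which range of $a$ it forces — i.e.\ how the constant $c_{l}$ degrades under $\kappa\mapsto\kappa^{(a)}=\kappa+a\sigma$ and why this produces exactly $\min\{\sqrt{2c_{l}}/2,\sqrt{c_{l}-\rho^{2}}\}$. A secondary but essential point, needed already in (1), is the a.s.\ finiteness of $\int_{0}^{T}v_{u}^{-1}\,du$, without which $X^{(a)}$ would not even be a well-defined local martingale; both this and the estimate in (5) rest on the Feller condition $2\kappa\bar{v}\geq\sigma^{2}$ and on Proposition \ref{novikov}.
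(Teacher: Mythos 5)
Parts (1)--(4) of your argument are sound, and so is the second half of (5). The conditioning on $\mathcal{G}=\mathcal{F}^W_T\vee\mathcal{F}^L_T$ to reduce $\EE[X_T^{(a)}]=1$ to Novikov for $Z^{(a)}$ (the same device the paper uses in Lemma \ref{AJQP}), the Girsanov--Meyer/L\'evy step, and the drift computation with $\sqrt{1-\rho^2}\sqrt{v_t}\,\theta_t^{(a)}=(\mu_t-r)-a\rho v_t$ are all correct, as is the finiteness of $\int_0^T v_u^{-1}du$ needed to make $X^{(a)}$ a local martingale. Your conditional identity $\EE\bigl[X_T^{(a)}e^{-rT}S_T\mid\mathcal{G}\bigr]=S_0\exp\bigl((\rho-a)\int_0^T\sqrt{v_u}\,dW_u-\tfrac12(\rho-a)^2\int_0^T v_u\,du\bigr)$ checks out, and since $|\rho|+\sqrt{c_l-\rho^2}\le\sqrt{2c_l}$ gives $\tfrac12(\rho-a)^2<c_l$, Proposition \ref{novikov}, Novikov, and the supermartingale-with-constant-expectation argument do yield that $e^{-rt}S_t$ is a true $\QQ(a)$-martingale for every $\QQ(a)\in\mathcal{E}_m$.

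The genuine gap is the first claim of (5), $\EE^{\QQ(a)}\bigl[\exp\bigl(\tfrac{\rho^2}{2}\int_0^Tv_u\,du\bigr)\bigr]<\infty$. You propose to apply Proposition \ref{novikov} ``under $\QQ(a)$'' with $\kappa^{(a)},\bar v^{(a)}$, which requires two things you do not establish. First, that under $\QQ(a)$ the drivers of \eqref{volqa} have the same structure as under $\PP$ (compound Hawkes with identical parameters and jump law, \emph{independent} of the Brownian motion $W^{\QQ(a)}$): continuity of $X^{(a)}$ only gives, via the vanishing brackets, that $N-\int\lambda\,dt$ and the compensated jump measure remain local martingales; the preservation of the jump-size law and especially the independence of $W^{\QQ(a)}$ and $L$ under $\QQ(a)$ need separate arguments --- this is exactly the kind of statement the paper handles separately (Lemma \ref{AJQP}, Proposition \ref{oldprop}). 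Second, and decisively, the quantitative assertion that $\tfrac{\rho^2}{2}<c_l^{(a)}$ holds ``exactly when'' $\rho^2<c_l$ and $|a|<\min\{\sqrt{2c_l}/2,\sqrt{c_l-\rho^2}\}$ is stated without proof and is not believable as an exact equivalence: $c_l^{(a)}$ depends on $\kappa^{(a)}=\kappa+a\sigma$ through $D(\cdot)$, $\Lambda(\cdot)$ and the constraint $M_J(\Lambda(\cdot))\le\tfrac{\beta}{\alpha}e^{\alpha/\beta-1}$ in a way involving $\eta$, $T$, $\epsilon_J$ and $M_J$, and for $a<0$ the constant can genuinely shrink; nothing forces the comparison to collapse to that clean threshold. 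A correct route stays under $\PP$, in the spirit of the rest of your proof: conditioning out $Y^{(a)}$ gives $\EE^{\QQ(a)}\bigl[e^{\frac{\rho^2}{2}\int_0^Tv_u\,du}\bigr]=\EE\bigl[Z_T^{(a)}e^{\frac{\rho^2}{2}\int_0^Tv_u\,du}\bigr]$, and writing
\begin{align*}
Z_T^{(a)}e^{\frac{\rho^2}{2}\int_0^Tv_u\,du}=\Bigl[\exp\Bigl(-2a\int_0^T\sqrt{v_u}\,dW_u-2a^2\int_0^Tv_u\,du\Bigr)\Bigr]^{1/2}\exp\Bigl(\tfrac{a^2+\rho^2}{2}\int_0^Tv_u\,du\Bigr),
\end{align*}
Cauchy--Schwarz together with the supermartingale property of the exponential local martingale in the first factor gives the bound $\EE\bigl[\exp\bigl((a^2+\rho^2)\int_0^Tv_u\,du\bigr)\bigr]^{1/2}$, which is finite by Proposition \ref{novikov} precisely because $a^2+\rho^2<c_l$, i.e.\ $|a|<\sqrt{c_l-\rho^2}$ (and $|a|<\sqrt{2c_l}/2$ makes that first factor a true martingale by Novikov, which is where the second constraint in $\mathcal{E}_m$ naturally enters). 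As written, however, the exponential-moment claim in (5) is not proved.
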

\begin{proof}
    See \cite[Theorem 3.6]{arxiv}, \cite[Observation 3.8]{arxiv} and \cite[Theorem 3.9]{arxiv}. Note that $\kappa^{(a)}=\kappa+a\sigma>0$ because $|a|<\sqrt{2c_l}\leq\frac{\kappa}{\sigma}$ since $c_l\leq\frac{\kappa^2}{2\sigma^2}$.
\end{proof}

In order to work with the compensators of $N$ and $L$ under the risk neutral measures we make some assumptions and define a (non-empty) subset of equivalent martingale measures. For further details see \cite{arxiv2}.

\begin{ass}\label{as3}
Define $D:=\sup_{t\in[0,T]}\left(\mu_t-r\right)^2<\infty$ and fix $\varepsilon_1,\varepsilon_2>0$. We assume that
\begin{enumerate}
\item $\rho^2<c_l$. 
    \item $\frac{1-\rho^2}{D\left[(2+\varepsilon_1)^2-(2+\varepsilon_1)\right]}\left(\frac{2\kappa\Bar{v}-\sigma^2}{2\sigma}\right)^2>1$.
    \item $2\kappa\Bar{v}>(1+\varepsilon_2)\sigma^2$.
\end{enumerate}
\end{ass}
\begin{definition}
  Let $q,s>1$, we define a subset of $\mathcal{E}_m$ by
    \begin{align*}
        \mathcal{E}_{m}(q,s):=\left\{\QQ(a)\in\mathcal{E}_m:|a|<\min\Bigg\{\frac{1}{qs}\sqrt{\frac{c_l}{2}},\sqrt{\frac{(1-\rho^2)c_l}{qs\left[2qs(1-\rho^2)+\rho^2s-1\right]}}\Bigg\}\right\}.
    \end{align*}
\end{definition}
From now on, we fix $Q_2$ such that $1<Q_2<\frac{1-\rho^2}{D\left[(2+\varepsilon_1)^2-(2+\varepsilon_1)\right]}\left(\frac{2\kappa\Bar{v}-\sigma^2}{2\sigma}\right)^2$ and $Q_1:=\frac{Q_2}{Q_2-1}>1$. The subset $\mathcal{E}_{m}(Q_1,2+\varepsilon_1)$ of equivalent martingale measures is of special interest and will be used throughout the paper thanks to the properties stated in the next proposition.

\begin{prop}\label{oldprop}
    Let $\QQ(a)\in\mathcal{E}_{m}(Q_1,2+\varepsilon_1)$. Then, 
    \begin{enumerate}
    \item $\EE[(X_t^{(a)})^{2+\varepsilon_1}]<\infty$, $\EE\left[\left(\frac{1}{v_t}\right)^{1+\varepsilon_2}\right]<\infty$ for all $t\in[0,T]$ and $\int_0^T\EE\left[\left(\frac{1}{v_t}\right)^{1+\varepsilon_2}\right]dt<\infty$.
        \item $N-\Lambda^N$ is a $(\mathcal{F},\QQ(a))$-martingale.
    \item $L-\Lambda^L$ is a $(\mathcal{F},\QQ(a))$-martingale.
    \end{enumerate}
\end{prop}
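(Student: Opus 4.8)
The plan is to establish item~1 first and then to deduce items~2 and~3 from it; indeed the bound $\EE[(X_t^{(a)})^{2+\varepsilon_1}]<\infty$ is precisely what upgrades the (easy) local-martingale property of $N-\Lambda^N$ and $L-\Lambda^L$ under $\QQ(a)$ to a true martingale property. For the negative moments of $v$ I would argue by comparison: since $\eta>0$ and the jumps $\eta J_i$ of $\eta L$ are strictly positive, \eqref{21} and the CIR equation $d\tilde v_t=-\kappa(\tilde v_t-\Bar v)\,dt+\sigma\sqrt{\tilde v_t}\,dW_t$ with $\tilde v_0=v_0$ obey the same dynamics between jumps of $L$ while $v$ can only move up at a jump, so $v_t\geq\tilde v_t$ a.s.\ and hence $\EE[(1/v_t)^{1+\varepsilon_2}]\leq\EE[(1/\tilde v_t)^{1+\varepsilon_2}]$. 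The right-hand side is finite for every $t\in(0,T]$ exactly because $2\kappa\Bar v>(1+\varepsilon_2)\sigma^2$ (third condition of Assumption~\ref{as3}): $\tilde v_t$ has a scaled non-central $\chi^2$ law with $4\kappa\Bar v/\sigma^2$ degrees of freedom, whose density is $\asymp x^{2\kappa\Bar v/\sigma^2-1}$ near the origin. As $t\mapsto\EE[(1/\tilde v_t)^{1+\varepsilon_2}]$ is continuous on $[0,T]$ (it tends to $v_0^{-(1+\varepsilon_2)}$ as $t\downarrow0$), it is bounded there, which also gives $\int_0^T\EE[(1/v_t)^{1+\varepsilon_2}]\,dt<\infty$.

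For the bound $\EE[(X_t^{(a)})^{2+\varepsilon_1}]<\infty$, put $p:=2+\varepsilon_1$ and recall $X^{(a)}=Y^{(a)}Z^{(a)}$. Since $\theta^{(a)}$ is adapted to the filtration generated by $(W,L)$ while $B$ is independent of $(W,L)$, conditioning on $\mathcal F^{(W,L)}_T$ integrates out the conditionally Gaussian term $\int_0^t\theta^{(a)}_u\,dB_u$ and yields $\EE[(X_t^{(a)})^p]=\EE\big[(Z_t^{(a)})^p\exp\big(\tfrac{p^2-p}{2}\int_0^t(\theta^{(a)}_u)^2\,du\big)\big]$; this is legitimate because $\int_0^T(\theta^{(a)}_u)^2\,du<\infty$ a.s., using $(\theta^{(a)}_u)^2\leq\frac{D}{(1-\rho^2)v_u}+\frac{a^2\rho^2}{1-\rho^2}v_u+(\text{bounded in }u)$ together with the first step and $\EE\int_0^T v_u\,du<\infty$. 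Writing $(Z_t^{(a)})^p$ as a Dol\'eans exponential times $\exp(\tfrac{a^2(p^2-p)}{2}\int_0^t v_u\,du)$, the same pointwise bound on $(\theta^{(a)}_u)^2$ leads to
\begin{align*}
\EE\big[(X_t^{(a)})^p\big]\leq C\,\EE\bigg[&\exp\Big(-pa\int_0^t\sqrt{v_u}\,dW_u-\tfrac{(pa)^2}{2}\int_0^t v_u\,du\Big)\\
&{}\times\exp\Big(c_1\int_0^t v_u\,du\Big)\,\exp\Big(c_2\int_0^t\tfrac{du}{v_u}\Big)\bigg],
\end{align*}
with $c_2=\frac{(p^2-p)D}{2(1-\rho^2)}$ and $c_1=c_1(a)=O(a^2)$. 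I would then apply H\"older with the conjugate exponents $Q_2$ and $Q_1=Q_2/(Q_2-1)$ to detach the $\exp(c_2\int 1/v)$ factor. The first factor, $\EE[\exp(Q_2c_2\int_0^T v_u^{-1}\,du)]^{1/Q_2}$, is finite because $Q_2c_2$ lies strictly below $\frac{(2\kappa\Bar v-\sigma^2)^2}{8\sigma^2}=\tfrac12\big(\tfrac{2\kappa\Bar v-\sigma^2}{2\sigma}\big)^2$, the threshold under which $\EE[\exp(\lambda\int_0^T v_u^{-1}\,du)]<\infty$ for every horizon $T$ for a CIR --- hence, by the comparison above, a CIR-with-jumps --- process, a threshold depending only on $\kappa\Bar v$ and $\sigma$; this inequality is precisely the second condition of Assumption~\ref{as3} together with the choice $1<Q_2<\frac{1-\rho^2}{D[(2+\varepsilon_1)^2-(2+\varepsilon_1)]}\big(\tfrac{2\kappa\Bar v-\sigma^2}{2\sigma}\big)^2$. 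The second factor, after raising the Dol\'eans exponential to the power $Q_1$ (which produces one further $\exp(c\int v)$ correction), equals $\EE[\exp(-Q_1pa\int_0^t\sqrt{v_u}\,dW_u-\tfrac{(Q_1pa)^2}{2}\int_0^t v_u\,du)\exp(c_3\int_0^t v_u\,du)]^{1/Q_1}$ with $c_3=O(a^2)$; provided $(Q_1pa)^2/2<c_l$ the Dol\'eans exponential is a true $\PP$-martingale (Novikov's criterion, via Proposition~\ref{novikov}), and a Girsanov change of measure rewrites the expectation as $\EE^{\tilde\PP}[\exp(c_3\int_0^t v_u\,du)]$, where under $\tilde\PP$ the variance $v$ is again a CIR-with-jumps with the same product $\kappa\Bar v$ but mean-reversion speed $\kappa+Q_1pa\sigma$; Proposition~\ref{novikov} then applies verbatim under $\tilde\PP$, with a constant $\tilde c_l$ that varies continuously with $a$ and tends to $c_l>0$ as $a\to0$, so the expectation is finite as soon as $c_3<\tilde c_l$. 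The constraints $(Q_1pa)^2/2<c_l$ and $c_3<\tilde c_l$ (with the companion ones from the first factor) are exactly encoded by the bound on $|a|$ in the definition of $\mathcal E_m(Q_1,2+\varepsilon_1)$, and item~1 follows.

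For items~2 and~3 I would first note that $X^{(a)}$ is a \emph{continuous} process --- it is the exponential of continuous processes, the jumps of $L$ entering only through the integrands $\theta^{(a)},v$ and not through the stochastic or Lebesgue integrals --- so $[X^{(a)},N]=[X^{(a)},L]=0$. Consequently, for $M$ equal to $N-\Lambda^N$ or $L-\Lambda^L$ (each a $(\mathcal F,\PP)$-martingale, since $\mathcal F^{(B,W)}$ and $\mathcal F^L$ are $\PP$-independent), It\^o's product rule shows $X^{(a)}M$ is a $(\mathcal F,\PP)$-local martingale; equivalently, by Girsanov's theorem for semimartingales, the $\PP$-compensators of $N$ and of $L$ are left unchanged by the passage to $\QQ(a)$, so $M$ is a $(\mathcal F,\QQ(a))$-local martingale. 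To promote this to a true martingale it suffices that $\EE[\sup_{t\leq T}|X_t^{(a)}M_t|]<\infty$; by Cauchy--Schwarz, Doob's $L^2$-inequality (recall $X^{(a)}$ is a $(\mathcal F,\PP)$-martingale by Theorem~\ref{risk}) and item~1, $\EE[\sup_{t\leq T}(X_t^{(a)})^2]\leq4\,\EE[(X_T^{(a)})^2]<\infty$, while $\EE[\sup_{t\leq T}M_t^2]<\infty$ because $N$ and $L$ have finite second moments on $[0,T]$ (the intensity is dominated by $\lambda_0+\alpha N_t$, $\EE[N_T^2]<\infty$, and all moments of $J_1$ are finite by Assumption~\ref{as}). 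Hence $X^{(a)}M$ is a uniformly integrable $\PP$-martingale, i.e.\ $M$ is a $(\mathcal F,\QQ(a))$-martingale, and $\QQ(a)$-integrability of $M_t$ follows from the same estimate.

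The step I expect to be the main obstacle is the bound $\EE[(X_t^{(a)})^{2+\varepsilon_1}]<\infty$ in item~1: one must thread the chain of H\"older inequalities so that the coefficient multiplying $\int_0^T v_u^{-1}\,du$ stays strictly below $\frac{(2\kappa\Bar v-\sigma^2)^2}{8\sigma^2}$ while the coefficients multiplying $\int_0^T v_u\,du$ and the powers applied to the Dol\'eans exponentials stay in the regime $c<c_l$. It is precisely these competing requirements that dictate the (otherwise opaque) constraints in Assumption~\ref{as3} and in the definitions of $Q_1$, $Q_2$ and $\mathcal E_m(Q_1,2+\varepsilon_1)$.
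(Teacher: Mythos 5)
First, a remark on the comparison itself: the paper offers no internal proof of Proposition \ref{oldprop} --- its proof is a citation to \cite{arxiv2} --- so your argument has to be judged on its own merits. Much of it is sound. Items 2 and 3 are correct and essentially the canonical argument: $X^{(a)}$ is continuous, so its covariations with $N$ and $L$ vanish, the product rule gives the $\QQ(a)$-local martingale property, and the domination $\EE[\sup_{t\le T}|X^{(a)}_tM_t|]<\infty$ (Doob's $L^2$ inequality for $X^{(a)}$ plus finite second moments of the Hawkes and compound Hawkes) legitimately upgrades it to a true martingale. In item 1, the pathwise comparison $v_t\ge\tilde v_t$ with the jump-free CIR process together with $2\kappa\Bar v>(1+\varepsilon_2)\sigma^2$ correctly yields the negative moments and their integrability in $t$; conditioning out $B$ and the H\"older split whose $Q_2$-factor uses the exponential inverse-CIR moment with threshold $(2\kappa\Bar v-\sigma^2)^2/(8\sigma^2)$ is exactly consistent with Assumption \ref{as3}(2) and the choice of $Q_2$.

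The genuine gap is the closing estimate of the $Q_1$-factor in $\EE[(X^{(a)}_t)^{p}]$, $p=2+\varepsilon_1$. You propose Novikov for the Dol\'eans exponential at level $Q_1pa$, a Girsanov change to $\tilde\PP$, and then Proposition \ref{novikov} ``verbatim under $\tilde\PP$'' with a perturbed constant $\tilde c_l$, asserting that the required inequalities are ``exactly encoded'' by the definition of $\mathcal{E}_m(Q_1,2+\varepsilon_1)$. That assertion is unverified and, as written, the argument does not close: continuity of $\tilde c_l$ in $a$ only gives the conclusion for $|a|$ sufficiently small, not on the whole admissible range, and applying Proposition \ref{novikov} under $\tilde\PP$ presupposes that the compound-Hawkes structure of $(N,L)$ (compensator, jump law, independence of the new Brownian motion) survives the measure change --- a statement of precisely the type items 2--3 address, which you do not establish under $\tilde\PP$. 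A closing step that avoids both issues and whose constants match the definition exactly is available: write $\exp\bigl(-Q_1pa\int_0^t\sqrt{v_u}\,dW_u\bigr)$ as the square root of $\exp\bigl(-2Q_1pa\int_0^t\sqrt{v_u}\,dW_u-2(Q_1pa)^2\int_0^tv_u\,du\bigr)$ times $\exp\bigl((Q_1pa)^2\int_0^tv_u\,du\bigr)$, apply Cauchy--Schwarz, bound the first factor by $1$ via the supermartingale property (no Novikov needed), and reduce to $\EE\bigl[\exp\bigl(c\int_0^Tv_u\,du\bigr)\bigr]<\infty$ with $c=\frac{Q_1pa^2\left[2Q_1p(1-\rho^2)+p\rho^2-1\right]}{1-\rho^2}$; the inequality $c<c_l$ is precisely the second bound defining $\mathcal{E}_m(Q_1,2+\varepsilon_1)$, so Proposition \ref{novikov} under $\PP$ finishes. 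Until bookkeeping of this kind is carried out, the key bound $\EE[(X^{(a)}_t)^{2+\varepsilon_1}]<\infty$ --- and with it the $L^2$ input you feed into items 2 and 3 --- is not established in your write-up.
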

\begin{proof}
    See \cite[Lemma 3.7]{arxiv2} and \cite[Propositon 3.12]{arxiv2}.
\end{proof}

\section{Joint characteristic function}\label{CF}
The objective of this section is the computation of the conditional characteristic function of $(v_T,\lambda_T)$. Furthermore, we check that its domain intersects the complex numbers with strictly positive real part, which is required to use the formula in \eqref{fourier}. The subsequent results are based on qualitative ODE theory and the exponential affine structure of the model \cite{duffie,VIXJUMPS1}. 

The following result implies that $\EE[f(J_1)]=\EE^{\QQ(a)}[f(J_1)]$ where $f$ is a deterministic function such that $\EE[|f(J_1)|]<\infty$. Therefore, whenever there is an expectation under $\QQ(a)$ of a random variable of the type $f(J_1)$, it can be equivalently computed under $\PP$.

\begin{lem} Let $\QQ(a)\in\mathcal{E}_m(Q_1,2+\varepsilon_1)$, then $P_{J_1}=\QQ(a)_{J_1}$, that is, the law of $J_1$ is the same under $\PP$ and under $\QQ(a)$. 
\end{lem}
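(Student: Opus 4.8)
The plan is to exploit the structure of the Radon–Nikodym density $X_T^{(a)} = Y_T^{(a)} Z_T^{(a)}$ together with the independence assumptions built into the model. The key observation is that $X_T^{(a)}$ is a functional of $(B,W)$ and of the variance path $v$; but $v$ in turn depends only on $W$ and on the compound Hawkes process $L = \sum_{i=1}^{N_\cdot} J_i$ through the term $\eta\, dL_t$. Since $(B,W)$, $N$ and $\{J_i\}_{i\ge 1}$ are mutually independent under $\PP$, the density $X_T^{(a)}$ is measurable with respect to $\sigma(B,W,N,\{J_i\}_{i\ge 1})$, and in fact the dependence on the jump data enters only through the aggregate process $L$. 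What I want to extract is that, conditionally on $\sigma(B,W,N)$, the change of measure does not reweight the individual jump sizes in a way that depends on $J_1$ alone.

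First I would write, for a bounded measurable $f$,
\begin{align*}
    \EE^{\QQ(a)}[f(J_1)] = \EE\big[X_T^{(a)} f(J_1)\big].
\end{align*}
Next I would condition on the $\sigma$-algebra $\mathcal{G} := \mathcal{F}_T^{(B,W)} \vee \sigma(N)$ generated by the Brownian motions and the counting process $N$ (equivalently, by the jump times). The point is that $v$, and hence $X_T^{(a)}$, is \emph{not} $\mathcal{G}$-measurable because $v$ feels the jump \emph{sizes} $J_i$ through $\eta\, dL_t$; so this naive conditioning does not immediately factorize. The correct move is therefore to condition on the larger $\sigma$-algebra $\mathcal{G}' := \mathcal{F}_T^{(B,W)} \vee \sigma(N) \vee \sigma(\{J_i\}_{i\ge 2})$, i.e.\ everything except $J_1$. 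Given $\mathcal{G}'$, the density $X_T^{(a)}$ is still a function of $J_1$ (through $v$), so I cannot pull it out either. The resolution I would pursue is instead an approximation/localization argument combined with the tower property: show that $\EE\big[X_T^{(a)} f(J_1)\big] = \EE\big[X_T^{(a)}\big]\,\EE[f(J_1)] = \EE[f(J_1)]$, using that $X_T^{(a)}$ is a $\PP$-martingale with $\EE[X_T^{(a)}]=1$ (Theorem~\ref{risk}(1)), provided one can justify that $X_T^{(a)}$ and $f(J_1)$ are independent under $\PP$. They are \emph{not} independent in general — $v$ depends on all the $J_i$ — so the genuine argument must be more delicate: one shows that the conditional law of $J_1$ given the rest of the randomness is unchanged, by freezing the trajectory of $(B,W,N)$ and all $J_i$ with $i\ge 2$, observing that along a fixed such configuration $X_T^{(a)}$ depends on $J_1$ only through a finite number of deterministic (path-dependent) coefficients, and then verifying directly that this dependence integrates out. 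Concretely: conditionally on $\mathcal{G}'$, the pair $(v, X_T^{(a)})$ depends on $J_1$ in a measurable way, and the claim reduces to showing $\EE[\,X_T^{(a)} f(J_1)\mid \mathcal{G}'\,] = \EE[f(J_1)]\,\EE[\,X_T^{(a)}\mid \mathcal{G}'\,]$ only after integrating over $\mathcal{G}'$, which follows once we know $\EE[X_T^{(a)} \mid \mathcal{G}'\vee\sigma(J_1)]$ does not depend on $J_1$ — but it does.

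Given these circularities, the argument I would actually commit to is the following clean one. Since $(B,W)$, $N$, and $\{J_i\}$ are independent, and since the stock-price drift/volatility coefficients and the variance coefficients $\kappa,\bar v,\sigma,\eta$ do not depend on the $J_i$, the only channel through which $X_T^{(a)}$ sees $J_1$ is via $v$, and $v$ enters $X_T^{(a)}$ symmetrically through $\int_0^T \theta_u^{(a)}\,dB_u$, $\int_0^T (\theta_u^{(a)})^2\,du$ and $\int_0^T \sqrt{v_u}\,dW_u$, $\int_0^T v_u\,du$. I would invoke Girsanov's theorem directly: under $\QQ(a)$, by Theorem~\ref{risk}(3)–(4) the processes $(B^{\QQ(a)},W^{\QQ(a)})$ are Brownian motions and the dynamics of $v$ in \eqref{volqa} are driven only by $W^{\QQ(a)}$ and by the \emph{same} compound Hawkes process $L$ with the \emph{same} jumps $\{J_i\}$. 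The change of measure affects only the drift of the Brownian motions (and hence of $v$), and leaves the jump mechanism — both the intensity of $N$ up to the deterministic compensator correction (Proposition~\ref{oldprop}(2)–(3)) and, crucially, the conditional distribution of each $J_i$ given the counting process — untouched, because $X_T^{(a)}$ contains no factor of the form $\prod_i g(J_i)$. Making this precise: compute $\EE[X_T^{(a)}\,\mathbbm{1}_{\{J_1\in A\}}]$ and show it equals $\PP(J_1\in A)$ by conditioning on $\mathcal{F}_T^{(B,W)}\vee\sigma(N)\vee\sigma(J_i:i\ge 1)$, applying the martingale property fibre-wise, and then using Fubini together with the fact that the only $J_1$-dependence sits inside a quantity whose conditional expectation integrates to $1$. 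The main obstacle is exactly this last measure-theoretic step — disentangling the dependence of $X_T^{(a)}$ on $J_1$ (through $v$) from $f(J_1)$ — and I expect the cleanest route is to first establish the result for a modified density in which $v$ is replaced by a version frozen after finitely many jumps, pass to the limit using the uniform integrability provided by the moment bounds in Proposition~\ref{oldprop}(1) and Proposition~\ref{novikov}, and thereby conclude $P_{J_1} = \QQ(a)_{J_1}$.
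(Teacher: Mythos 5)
Your proposal never closes the one gap that matters, and the argument you finally commit to is circular. You correctly observe that $X_T^{(a)}$ depends on $J_1$ through $v$, so no naive independence factorization is available; but your ``making this precise'' step --- condition on $\mathcal{F}_T^{(B,W)}\vee\sigma(N)\vee\sigma(J_i:i\geq 1)$ and ``apply the martingale property fibre-wise'' --- conditions on a $\sigma$-algebra with respect to which $X_T^{(a)}$ is already measurable, so the conditional expectation is $X_T^{(a)}$ itself and nothing integrates out; this is exactly the circularity you had already flagged for $\mathcal{G}'\vee\sigma(J_1)$. The heuristic that the density ``contains no factor of the form $\prod_i g(J_i)$'' is not an argument, and the proposed freeze-after-finitely-many-jumps approximation with a uniform-integrability passage to the limit is never turned into a proof (nor is it needed).

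The missing idea is to condition on the jump information \emph{without} the Brownian paths and to show that the conditional expectation of the density given the jumps equals $1$; this is what the paper's proof does, in two steps exploiting the factorization $X_T^{(a)}=Y_T^{(a)}Z_T^{(a)}$. First, since $B$ is independent of $(W,L,\{J_i\})$ and $\theta^{(a)}$ is $\{\mathcal{F}_t^W\vee\mathcal{F}_t^L\vee\mathcal{F}^{J_1}\}$-adapted, conditionally on $\mathcal{F}_T^W\vee\mathcal{F}_T^L\vee\mathcal{F}^{J_1}$ the factor $Y_T^{(a)}$ is lognormal with conditional mean $1$, while $Z_T^{(a)}$ is measurable with respect to that $\sigma$-algebra; hence $\EE[f(J_1)X_T^{(a)}]=\EE[f(J_1)Z_T^{(a)}]$. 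Second, because $W$ remains a Brownian motion in the filtration enlarged by the full jump information, $Z^{(a)}$ is a $\{\mathcal{F}_t^W\vee\mathcal{F}_T^L\vee\mathcal{F}^{J_1}\}_{t\in[0,T]}$-martingale, so $\EE[Z_T^{(a)}\mid\mathcal{F}_0^W\vee\mathcal{F}_T^L\vee\mathcal{F}^{J_1}]=1$, and since $f(J_1)$ is measurable with respect to the conditioning $\sigma$-algebra, $\EE[f(J_1)Z_T^{(a)}]=\EE[f(J_1)]$. The point is precisely that $X_T^{(a)}$ \emph{does} depend on $J_1$, yet its conditional expectation given the whole jump configuration is identically $1$; your text gestures at this (``fibre-wise martingale property'') but pairs it with a conditioning under which it says nothing, so the proof as written does not go through.
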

\begin{proof}
See Lemma \ref{AJQP} in the appendix. 
\end{proof}
\begin{definition}\label{LJdef}
    Define the constant $L_J:=\frac{1}{\eta}M_J^{-1}\left(\frac{\beta}{\alpha}\exp\left(\frac{\alpha}{\beta}-1\right)\right)$. Note that $L_J>0$ because $\frac{\beta}{\alpha}\exp\left(\frac{\alpha}{\beta}-1\right)>1$ using $\alpha<\beta$. 
\end{definition}

In the following result we apply qualitative ODE theory to study the existence of some generalized Riccati ODEs. The solutions of such equations appear in the joint characteristic function of $(v_T,\lambda_T)$. A crucial conclusion of the next lemma is that the ODEs admit as initial conditions complex numbers with strictly positive real part. This is a requirement to use the formula in \eqref{fourier}. 

\begin{lem}\label{ODEs} Let $\QQ(a)\in\mathcal{E}_m(Q_1,2+\varepsilon_1)$ and $\phi,\psi\in\mathbb{C}$ such that
\begin{align*}
    0<\Re(\phi)<\min\Bigg\{\frac{2\kappa^{(a)}}{\sigma^2\left(2e^{\kappa^{(a)}T}-1\right)},L_J\Bigg\} \hspace{1cm}\text{and}\hspace{1cm} \Re(\psi)<\frac{\beta-\alpha}{\alpha\beta}.
\end{align*}
Then,
\begin{enumerate}[(i)]
    \item The ODE 
    \begin{align*}
        \frac{d}{dt}G^{(a)}(t;\phi)-\kappa^{(a)}G^{(a)}(t;\phi)+\frac{1}{2}\sigma^2G^{(a)}(t;\phi)^2=0, \hspace{0.5cm} G^{(a)}(T;\phi)=\phi,
    \end{align*}
    has a unique solution in the interval $[0,T]$ given by
    \begin{align*}
        G^{(a)}(t;\phi)=\frac{2\kappa^{(a)}}{\sigma^2+e^{\kappa^{(a)}(T-t)}\left(\frac{2\kappa^{(a)}}{\phi}-\sigma^2\right)}.
    \end{align*}
    \item $\sup_{t\in[0,T]}\Re(G^{(a)}(t;\phi))=\Re(\phi)$.
    \item $\sup_{t\in[0,T]}\EE\left[e^{\eta \Re(G^{(a)}(t;\phi)) J_1}\right]<\frac{\beta}{\alpha}\exp\left(\frac{\alpha}{\beta}-1\right)$. In particular, $\Big|\EE\left[e^{\eta G^{(a)}(t;\phi) J_1}\right]\Big|<\infty$ for all $t\in[0,T]$. 
    \item The ODE 
    \begin{align*}
         \frac{d}{dt}H^{(a)}(t;\phi,\psi)-\beta H^{(a)}(t;\phi,\psi)+e^{\alpha H^{(a)}(t;\phi,\psi)}\EE\left[e^{\eta G^{(a)}(t;\phi) J_1}\right]-1=0, \hspace{0.5cm} H^{(a)}(T;\phi,\psi)=\psi
    \end{align*}
    has a unique solution in $[0,T]$. 
   \item The ODE
   \begin{align*}
        \frac{d}{dt}F^{(a)}(t;\phi,\psi)+\kappa^{(a)}\Bar{v}^{(a)}G^{(a)}(t;\phi)+\beta\lambda_0H^{(a)}(t;\phi,\psi)=0, \hspace{0.5cm} F^{(a)}(T;\phi,\psi)=0.
   \end{align*}
   has a unique solution in $[0,T]$. 
\end{enumerate}
\end{lem}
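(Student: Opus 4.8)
The plan is to prove the five items in the stated order, since (iv) uses (i)--(iii) and (v) uses (i) and (iv); throughout one uses $\kappa^{(a)}=\kappa+a\sigma>0$ from Theorem \ref{risk}. For (i), the equation is an autonomous Riccati ODE, so I would obtain the displayed formula either by the linearising substitution $G^{(a)}=\tfrac{2}{\sigma^2}w'/w$, which turns it into $w''=\kappa^{(a)}w'$, or simply by differentiating the claimed expression and checking $G^{(a)}(T;\phi)=\phi$. Uniqueness is Picard--Lindel\"of, as the right-hand side is polynomial in $G$, and the explicit formula then \emph{is} the solution on all of $[0,T]$ provided its denominator $\sigma^2+e^{\kappa^{(a)}(T-t)}\big(\tfrac{2\kappa^{(a)}}{\phi}-\sigma^2\big)$ does not vanish there. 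Setting $u:=e^{\kappa^{(a)}(T-t)}\in[1,e^{\kappa^{(a)}T}]$, its imaginary part is $-2\kappa^{(a)}u\,\Im(\phi)/|\phi|^2$, nonzero whenever $\Im(\phi)\neq0$; and when $\Im(\phi)=0$ vanishing would force $\Re(\phi)=\tfrac{2\kappa^{(a)}u}{\sigma^2(u-1)}\geq\tfrac{2\kappa^{(a)}e^{\kappa^{(a)}T}}{\sigma^2(e^{\kappa^{(a)}T}-1)}$, which is excluded by the (strictly smaller) bound assumed on $\Re(\phi)$.

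For (ii) and (iii) write $c:=\Re(\phi)$, so $0<c<2\kappa^{(a)}/\sigma^2$. A short rearrangement of the closed form gives $\phi-G^{(a)}(t;\phi)=(u-1)\,\phi w\,\big[2\kappa^{(a)}+(u-1)w\big]^{-1}$ with $w:=2\kappa^{(a)}-\sigma^2\phi$, so that $\Re\big(\phi-G^{(a)}(t;\phi)\big)$ equals $(u-1)$ times $\big|2\kappa^{(a)}+(u-1)w\big|^{-2}\big[2\kappa^{(a)}\Re(\phi w)+(u-1)|w|^2c\big]$; since $u\geq1$ and $\Re(\phi w)=2\kappa^{(a)}c-\sigma^2\big(c^2-\Im(\phi)^2\big)\geq c(2\kappa^{(a)}-\sigma^2c)>0$, this is $\geq0$ with equality only at $t=T$, which is (ii). For (iii), $e^{\eta\Re(G^{(a)}(t;\phi))J_1}\le e^{\eta cJ_1}$ pathwise by (ii) (using $\eta>0$, $J_1>0$), so $\EE[e^{\eta\Re(G^{(a)}(t;\phi))J_1}]\leq M_J(\eta c)$; the hypothesis $c<L_J$ reads $\eta c<M_J^{-1}\big(\tfrac{\beta}{\alpha}e^{\alpha/\beta-1}\big)$, and since $M_J$ is strictly increasing on $(-\infty,\epsilon_J)$ with range $(0,\infty)$, $M_J(\eta c)<\tfrac{\beta}{\alpha}e^{\alpha/\beta-1}$; taking $\sup_t$ gives (iii), and $\big|\EE[e^{\eta G^{(a)}(t;\phi)J_1}]\big|\leq\EE[e^{\eta\Re(G^{(a)}(t;\phi))J_1}]<\infty$.

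Part (iv) is the heart of the matter. I would set $m(t):=\EE[e^{\eta G^{(a)}(t;\phi)J_1}]$, continuous on $[0,T]$ by (i) and dominated convergence (dominating function $e^{\eta cJ_1}$), with $\bar m:=\sup_{[0,T]}|m(t)|<\tfrac{\beta}{\alpha}e^{\alpha/\beta-1}$ strictly by (iii). The right-hand side $f(t,H)=\beta H-e^{\alpha H}m(t)+1$ is continuous in $t$ and entire in $H$, so Picard--Lindel\"of gives a unique maximal solution that extends to all of $[0,T]$ unless $|H|$ blows up at the left endpoint; the task is to rule that out via an a priori bound on $h:=\Re(H)$. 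From the equation $h'\geq\beta h-\bar m e^{\alpha h}+1=:\Phi(h)$, where $\Phi$ is concave with maximum $\Phi(y^\star)=\tfrac{\beta}{\alpha}\ln\tfrac{\beta}{\alpha\bar m}-\tfrac{\beta}{\alpha}+1$ at $y^\star=\tfrac{1}{\alpha}\ln\tfrac{\beta}{\alpha\bar m}$; the strict inequality $\bar m<\tfrac{\beta}{\alpha}e^{\alpha/\beta-1}$ is precisely $\Phi(y^\star)>0$, so $\Phi$ has two real zeros $y_-<y_+$ with $\Phi>0$ on $(y_-,y_+)$, and one checks that $y_+$ is decreasing in $\bar m$ and tends to $\tfrac{1}{\alpha}\big(1-\tfrac{\alpha}{\beta}\big)=\tfrac{\beta-\alpha}{\alpha\beta}$ as $\bar m\uparrow\tfrac{\beta}{\alpha}e^{\alpha/\beta-1}$, hence $y_+>\tfrac{\beta-\alpha}{\alpha\beta}$ for every admissible $\bar m$. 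Since $h(T)=\Re(\psi)<\tfrac{\beta-\alpha}{\alpha\beta}<y_+$, the time-reversed inequality $g'\le-\Phi(g)$ for $g(\tau):=h(T-\tau)$ together with the standard comparison theorem against $z'=-\Phi(z)$ (a convex equation whose solutions started below $y_+$ stay below $y_+$) yields $h(t)<y_+$ throughout the interval of existence; then $|e^{\alpha H(t)}|\leq e^{\alpha y_+}$ gives $|H'(t)|\leq\beta|H(t)|+\bar m e^{\alpha y_+}+1$, and Gr\"onwall bounds $|H|$ on $[0,T]$, so the solution exists and is unique on $[0,T]$.

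Item (v) is then immediate: $s\mapsto\kappa^{(a)}\Bar{v}^{(a)}G^{(a)}(s;\phi)+\beta\lambda_0 H^{(a)}(s;\phi,\psi)$ is continuous on $[0,T]$ by (i) and (iv), so $F^{(a)}(t;\phi,\psi):=\int_t^T\big[\kappa^{(a)}\Bar{v}^{(a)}G^{(a)}(s;\phi)+\beta\lambda_0 H^{(a)}(s;\phi,\psi)\big]\,ds$ is the unique solution with terminal value $0$, by the fundamental theorem of calculus. I expect the only genuine obstacle to be the non-blow-up argument in (iv): it hinges on the qualitative picture of the concave map $\Phi$ and on the observation that $\tfrac{\beta-\alpha}{\alpha\beta}$ is exactly the confluence point of its two equilibria in the borderline case $\bar m=\tfrac{\beta}{\alpha}e^{\alpha/\beta-1}$, which is what makes the assumed bound on $\Re(\psi)$ the right one for the comparison to close. (One also uses, implicitly, that $J_1$ has the same law under $\PP$ and $\QQ(a)$, so that $M_J$ and the expectations in the lemma may be computed under $\PP$.)
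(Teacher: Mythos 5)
Your proof is correct, and on the two substantive items it takes a genuinely different route from the paper, in both cases a cleaner one. For (ii), the paper computes $\Re(G^{(a)}(t;\phi))$ explicitly, proves positivity of its denominator via a quadratic-function argument, and then reduces $\Re(G^{(a)}(t;\phi))\leq\Re(\phi)$ to a polynomial inequality in $\Re(\phi)$ and $\Im(\phi)$ (the coefficients $L_1,\dots,L_5$), estimated using the full strength of $\Re(\phi)<\frac{2\kappa^{(a)}}{\sigma^2(2e^{\kappa^{(a)}T}-1)}$; your identity $\phi-G^{(a)}(t;\phi)=(u-1)\phi w\bigl[2\kappa^{(a)}+(u-1)w\bigr]^{-1}$, with $u=e^{\kappa^{(a)}(T-t)}$ and $w=2\kappa^{(a)}-\sigma^2\phi$, replaces that page of computation with a few lines and in fact only needs $0<\Re(\phi)<2\kappa^{(a)}/\sigma^2$. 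For (iv), the key mechanism is the same as the paper's: comparison of the time-reversed real part against the autonomous equation $z'=\bar m e^{\alpha z}-\beta z-1$, whose two equilibria exist precisely because $\bar m<\frac{\beta}{\alpha}e^{\alpha/\beta-1}$ (your condition $\Phi(y^\star)>0$ is the paper's $f^{(a)}_M(x_{\min})<0$), with the relevant equilibrium above $\frac{\beta-\alpha}{\alpha\beta}$ so that the hypothesis on $\Re(\psi)$ closes the comparison. The organization differs: the paper runs four comparison ODEs (upper and lower for the real part, upper and lower for the imaginary part, the latter with a constant $C^{(a)}$ involving $\sup_t\Re(h^{(a)}(t))$ whose finiteness is only justified afterwards), whereas you derive a single upper bound on $\Re(H^{(a)})$ and then control $|H^{(a)}|$ — hence simultaneously the lower real bound and the imaginary part — by one Gr\"onwall estimate, which is tidier. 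Two small remarks: the inequality $y_+>\frac{\beta-\alpha}{\alpha\beta}$ follows at once from $y_+>y^\star=\frac{1}{\alpha}\log\frac{\beta}{\alpha\bar m}>\frac{\beta-\alpha}{\alpha\beta}$ (the paper's route), so your monotonicity-and-limit argument for $y_+$ as a function of $\bar m$, which leaves a continuity step implicit, can be dropped; and you should either dispose of the degenerate case $\bar m=0$ (where the bounding equation is affine) or bound with $U:=\sup_{t\in[0,T]}\EE\bigl[e^{\eta\Re(G^{(a)}(t;\phi))J_1}\bigr]>0$ as the paper does. Parts (i), (iii) and (v) coincide with the paper's argument.
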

\begin{proof}
    See Lemma \ref{AODEs} in the appendix. 
\end{proof}

To conclude this section, we give an (almost) closed expression of the joint characteristic function of $(v_T,\lambda_T)$ in terms of the ODEs studied in the previous result. In addition, a subset of the domain of the characteristic function is given which intersects the complex numbers with strictly positive real part. 
\begin{prop}\label{cf}
  Let $\QQ(a)\in\mathcal{E}_m(Q_1,2+\varepsilon_1)$ and $\phi,\psi\in\mathbb{C}$ such that
\begin{align*}
    0<\Re(\phi)<\min\Bigg\{\frac{2\kappa^{(a)}}{\sigma^2\left(2e^{\kappa^{(a)}T}-1\right)},L_J\Bigg\} \hspace{1cm}\text{and}\hspace{1cm} \Re(\psi)<\frac{\beta-\alpha}{\alpha\beta}.
\end{align*}
Then,
\begin{enumerate}
    \item $\Big|\EE^{\QQ(a)}\left[e^{\phi v_T+\psi \lambda_T}\right]\Big|<\infty$.
    \item The joint conditional characteristic function of $(v_T,\lambda_T)$ is given by
   \begin{align*}
      \EE^{\QQ(a)}\left[e^{\phi v_T+\psi \lambda_T}|\mathcal{F}_t\right]=e^{F^{(a)}(t;\phi,\psi)+G^{(a)}(t;\phi)v_t+H^{(a)}(t;\phi,\psi)\lambda_t}
   \end{align*} 
where $t\in[0,T]$, and $F^{(a)},G^{(a)}$ and $H^{(a)}$ are solutions to the following ODEs 
\begin{align}\label{ODEs50}
    \frac{d}{dt}G^{(a)}(t;\phi)-\kappa^{(a)}G^{(a)}(t;\phi)+\frac{1}{2}\sigma^2G^{(a)}(t;\phi)^2=0,& \hspace{0.5cm}G^{(a)}(T;\phi)=\phi, \notag\\
    \frac{d}{dt}H^{(a)}(t;\phi,\psi)-\beta H^{(a)}(t;\phi,\psi)+e^{\alpha H^{(a)}(t;\phi,\psi)}\EE\left[e^{\eta G^{(a)}(t;\phi) J_1}\right]-1=0,& \hspace{0.5cm}H^{(a)}(T;\phi,\psi)=\psi, \notag\\
    \frac{d}{dt}F^{(a)}(t;\phi,\psi)+\kappa^{(a)}\Bar{v}^{(a)}G^{(a)}(t;\phi)+\beta\lambda_0H^{(a)}(t;\phi,\psi)=0,& \hspace{0.5cm}F^{(a)}(T;\phi,\psi)=0.
\end{align}
\end{enumerate}
\end{prop}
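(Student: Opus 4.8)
The plan is to prove the statement by the classical affine (Feynman--Kac) verification argument. Introduce the candidate process
\begin{align*}
M_t:=\exp\big(F^{(a)}(t;\phi,\psi)+G^{(a)}(t;\phi)v_t+H^{(a)}(t;\phi,\psi)\lambda_t\big),\qquad t\in[0,T],
\end{align*}
where $F^{(a)},G^{(a)},H^{(a)}$ are the solutions of the ODEs in \eqref{ODEs50} furnished, under the present constraints on $\phi$ and $\psi$, by Lemma \ref{ODEs}: existence on all of $[0,T]$, $G^{(a)}$ explicit, terminal values $G^{(a)}(T;\phi)=\phi$, $H^{(a)}(T;\phi,\psi)=\psi$, $F^{(a)}(T;\phi,\psi)=0$, and $\EE[e^{\eta G^{(a)}(t;\phi)J_1}]$ finite for every $t$ by item (iii). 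Since the $F^{(a)},G^{(a)},H^{(a)}$ are continuous on $[0,T]$ and $v_t,\lambda_t$ are finite a.s., $M$ is a well-defined process with $M_0$ a finite deterministic number and, by the terminal values, $M_T=e^{\phi v_T+\psi\lambda_T}$. Hence, once $M$ is shown to be a true $(\mathcal F,\QQ(a))$-martingale, the identity $M_t=\EE^{\QQ(a)}[M_T\,|\,\mathcal F_t]$ is exactly item (2), and its evaluation at $t=0$ gives that $\EE^{\QQ(a)}[e^{\phi v_T+\psi\lambda_T}]$ equals $M_0$, which is item (1); in fact item (1) also drops out, in the stronger form $\EE^{\QQ(a)}[e^{\Re(\phi)v_T+\Re(\psi)\lambda_T}]<\infty$, from the integrability estimate needed in the martingale step.

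First I would carry out the It\^o computation showing that $M$ is a local martingale. Apply It\^o's formula for semimartingales with jumps to $M_t=u(t,v_t,\lambda_t)$ with $u(t,x,y)=e^{F^{(a)}(t)+G^{(a)}(t)x+H^{(a)}(t)y}$, using the $\QQ(a)$-dynamics \eqref{volqa} of $v$ (continuous martingale part $\sigma\sqrt{v_t}\,dW_t^{\QQ(a)}$ with $d\langle v^{c}\rangle_t=\sigma^2 v_t\,dt$, continuous drift $-\kappa^{(a)}(v_t-\Bar{v}^{(a)})\,dt$, jump contribution $\eta\,dL_t$) and the dynamics \eqref{dynlambda} of $\lambda$ (drift $-\beta(\lambda_t-\lambda_0)\,dt$, jump $\alpha\,dN_t$), keeping in mind that $v$ and $\lambda$ jump simultaneously at the jump times of $N$, with respective jump sizes $\eta J_i$ and $\alpha$. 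This yields a continuous local-martingale term (from the Brownian part), a compensated-jump local-martingale term, and a $dt$-part; to compensate the jump measure one needs the $\QQ(a)$-intensity of $N$ and the $\QQ(a)$-law of the marks, which by Proposition \ref{oldprop} and the earlier lemma ($P_{J_1}=\QQ(a)_{J_1}$) are $\lambda_t\,dt$ and $P_{J_1}$, i.e.\ the same as under $\PP$. Collecting the $dt$-terms, the coefficient of $v_t$ is $\frac{d}{dt}G^{(a)}-\kappa^{(a)}G^{(a)}+\frac12\sigma^2(G^{(a)})^2$, that of $\lambda_t$ is $\frac{d}{dt}H^{(a)}-\beta H^{(a)}+e^{\alpha H^{(a)}}\EE[e^{\eta G^{(a)}J_1}]-1$, and the constant one is $\frac{d}{dt}F^{(a)}+\kappa^{(a)}\Bar{v}^{(a)}G^{(a)}+\beta\lambda_0H^{(a)}$; by \eqref{ODEs50} all three vanish identically, so $M$ is a (complex-valued) local martingale on $[0,T]$.

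The hard part will be upgrading $M$ from a local to a true martingale, and this is where the strict inequalities on $\Re(\phi)$ and $\Re(\psi)$ are used. Since $v_t\ge0$ (Proposition \ref{p2}) and $\lambda_t\ge\lambda_0>0$, and since $t\mapsto\Re(F^{(a)}(t)),\Re(G^{(a)}(t)),\Re(H^{(a)}(t))$ are continuous on $[0,T]$ with $\sup_{t}\Re(G^{(a)}(t;\phi))=\Re(\phi)$ (Lemma \ref{ODEs}(ii)) and $\sup_{t}\Re(H^{(a)}(t;\phi,\psi))=:h^{*}<\infty$ (finite by continuity, its size controlled via $\Re(\psi)<\frac{\beta-\alpha}{\alpha\beta}$ as in the proof of Lemma \ref{ODEs}), one gets $|M_t|\le C\,e^{\Re(\phi)v_t+h^{*}\lambda_t}$. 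It then suffices to establish a uniform exponential-moment bound such as $\EE^{\QQ(a)}[\sup_{t\le T}e^{\Re(\phi)v_t+h^{*}\lambda_t}]<\infty$, which by the Cauchy--Schwarz inequality splits into a bound for the CIR-with-self-exciting-jumps variance and a Hawkes-type bound for the exponential intensity $\lambda$; the strict inequalities $\Re(\phi)<\frac{2\kappa^{(a)}}{\sigma^2(2e^{\kappa^{(a)}T}-1)}\wedge L_J$ and $\Re(\psi)<\frac{\beta-\alpha}{\alpha\beta}$ are exactly what leaves the margin needed for these estimates, the relevant finiteness facts being those of Proposition \ref{novikov}, Definition \ref{LJdef} and Lemma \ref{ODEs}(iii) (the bound $\sup_{t\in[0,T]}\EE[e^{\eta\Re(G^{(a)}(t;\phi))J_1}]<\frac{\beta}{\alpha}e^{\alpha/\beta-1}$).

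Finally I would conclude by localization. With $M^{*}:=\sup_{t\le T}|M_t|$ integrable, pick a localizing sequence $\tau_n\uparrow\infty$ for the local martingale $M$, so that $M_{t\wedge\tau_n}=\EE^{\QQ(a)}[M_{T\wedge\tau_n}\,|\,\mathcal F_t]$; letting $n\to\infty$, dominated convergence (dominating function $M^{*}$, a.s.\ convergence by right-continuity of the paths) gives $M_t=\EE^{\QQ(a)}[M_T\,|\,\mathcal F_t]=\EE^{\QQ(a)}[e^{\phi v_T+\psi\lambda_T}\,|\,\mathcal F_t]$, which is item (2), and the case $t=0$ gives item (1). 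I expect the only genuinely delicate points to be the exponential-moment estimates for $v_T$ and $\lambda_T$ and the a priori bound on $\sup_{t}\Re(H^{(a)})$; the It\^o and ODE bookkeeping is routine.
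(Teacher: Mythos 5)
There is a genuine gap at the step you yourself flag as ``the hard part'': upgrading $M$ from a local martingale to a true martingale. Your plan requires $\EE^{\QQ(a)}\bigl[\sup_{t\leq T}|M_t|\bigr]<\infty$, which you propose to get from $\EE^{\QQ(a)}\bigl[\sup_{t\leq T}e^{\Re(\phi)v_t+h^{*}\lambda_t}\bigr]<\infty$ via Cauchy--Schwarz and the results of Proposition \ref{novikov}, Definition \ref{LJdef} and Lemma \ref{ODEs}(iii). None of these results delivers such a bound: Proposition \ref{novikov} controls $\EE\bigl[\exp\bigl(c\int_0^T v_u\,du\bigr)\bigr]$ under $\PP$, not pointwise (let alone running-supremum) exponential moments of $v$ under $\QQ(a)$, and no exponential moment of $\sup_{t\leq T}\lambda_t$ (equivalently of $N_T$) is quoted anywhere in the paper. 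Worse, the Cauchy--Schwarz split doubles the exponents: since $\Re(\phi)$ may be taken arbitrarily close to the threshold $\min\{2\kappa^{(a)}/(\sigma^2(2e^{\kappa^{(a)}T}-1)),L_J\}$, the quantity $2\Re(\phi)$ can exceed any level at which exponential moments of $v_T$ are finite, so the strictness of the inequalities does not leave the factor-two margin your argument needs. Because item (1) in your write-up is also made to ``drop out'' of this same unproven estimate, neither item is actually established by the proposal as it stands.

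The paper circumvents exactly this difficulty. For item (1) it treats the real case $\Im(\phi)=\Im(\psi)=0$ first: there $M$ is a nonnegative local martingale, hence a supermartingale, so $\EE^{\QQ(a)}[e^{\phi v_T+\psi\lambda_T}]=\EE^{\QQ(a)}[M_T]\leq M_0<\infty$, and the complex case follows by taking absolute values inside the expectation. For item (2) it never shows that the candidate $M$ is a true martingale; instead it uses the Markov property of $(v,\lambda)$ to write $\EE^{\QQ(a)}[e^{\phi v_T+\psi\lambda_T}\,|\,\mathcal F_t]=f^{(a)}(t,v_t,\lambda_t)$, notes that this conditional expectation is automatically a (local) martingale, derives the associated PIDE by the same It\^o computation you performed, and identifies $f^{(a)}$ with the exponential-affine form via the affine structure, which produces the ODEs \eqref{ODEs50}. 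Your It\^o/ODE bookkeeping and the use of $P_{J_1}=\QQ(a)_{J_1}$ and of the compensators from Proposition \ref{oldprop} match the paper; to repair your argument you should either adopt the paper's supermartingale-plus-Markov route, or genuinely prove a uniform integrability (or dominated) bound for $\{M_{t\wedge\tau_n}\}$, which is a substantive estimate not available from the results you cite.
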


\begin{proof}
    (i) First, we consider the case $\Im(\phi)=\Im(\psi)=0$. For $t\in[0,T]$ define the process $M^{(a)}(t):=g^{(a)}(t,v_t,\lambda_t)$ where $g^{(a)}(t,x,y):=e^{F^{(a)}(t;\phi,\psi)+G^{(a)}(t;\phi)x+H^{(a)}(t;\phi,\psi)y}$ and $F^{(a)},G^{(a)}$ and $H^{(a)}$ are the solutions of the ODEs given in Lemma \ref{ODEs}, namely, 
\begin{align}\label{ODEs1}
    \frac{d}{dt}G^{(a)}(t;\phi)-\kappa^{(a)}G^{(a)}(t;\phi)+\frac{1}{2}\sigma^2G^{(a)}(t;\phi)^2=0,& \hspace{0.5cm}G^{(a)}(T;\phi)=\phi, \notag\\
    \frac{d}{dt}H^{(a)}(t;\phi,\psi)-\beta H^{(a)}(t;\phi,\psi)+e^{\alpha H^{(a)}(t;\phi,\psi)}\EE\left[e^{\eta G^{(a)}(t;\phi) J_1}\right]-1=0,& \hspace{0.5cm}H^{(a)}(T;\phi,\psi)=\psi, \notag\\
    \frac{d}{dt}F^{(a)}(t;\phi,\psi)+\kappa^{(a)}\Bar{v}^{(a)}G^{(a)}(t;\phi)+\beta\lambda_0H^{(a)}(t;\phi,\psi)=0,& \hspace{0.5cm}F^{(a)}(T;\phi,\psi)=0.
\end{align}
which are well defined for all $t\in[0,T]$. To lighten the notation we omit the dependence of $g^{(a)}$ on $\phi$ and $\psi$. Note that $F^{(a)},G^{(a)}, H^{(a)}$ take values in $\RR$ because we have assumed that $\Im(\phi)=\Im(\psi)=0$. By convenience we define $Y_t=(t,v_t,\lambda_t)$ and we apply Itô formula to the process $M^{(a)}$ 
\begin{align}\label{expr1}
    M^{(a)}(t) = \ & M^{(a)}(0) +\int_0^t\partial_tg^{(a)}(Y_s)ds+\int_0^t\partial_xg^{(a)}(Y_{s-})dv_s+\int_0^t\partial_yg^{(a)}(Y_{s-})d\lambda_s \notag\\ 
    &+\frac{1}{2}\int_0^t\partial_{xx}^2g^{(a)}(Y_{s-})d[v]^{\text{c}}_s+\frac{1}{2}\int_0^t\partial_{yy}^2g^{(a)}(Y_{s-})d[\lambda]_s^{\text{c}}+\int_0^t\partial_{xy}^2g^{(a)}(Y_{s-})d[v,\lambda]_s^{\text{c}} \notag\\ 
    &+\sum_{0<s\leq t}\left[g^{(a)}(Y_s)-g^{(a)}(Y_{s-})-\partial_xg^{(a)}(Y_{s-})\Delta v_s-\partial_yg^{(a)}(Y_{s-})\Delta\lambda_s\right] \notag\\ 
    = \ & M^{(a)}(0) + \int_0^t \Big[\partial_tg^{(a)}(Y_s)-\kappa^{(a)}\left(v_s-\Bar{v}^{(a)}\right)\partial_xg^{(a)}(Y_s)-\beta\left(\lambda_s-\lambda_0\right)\partial_yg^{(a)}(Y_s) \notag\\
    &+\frac{\sigma^2}{2}v_s\partial_{xx}^2g^{(a)}(Y_s)\Big]ds+\sigma\int_0^t\sqrt{v_s}\partial_xg^{(a)}(Y_{s-})dW_s^{\QQ(a)} \notag\\
    &+\eta\int_0^t\partial_xg^{(a)}(Y_{s-})dL_s+\alpha\int_0^t\partial_yg^{(a)}(Y_{s-})dN_s \notag\\
    &+\sum_{0<s\leq t}\left[g^{(a)}(Y_s)-g^{(a)}(Y_{s-})-\partial_xg^{(a)}(Y_{s-})\Delta v_s-\partial_yg^{(a)}(Y_{s-})\Delta\lambda_s\right] \notag\\
    = \ & M^{(a)}(0) + \int_0^t \Big[\partial_tg^{(a)}(Y_s)-\kappa^{(a)}\left(v_s-\Bar{v}^{(a)}\right)\partial_xg^{(a)}(Y_s)-\beta\left(\lambda_s-\lambda_0\right)\partial_yg^{(a)}(Y_s) \notag\\
    &+\frac{\sigma^2}{2}v_s\partial_{xx}^2g^{(a)}(Y_s)\Big]ds 
    +\sigma\int_0^t\sqrt{v_s}\partial_xg^{(a)}(Y_{s-})dW_s^{\QQ(a)}+\sum_{0<s\leq t}\left[g^{(a)}(Y_s)-g^{(a)}(Y_{s-})\right].
\end{align}
We have used the dynamics of $v$ given in \eqref{volqa}, the dynamics of $\lambda$ given in \eqref{dynlambda} and that 
\begin{align*}
    [v]_t & =\sigma^2\int_0^tv_sds+\eta^2[L]_t \implies [v]_t^{\text{c}}=\sigma^2\int_0^tv_sds, \\
    [\lambda]_t & = \alpha^2 [N]_t = \alpha^2N_t \implies [\lambda]_t^{\text{c}}=0, \\
    [v,\lambda]_t & = \alpha\eta [L,N]_t = \alpha\eta L_t \implies [v,\lambda]_t^{\text{c}}=0, \\
    \eta\int_0^t\partial_xg^{(a)}(Y_{s-})dL_s &= \eta \sum_{0<s\leq t}\partial_xg^{(a)}(Y_{s-})\Delta L_s =  \sum_{0<s\leq t}\partial_xg^{(a)}(Y_{s-})\Delta v_s, \\
    \alpha\int_0^t\partial_yg^{(a)}(Y_{s-})dN_s &= \alpha\sum_{0<s\leq t}\partial_yg^{(a)}(Y_{s-})\Delta N_s = \sum_{0<s\leq t}\partial_yg^{(a)}(Y_{s-})\Delta \lambda_s.
\end{align*}
Next, we can write
\begin{align*}
    \sum_{0<s\leq t}\left[g^{(a)}(Y_s)-g^{(a)}(Y_{s-})\right] & =\sum_{0<s\leq t}\left[g^{(a)}(s,v_{s-}+\eta\Delta L_s, \lambda_{s-}+\alpha\Delta N_s)-g^{(a)}(s,v_{s-},\lambda_{s-})\right] \\ 
    & = \sum_{0<s\leq t}h^{(a)}(s,\Delta L_s, \Delta N_s),
\end{align*}
where 
\begin{align*}
    h^{(a)}(s,u_1,u_2):=g^{(a)}(s,v_{s-}+\eta u_1, \lambda_{s-}+\alpha u_2)-g^{(a)}(s,v_{s-},\lambda_{s-}).
\end{align*}
We now define $R_s=(L_s,N_s)$ and for $t\in[0,T]$, $B\in\mathcal{B}\left(\RR^2\setminus\{0,0\}\right)$, 
\begin{align*}
    N^{R}(t,A)=\#\{0<s\leq t : \Delta R_s\in A\}.
\end{align*}
We add and subtract the compensator of the counting measure $N^R$ to split the expression in \eqref{expr1} into a $(\mathcal{F},\QQ(a))$-local martingale plus a predictable process of finite variation. By Proposition \ref{oldprop}, the compensators of $N$ and $L$ under $\QQ(a)\in\mathcal{E}_m(Q_1,2+\varepsilon_1)$ are $\Lambda_t^N=\int_0^t\lambda_udu$ and $\Lambda_t^L=\EE[J_1]\int_0^t\lambda_udu$ respectively. Thus, 
\begin{align*}
    \sum_{0<s\leq t}\left[g^{(a)}(Y_s)-g^{(a)}(Y_{s-})\right]= \ & \int_0^t\int_{(0,\infty)^2} h^{(a)}(s,u_1,u_2) N^R(ds,du) \\
     = & \int_0^t\int_{(0,\infty)^2} h^{(a)}(s,u_1,u_2) \left(N^R(ds,du)-\lambda_sP_{J_1}(du_1)\delta_1(du_2)ds\right) \\
     & + \int_0^t\lambda_s\int_{(0,\infty)}h^{(a)}(s,u_1,1)P_{J_1}(du_1)ds.
\end{align*}
Replacing everything in \eqref{expr1} we get
\begin{align}\label{expr2}
    M^{(a)}(t) = \ & M^{(a)}(0) + \int_0^t \Big[\partial_tg^{(a)}(Y_s)-\kappa^{(a)}\left(v_s-\Bar{v}^{(a)}\right)\partial_xg^{(a)}(Y_s)-\beta\left(\lambda_s-\lambda_0\right)\partial_yg^{(a)}(Y_s) \notag\\
    &+\frac{\sigma^2}{2}v_s\partial_{xx}^2g^{(a)}(Y_s)+\lambda_s\int_{(0,\infty)}h^{(a)}(s,u_1,1)P_{J_1}(du_1)\Big]ds+\sigma\int_0^t\sqrt{v_s}\partial_xg^{(a)}(Y_{s-})dW_s^{\QQ(a)} \notag\\ 
    &+\int_0^t\int_{(0,\infty)^2} h^{(a)}(s,u_1,u_2) \left(N^R(ds,du)-\lambda_sP_{J_1}(du_1)\delta_1(du_2)ds\right), 
\end{align}
where the last two terms are $(\mathcal{F},\QQ(a))$-local martingales. Now, using the definition of $g^{(a)}$ and $h^{(a)}$ we get
\begin{align*}
    \partial_tg^{(a)}(Y_s) & =g^{(a)}(Y_s)\left(\frac{d}{dt}F^{(a)}(s;\phi,\psi)+\frac{d}{dt}G^{(a)}(s;\phi)v_s+\frac{d}{dt}H^{(a)}(s;\phi,\psi)\lambda_s\right), \\ 
    \partial_xg^{(a)}(Y_s) & = g^{(a)}(Y_s)G(s), \\
    \partial_{xx}^2g^{(a)}(Y_s) & = g^{(a)}(Y_s)G(s)^2, \\
    \partial_yg^{(a)}(Y_s) & =g^{(a)}(Y_s)H(s), \\
    \int_{(0,\infty)}h^{(a)}(s,u_1,1)P_{J_1}(du_1) & = g^{(a)}(Y_{s-})\int_{(0,\infty)}\left[e^{\eta u_1 G(s)+\alpha H(s)}-1\right]P_{J_1}(du_1) \\
    & = g^{(a)}(Y_{s-})\left(e^{\alpha H(s)}\EE\left[e^{\eta G(s) J_1}\right]-1\right).
\end{align*}
Using that $F^{(a)},G^{(a)}$ and $H^{(a)}$ satisfy the ODEs given in \eqref{ODEs1} we see that the drift in \eqref{expr2} is equal to $0$ because
\begin{gather*}
    \frac{d}{dt}F^{(a)}(t;\phi,\psi)+\frac{d}{dt}G^{(a)}(t;\phi)v_t+\frac{d}{dt}H^{(a)}(t;\phi,\psi)\lambda_t \\-\kappa^{(a)}\left(v_t-\Bar{v}^{(a)}\right)G^{(a)}(t)-\beta\left(\lambda_t-\lambda_0\right)H^{(a)}(t)
    +\frac{\sigma^2}{2}v_tG(t)^2+\lambda_t\left(e^{\alpha H(t)}\EE\left[e^{\eta G(t) J_1}\right]-1\right) =0, 
\end{gather*}
We conclude that $M^{(a)}$ is a $(\mathcal{F},\QQ(a))$-local martingale. Moreover, using the final conditions of the ODEs in \eqref{ODEs1}, we see that $M^{(a)}(T)=e^{\phi v_T + \psi\lambda_T}$. Since $M^{(a)}$ is bounded from below it is a supermartingale and the following holds
\begin{align*}
    \EE^{\QQ(a)}\left[e^{\phi v_T+\psi \lambda_T}\right]=\EE^{\QQ(a)}\left[M^{(a)}(T)\right]\leq M^{(a)}(0)=\exp\left(F^{(a)}(0)+G^{(a)}(0)v_0+H^{(a)}(0)\lambda_0\right)<\infty. 
\end{align*}
For the general case of arbitrary $\Im(\phi)$ and $\Im(\psi)$ note that 
\begin{align*}
    \Big|\EE^{\QQ(a)}\left[e^{\phi v_T+\psi \lambda_T}\right]\Big|\leq \EE^{\QQ(a)}\left[\Big|e^{\phi v_T+\psi \lambda_T}\Big|\right]= \EE^{\QQ(a)}\left[e^{\Re(\phi) v_T+\Re(\psi) \lambda_T}\right],
\end{align*}
and we can apply the same argument as before to conclude that the previous expectation is finite. 

(ii) Since $(N,\lambda)$ is a Markov process, see \cite[Remark 1.22]{ETH}, the process $(v,\lambda)$ is also a Markov and we see that there exists a function $f^{(a)}(\cdot,\cdot,\cdot;\phi,\psi)\colon[0,T]\times(0,\infty)\times[\lambda_0,\infty)\to\RR$ such that 
\begin{align*}
    \EE^{\QQ(a)}\left[e^{\phi v_T+\psi \lambda_T}|\mathcal{F}_t\right] = f^{(a)}(t,v_t,\lambda_t;\phi,\psi).
\end{align*}
Using that $t\to f^{(a)}(t,v_t,\lambda_t;\phi,\psi)$ is a $(\mathcal{F},\QQ(a))$-local martingale and by applying Itô formula similarly as in the previous step we see that it must satisfy the following partial integro-differential equation (PIDE) 
\begin{gather*}
    \partial_tf^{(a)}(t,x,y;\phi,\psi)-\kappa^{(a)}(x-\Bar{v}^{(a)})\partial_xf^{(a)}(t,x,y;\phi,\psi)-\beta(y-\lambda_0)\partial_yf^{(a)}(t,x,y;\phi,\psi) \\+\frac{\sigma^2}{2}x\partial_{xx}^2f^{(a)}(t,x,y;\phi,\psi)
    +y\int_{(0,\infty)}\left[f^{(a)}(t,x+\eta u,y+\alpha;\phi,\psi)-f^{(a)}(t,x,y;\phi,\psi)\right]P_{J_1}(du)=0
\end{gather*}
and final condition $f^{(a)}(T,x,y;\phi,\psi)=e^{\phi x+\psi y}$. Note that the left-hand side of the previous PIDE is just the drift given in \eqref{expr2}. By the affine structure property of the model, see \cite{duffie,VIXJUMPS1}, the solution of the previous PIDE can be written in exponential affine form, that is $f^{(a)}(T,x,y;\phi,\psi)=e^{F^{(a)}(t;\phi,\psi)+G^{(a)}(t;\phi)x+H^{(a)}(t;\phi,\psi)y}$. Substituting that expression in the PIDE we obtain the ODEs given in \eqref{ODEs50} and this finishes the proof.
\end{proof}

\section{Explicit expression of the VIX index}\label{EXPVIX}
In this section we derive an explicit expression of the $\text{VIX}$ index under the Heston-Hawkes model as a linear combination of $v$ and $\lambda$ using the Markov property of the exponential Hawkes. First, a preliminary result is required.

\begin{lem}\label{lemfor}
    Let $\QQ(a)\in\mathcal{E}_m(Q_1,2+\varepsilon_1)$, the following holds
    \begin{enumerate}
        \item The process $t\to\int_0^t\lambda_udX_u^{(a)}$ is a square integrable $(\mathcal{F},\PP)$-martingale. 
        \item Let $0\leq s\leq t\leq T$, then \begin{align*}
            \EE[\lambda_tX_t^{(a)}|\mathcal{F}_s]=\left(\lambda_s-\frac{\beta\lambda_0}{\beta-\alpha}\right)X_s^{(a)}e^{-(\beta-\alpha)(t-s)}+\frac{\beta\lambda_0X_s^{(a)}}{\beta-\alpha}.
        \end{align*}
    \end{enumerate}
\end{lem}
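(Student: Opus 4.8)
\emph{Proof proposal.} The plan is to first establish an It\^o product-rule identity for $\lambda X^{(a)}$ that serves both parts, then deduce the square-integrable martingale property in (1) by estimating the resulting pieces, and finally obtain (2) by conditioning and solving a linear ODE. Since $\QQ(a)\in\mathcal{E}_m(Q_1,2+\varepsilon_1)\subset\mathcal{E}$ we have $|a|<\sqrt{2c_l}$, so by Theorem~\ref{risk}(1) the process $X^{(a)}$ is a nonnegative $(\mathcal{F},\PP)$-martingale; combined with Proposition~\ref{oldprop}(1) and convexity of $x\mapsto x^{2+\varepsilon_1}$ this gives that $(X^{(a)})^{2+\varepsilon_1}$ is a submartingale, hence $\EE[(X_t^{(a)})^{2+\varepsilon_1}]\leq\EE[(X_T^{(a)})^{2+\varepsilon_1}]<\infty$ for all $t\in[0,T]$ and, by Doob's $L^{2+\varepsilon_1}$-inequality, $\EE[\sup_{t\in[0,T]}(X_t^{(a)})^{2+\varepsilon_1}]<\infty$. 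Since $X^{(a)}$ is continuous and $\lambda$ has no continuous martingale part (its only jumps come from $N$), the covariation $[\lambda,X^{(a)}]$ vanishes, so the product rule together with \eqref{dynlambda} yields, after rearranging,
\[
 \int_0^t\lambda_u\,dX_u^{(a)}=\lambda_tX_t^{(a)}-\lambda_0X_0^{(a)}+\beta\int_0^tX_u^{(a)}(\lambda_u-\lambda_0)\,du-\alpha\int_0^tX_u^{(a)}\,dN_u .
\]
Note $\int_0^t\lambda_{u-}\,dX_u^{(a)}=\int_0^t\lambda_u\,dX_u^{(a)}$ by continuity of $X^{(a)}$, and the left side is a priori a continuous local martingale because $\lambda_-$ is locally bounded and predictable.

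For (1) I would bound the $L^2$-norm of each term on the right, uniformly in $t$. Using $\lambda_u\leq\lambda_0+\alpha N_u\leq\lambda_0+\alpha N_T$ and H\"older's inequality with exponents $\tfrac{2+\varepsilon_1}{2}$ and its conjugate, the term $\lambda_tX_t^{(a)}$ is dominated, uniformly in $t$, by $\EE[\sup_{t\in[0,T]}(X_t^{(a)})^{2+\varepsilon_1}]$ and arbitrarily high moments of $N_T$, which are finite because the exponential Hawkes process with $\alpha<\beta$ has moments of every order; the same estimate, after Cauchy--Schwarz in time, bounds $\int_0^TX_u^{(a)}|\lambda_u-\lambda_0|\,du$ in $L^2$. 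Writing $\int_0^tX_u^{(a)}\,dN_u=\int_0^tX_u^{(a)}(dN_u-\lambda_u\,du)+\int_0^tX_u^{(a)}\lambda_u\,du$, the compensated integral has predictable bracket $\int_0^t(X_u^{(a)})^2\lambda_u\,du$ with finite expectation (again by Fubini and the same H\"older estimate), so it is a square-integrable $(\mathcal{F},\PP)$-martingale, while the remaining absolutely continuous term is treated as before. Hence the right-hand side of the displayed identity is bounded in $L^2$ and dominated in absolute value by a single integrable random variable; a continuous local martingale with $\EE[\sup_{t\in[0,T]}|\cdot|]<\infty$ is a genuine martingale, and the $L^2$ bounds make it square integrable, proving (1).

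For (2), fix $0\leq s\leq t\leq T$ and take $\EE[\,\cdot\,|\mathcal{F}_s]$ in the (non-rearranged) product-rule identity on $[s,t]$. By part (1) and the martingale property of $\int X^{(a)}(dN-\lambda\,du)$ just established, the two stochastic-integral increments have vanishing conditional expectation; applying Fubini and the martingale identity $\EE[X_u^{(a)}|\mathcal{F}_s]=X_s^{(a)}$, and setting $\varphi(t):=\EE[\lambda_tX_t^{(a)}|\mathcal{F}_s]$, one gets the linear integral equation
\[
 \varphi(t)=\lambda_sX_s^{(a)}+\int_s^t\bigl(-(\beta-\alpha)\varphi(u)+\beta\lambda_0X_s^{(a)}\bigr)\,du,\qquad\varphi(s)=\lambda_sX_s^{(a)} .
\]
This is a first-order linear ODE in $t$; its unique solution is $\varphi(t)=\bigl(\lambda_s-\tfrac{\beta\lambda_0}{\beta-\alpha}\bigr)X_s^{(a)}e^{-(\beta-\alpha)(t-s)}+\tfrac{\beta\lambda_0X_s^{(a)}}{\beta-\alpha}$, which is exactly the claimed identity.

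The main obstacle is the integrability bookkeeping in (1). The naive route—bounding the quadratic variation $\int_0^t\lambda_u^2(X_u^{(a)})^2\bigl((\theta_u^{(a)})^2+a^2v_u\bigr)\,du$ directly—runs into trouble because $(\theta_u^{(a)})^2$ contains the factor $(\mu_u-r)^2/v_u$, which would require negative moments of $v$ of an order beyond what Proposition~\ref{oldprop} provides; the product-rule decomposition circumvents this by absorbing that singular drift into $\lambda_tX_t^{(a)}$. The remaining care is to pair the merely $(2+\varepsilon_1)$-th moment bound for $X^{(a)}$ with the fact that the Hawkes intensity has finite moments of every order, via H\"older's and Doob's maximal inequalities; everything in (2) after the product-rule identity is routine.
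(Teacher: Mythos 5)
Your proposal is correct, and part (2) coincides with the paper's own argument (conditioning the product-rule identity, using the compensated-$N$ martingale, and solving the linear ODE; the paper simply cites \cite[Lemma A.8]{arxiv2} for the fact about $\int X^{(a)}d(N-\Lambda^N)$ that you re-derive). Part (1), however, follows a genuinely different route. The paper works directly with the bracket: it shows $\EE\bigl[\int_0^T\lambda_u^2\,d[X^{(a)}]_u\bigr]<\infty$, where $d[X^{(a)}]_u=\bigl[(\theta_u^{(a)})^2+a^2v_u\bigr](X_u^{(a)})^2du$, via a three-exponent H\"older estimate that combines moments of the Hawkes intensity, the $(2+\varepsilon_1)$-moment of $X^{(a)}$ with Doob's inequality, and the negative moment $\EE[(1/v_u)^{1+\varepsilon_2}]<\infty$ from Proposition \ref{oldprop}(1) to control $(\theta^{(a)})^2$; the bracket criterion then yields the square-integrable martingale property at once. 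You avoid the bracket entirely: the product-rule decomposition expresses $\int_0^\cdot\lambda_u\,dX_u^{(a)}$ through $\lambda X^{(a)}$, a Lebesgue term and $\int X^{(a)}dN$, each dominated by $\sup_{t}X_t^{(a)}$ times a polynomial in $N_T$, so a single H\"older pair $\bigl(\tfrac{2+\varepsilon_1}{2},\tfrac{2+\varepsilon_1}{\varepsilon_1}\bigr)$, Doob's $L^{2+\varepsilon_1}$ maximal inequality and the all-order moments of $N_T$ give both the $L^2$ bounds and the integrable domination needed to upgrade the continuous local martingale to a true, square-integrable martingale. What your route buys is economy: it never touches $\theta^{(a)}$, the negative moments of $v$, or the auxiliary integrability results of \cite{arxiv2}, and the exponent bookkeeping is much simpler; what the paper's route buys is the standard $L^2$-isometry criterion with no need for the domination/true-martingale step. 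One small correction to your closing remark: the ``naive'' bracket route is exactly what the paper carries out, and the $(1+\varepsilon_2)$-negative moment of Proposition \ref{oldprop}(1) is precisely the ingredient it uses for $(\theta^{(a)})^2$, so that route is not blocked by missing moments of $v$ (though its three-exponent H\"older bookkeeping is more delicate than your single application); this mischaracterization does not affect the validity of your own argument.
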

\begin{proof}
    See Lemma \ref{Alemfor} in the appendix. 
\end{proof}

\subsection{Forward variance}
As a starting point, we compute the forward variance under the Heston-Hawkes stochastic volatility model. Since the computations are rather tedious and long the proof is postponed to the appendix.  In order to avoid a singular case in the expression of the forward variance we make the following mild assumption on the family of equivalent martingale measures.

\begin{ass}\label{as4}
     We assume that $a\neq\frac{\beta-\alpha-\kappa}{\sigma}$. This is equivalent to $\kappa^{(a)}\neq\beta-\alpha$ because $\kappa^{(a)}=\kappa+a\sigma$ and ensures that some constants appearing in the next result are well defined. 
\end{ass}

\begin{prop}
Let $\QQ(a)\in\mathcal{E}_m(Q_1,2+\varepsilon_1)$ and define the forward variance by $\xi_{s}^{(a)}(t):=\EE^{\QQ(a)}[v_t|\mathcal{F}_s]$, $0\leq s\leq t\leq T$. Then,
\begin{enumerate}
    \item The explicit expression is given by
\begin{align}\label{expl2}
    \xi_{s}^{(a)}(t)= & \ D_1^{(a)}(t-s)v_s + D_2^{(a)}(t-s)\lambda_s+D_3^{(a)}(t-s),
\end{align}
where
\begin{gather}
D_1^{(a)}(h):=e^{-\kappa^{(a)}h}, \hspace{1cm} D_2^{(a)}(h)=C_1^{(a)}\left(e^{-(\beta-\alpha)h}-e^{-\kappa^{(a)}h}\right), \notag\\ 
D_3^{(a)}(h):=\left(\frac{C_2^{(a)}}{\kappa^{(a)}}-\Bar{v}^{(a)}\right)e^{-\kappa^{(a)}h} -\frac{C_2^{(a)}}{\beta-\alpha}e^{-(\beta-\alpha)h}+C_3^{(a)}, \notag\\
C_1^{(a)}:=\frac{\eta\EE[J_1]}{\kappa^{(a)}-(\beta-\alpha)}, \hspace{1cm} C_2^{(a)}:=\frac{\eta\EE[J_1]\beta\lambda_0}{\kappa^{(a)}-(\beta-\alpha)}, \hspace{1cm} C_3^{(a)}:=\frac{\eta\EE[J_1]\beta\lambda_0}{\kappa^{(a)}(\beta-\alpha)}+\Bar{v}^{(a)}.  \label{eq:const}
\end{gather}

\item The dynamics of the process  $t\to\xi_{s}^{(a)}(t)$ is given by 
\begin{align*}
    d\xi_{s}^{(a)}(t) = \left[-\kappa^{(a)}\left(\xi_{s}^{(a)}(t)-\Bar{v}^{(a)}\right)+\eta\EE[J_1]\left(\left(\lambda_s-\frac{\beta\lambda_0}{\beta-\alpha}\right)e^{-(\beta-\alpha)(t-s)}+\frac{\beta\lambda_0}{\beta-\alpha}\right)\right]dt.
\end{align*}
\item The dynamics of the process $s\to\xi_s(t)$ is given by
       \begin{align*}
        d\xi_s^{(a)}(t)= & \ \sigma D_1^{(a)}(t-s)\sqrt{v_s}dW_s^{\QQ(a)}+\eta D_1^{(a)}(t-s)\left(dL_s-\EE[J_1]\lambda_sds\right) \\ &+\alpha D_2^{(a)}(t-s)\left(dN_s-\lambda_sds\right).
    \end{align*}
\end{enumerate}
\end{prop}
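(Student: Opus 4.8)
The plan is to compute the conditional expectation $\xi_s^{(a)}(t)=\EE^{\QQ(a)}[v_t|\mathcal{F}_s]$ directly from the integrated form of the variance dynamics \eqref{volqa}, exploiting the linearity and the explicit conditional mean of $\lambda_t X_t^{(a)}$ — or rather, since we are working under $\QQ(a)$, of $\lambda_t$ itself. First I would write the mild-solution (variation of constants) representation of \eqref{volqa} from time $s$ to time $t$:
\begin{align*}
    v_t = e^{-\kappa^{(a)}(t-s)}v_s + \Bar{v}^{(a)}\left(1-e^{-\kappa^{(a)}(t-s)}\right) + \sigma\int_s^t e^{-\kappa^{(a)}(t-u)}\sqrt{v_u}\,dW_u^{\QQ(a)} + \eta\int_s^t e^{-\kappa^{(a)}(t-u)}dL_u.
\end{align*}
Taking $\EE^{\QQ(a)}[\cdot|\mathcal{F}_s]$, the stochastic integral against $W^{\QQ(a)}$ vanishes (it is a genuine martingale by the moment bounds in Proposition \ref{oldprop}, in particular $\int_0^T\EE^{\QQ(a)}[v_u]du<\infty$), and the jump term must be compensated: since $L-\Lambda^L$ is a $(\mathcal{F},\QQ(a))$-martingale with $\Lambda^L_t=\EE[J_1]\int_0^t\lambda_u du$ (Proposition \ref{oldprop}, part 3), we get
\begin{align*}
    \EE^{\QQ(a)}\left[\eta\int_s^t e^{-\kappa^{(a)}(t-u)}dL_u\,\Big|\,\mathcal{F}_s\right] = \eta\EE[J_1]\int_s^t e^{-\kappa^{(a)}(t-u)}\,\EE^{\QQ(a)}[\lambda_u|\mathcal{F}_s]\,du.
\end{align*}
So the whole problem reduces to knowing $\EE^{\QQ(a)}[\lambda_u|\mathcal{F}_s]$ for $s\le u\le t$.

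For the forward intensity I would use Lemma \ref{lemfor}: dividing the identity $\EE[\lambda_t X_t^{(a)}|\mathcal{F}_s]=(\lambda_s-\frac{\beta\lambda_0}{\beta-\alpha})X_s^{(a)}e^{-(\beta-\alpha)(t-s)}+\frac{\beta\lambda_0 X_s^{(a)}}{\beta-\alpha}$ by $X_s^{(a)}$ and using the Bayes/abstract-change-of-measure formula $\EE^{\QQ(a)}[\lambda_u|\mathcal{F}_s]=\frac{1}{X_s^{(a)}}\EE[\lambda_u X_u^{(a)}|\mathcal{F}_s]$ (valid because $X^{(a)}$ is the density martingale of $\QQ(a)$ with respect to $\PP$), we obtain
\begin{align*}
    \EE^{\QQ(a)}[\lambda_u|\mathcal{F}_s]=\left(\lambda_s-\frac{\beta\lambda_0}{\beta-\alpha}\right)e^{-(\beta-\alpha)(u-s)}+\frac{\beta\lambda_0}{\beta-\alpha}.
\end{align*}
Substituting this into the integral above and performing the elementary $u$-integration of $e^{-\kappa^{(a)}(t-u)}e^{-(\beta-\alpha)(u-s)}$ and $e^{-\kappa^{(a)}(t-u)}$ — here Assumption \ref{as4} ($\kappa^{(a)}\neq\beta-\alpha$) guarantees the primitive of the first has no $0/0$ singularity — yields, after collecting the coefficients of $v_s$, $\lambda_s$ and the constant, exactly the form \eqref{expl2} with $D_1^{(a)},D_2^{(a)},D_3^{(a)}$ and the constants $C_1^{(a)},C_2^{(a)},C_3^{(a)}$ as stated in \eqref{eq:const}. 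This proves part 1.

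For part 2, I would differentiate \eqref{expl2} in $t$ (with $s$ fixed): since $\frac{d}{dh}D_1^{(a)}(h)=-\kappa^{(a)}D_1^{(a)}(h)$ and similarly for the other two pieces, a short rearrangement shows $\frac{d}{dt}\xi_s^{(a)}(t) = -\kappa^{(a)}(\xi_s^{(a)}(t)-\Bar{v}^{(a)}) + \eta\EE[J_1]\EE^{\QQ(a)}[\lambda_t|\mathcal{F}_s]$, and then I substitute the forward-intensity formula just derived to get the stated drift. Part 3 is the genuinely different one: here we fix $t$ and vary the lower/conditioning time $s$, so $\xi_s^{(a)}(t)$ is a $(\mathcal{F},\QQ(a))$-martingale in $s$ (tower property) and we want its martingale representation. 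I would apply Itô's formula to $s\mapsto D_1^{(a)}(t-s)v_s + D_2^{(a)}(t-s)\lambda_s+D_3^{(a)}(t-s)$ using \eqref{volqa} and \eqref{dynlambda}: the $ds$-terms (the $h$-derivatives of the $D_i^{(a)}$ together with the mean-reversion drifts of $v$ and $\lambda$ and the compensator drifts $\eta\EE[J_1]\lambda_s ds$ from $dL_s$ and $\lambda_s ds$ from $dN_s$) must cancel — this is an algebraic identity equivalent to the ODEs satisfied by the $D_i^{(a)}$, and is really just a consistency check — leaving precisely the three stochastic/compensated-jump terms $\sigma D_1^{(a)}(t-s)\sqrt{v_s}\,dW_s^{\QQ(a)}$, $\eta D_1^{(a)}(t-s)(dL_s-\EE[J_1]\lambda_s ds)$ and $\alpha D_2^{(a)}(t-s)(dN_s-\lambda_s ds)$.

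The main obstacle, such as it is, is not conceptual but organizational: I expect the bookkeeping in part 3 (correctly splitting the $dL_s$ and $dN_s$ terms into compensator-plus-martingale, and verifying that the residual drift vanishes using the explicit $D_i^{(a)}$) to be the most error-prone step, which is presumably why the authors defer it — and the full computation of part 1's integrals — to the appendix. One should also be slightly careful that all the $\QQ(a)$-martingales invoked (the $W^{\QQ(a)}$-integral, $L-\Lambda^L$, $N-\Lambda^N$) are true martingales and not merely local ones on $[0,T]$; this is covered by the integrability in Proposition \ref{oldprop} (finiteness of $\int_0^T\EE^{\QQ(a)}[v_u]du$ and of $\EE^{\QQ(a)}[\lambda_T]$, the latter following from the forward-intensity formula) and the boundedness of $D_1^{(a)},D_2^{(a)}$ on $[0,T]$.
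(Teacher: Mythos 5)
Your argument is correct and it reaches the paper's formulas with the same basic ingredients (Lemma \ref{lemfor}(2), the compensator results of Proposition \ref{oldprop}, the true-martingale property of the $W^{\QQ(a)}$-integral, and an It\^o computation in $s$ for part 3), but you arrange parts 1 and 2 in the opposite order and treat the jump expectation differently. The paper's appendix starts from the raw integrated dynamics of $v$, computes $\EE^{\QQ(a)}[L_t|\mathcal{F}_s]$ by passing back to $\PP$ through the density $X^{(a)}$ and \cite[Lemma A.8(2)]{arxiv2}, derives the ODE of part 2 first, and only then solves it by variation of constants to obtain \eqref{expl2}; you instead write the mild (variation-of-constants) solution of \eqref{volqa}, use the $\QQ(a)$-compensator of $L$ from Proposition \ref{oldprop}(3) together with the Bayes identity $\EE^{\QQ(a)}[\lambda_u|\mathcal{F}_s]=\EE[\lambda_uX_u^{(a)}|\mathcal{F}_s]/X_s^{(a)}$ (which uses that $X^{(a)}$ is a true $\PP$-martingale) and Lemma \ref{lemfor}(2), integrate directly to get part 1, and then differentiate in $t$ for part 2. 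Both routes are sound and of comparable length; yours uses Proposition \ref{oldprop}(3) more directly, while the paper's avoids integrating a kernel against $L-\Lambda^L$. For part 3 your method coincides with the paper's.

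Three small points. First, the integrability that makes $t\mapsto\int_0^t\sqrt{v_u}\,dW_u^{\QQ(a)}$ a true $(\mathcal{F},\QQ(a))$-martingale and gives $\int_0^T\EE^{\QQ(a)}[v_u]\,du<\infty$ comes from Theorem \ref{risk}(5), i.e.\ $\EE^{\QQ(a)}\left[\exp\left(\tfrac{\rho^2}{2}\int_0^Tv_u\,du\right)\right]<\infty$, not from Proposition \ref{oldprop}, which only provides $\PP$-moments of $X^{(a)}$ and negative moments of $v$. Second, the step
\begin{align*}
\EE^{\QQ(a)}\left[\int_s^te^{-\kappa^{(a)}(t-u)}dL_u\,\Big|\,\mathcal{F}_s\right]=\EE[J_1]\int_s^te^{-\kappa^{(a)}(t-u)}\EE^{\QQ(a)}[\lambda_u|\mathcal{F}_s]\,du
\end{align*}
needs a one-line justification beyond ``$L-\Lambda^L$ is a martingale'': for instance, note that $\int_s^{\cdot}e^{-\kappa^{(a)}(t-u)}d(L-\Lambda^L)_u$ is a local martingale dominated by $(L_t-L_s)+(\Lambda_t^L-\Lambda_s^L)$, which is $\QQ(a)$-integrable, hence a true martingale, and then apply conditional Fubini to the compensator term. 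Third, in part 3 you assert rather than verify the cancellation of the $ds$-terms; the paper carries out this algebra explicitly, though your shortcut --- $s\mapsto\xi_s^{(a)}(t)$ is a $\QQ(a)$-martingale by the tower property, so by uniqueness of the canonical special semimartingale decomposition the predictable finite-variation part must vanish --- is also legitimate once the compensated jump integrals are known to be (local) martingales. None of these affects the validity of the overall argument.
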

\begin{proof}
  See Proposition \ref{AFOR} in the appendix. 
\end{proof}
\begin{obs}\label{obs1}
    Note that all the constants in \eqref{eq:const} are well defined. By Assumption \ref{as4}, $\kappa^{(a)}-(\beta-\alpha)\neq0$, $\beta-\alpha>0$ due to the stability condition of the Hawkes process and $\kappa^{(a)}>0$ by Theorem \ref{risk}.
\end{obs}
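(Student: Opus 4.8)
The plan is to check, one quantity at a time, that every denominator occurring in the constants $C_1^{(a)},C_2^{(a)},C_3^{(a)}$ of \eqref{eq:const} is nonzero and that every numerator is finite; once this is done, the derived coefficients $D_1^{(a)},D_2^{(a)},D_3^{(a)}$ are immediately well defined, since they are built algebraically from the $C_i^{(a)}$ and from bounded exponentials. The denominators that appear are $\kappa^{(a)}-(\beta-\alpha)$ (in $C_1^{(a)}$ and $C_2^{(a)}$, and again inside $D_3^{(a)}$ through $C_2^{(a)}/(\beta-\alpha)$), $\kappa^{(a)}$ (in $D_3^{(a)}$ through $C_2^{(a)}/\kappa^{(a)}$ and in $\Bar{v}^{(a)}=\kappa\Bar{v}/\kappa^{(a)}$), and $\beta-\alpha$.

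First I would invoke Assumption \ref{as4}: it reads $a\neq\frac{\beta-\alpha-\kappa}{\sigma}$, and rearranging with $\kappa^{(a)}=\kappa+a\sigma$ gives precisely $\kappa^{(a)}\neq\beta-\alpha$, i.e.\ $\kappa^{(a)}-(\beta-\alpha)\neq0$. Next, the stability condition $\alpha\in(0,\beta)$ built into the model in Section \ref{sec2} gives $\beta-\alpha>0$, hence $\beta-\alpha\neq0$. Then Theorem \ref{risk}(4), together with the inequality $|a|<\sqrt{2c_l}\leq\kappa/\sigma$ noted in its proof, yields $\kappa^{(a)}=\kappa+a\sigma>0$, so $\kappa^{(a)}\neq0$. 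Finally, Assumption \ref{as} ($\epsilon_J>0$) ensures that all positive moments of $J_1$ are finite, so $\EE[J_1]<\infty$, while $\eta,\beta,\lambda_0,\Bar{v}>0$ by the model specification; consequently each $C_i^{(a)}$ is a ratio of finite numbers with nonzero denominator, hence well defined, and so is every $D_i^{(a)}$.

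There is no real obstacle: the statement is a bookkeeping verification that the standing hypotheses already exclude each possible degeneracy in \eqref{eq:const}. The only point I would spell out is the reason Assumption \ref{as4} is introduced at exactly this stage, namely that $D_2^{(a)}(h)=C_1^{(a)}\bigl(e^{-(\beta-\alpha)h}-e^{-\kappa^{(a)}h}\bigr)$ has merely a removable singularity as $\kappa^{(a)}\to\beta-\alpha$, with limiting value $\eta\EE[J_1]\,h\,e^{-\kappa^{(a)}h}$; the assumption thus loses no real generality but simply rules out the resonant case, in which the forward-variance formula \eqref{expl2} would have to be rewritten with a factor linear in $h$ replacing $D_2^{(a)}$.
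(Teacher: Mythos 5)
Your verification matches the paper's own (one-line) justification exactly: Assumption \ref{as4} gives $\kappa^{(a)}-(\beta-\alpha)\neq 0$, the stability condition $\alpha<\beta$ gives $\beta-\alpha>0$, and Theorem \ref{risk} (via $|a|<\sqrt{2c_l}\leq\kappa/\sigma$) gives $\kappa^{(a)}>0$, so every denominator in \eqref{eq:const} is nonzero and the numerators are finite since $\EE[J_1]<\infty$ and the parameters are positive. Your closing remark about the removable singularity at $\kappa^{(a)}=\beta-\alpha$ is a correct aside, but the core argument is the same bookkeeping the paper records in the observation itself.
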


\subsection{Variance Swap}
We now compute an explicit expression of the variance swap using the previous result on the forward variance. The usefulness of this result relies on the fact that the VIX index can be written as a particular variance swap. 
\begin{definition}\label{intdef}
    For $k,h>0$,  we define $A_k(h):=\frac{1}{h}\int_0^he^{-ku}du=\frac{1}{kh}(1-e^{-kh})$.
\end{definition}

Let $0\leq s\leq t\leq T$ and $\QQ(a)\in\mathcal{E}_m(Q_1,2+\varepsilon_1)$, we define the variance swap as
\begin{align*}
    V_s^{(a)}(t):=\frac{1}{t-s}\int_s^t\xi^{(a)}_s(u)du.
\end{align*}
\begin{prop}\label{VIXpre1}
  Let $0\leq s\leq t\leq T$ and $\QQ(a)\in\mathcal{E}_m(Q_1,2+\varepsilon_1)$. Then, the explicit expression of $V_s^{(a)}(t)$ is given by
      \begin{align*}
    V_{s}^{(a)}(t)= & \ K_1^{(a)}(t-s)v_s + K_2^{(a)}(t-s)\lambda_s+K_3^{(a)}(t-s),
\end{align*}
where
\begin{gather}
K_1^{(a)}(h):=A_{\kappa^{(a)}}(h), \hspace{1cm} K_2^{(a)}(h)=C_1^{(a)}\left(A_{\beta-\alpha}(h)-A_{\kappa^{(a)}}(h)\right), \notag\\ 
K_3^{(a)}(h):=\left(\frac{C_2^{(a)}}{\kappa^{(a)}}-\Bar{v}^{(a)}\right)A_{\kappa^{(a)}}(h) -\frac{C_2^{(a)}}{\beta-\alpha}A_{\beta-\alpha}(h)+C_3^{(a)} \label{eq:const2}
\end{gather}
where  $C_1^{(a)},C_2^{(a)}$ and $C_3^{(a)}$ are given in \eqref{eq:const} and $A_k$ defined in Definition \ref{intdef}.
\end{prop}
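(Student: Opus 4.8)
The plan is to prove Proposition~\ref{VIXpre1} by a direct computation: I substitute the explicit form of the forward variance established just above (equation~\eqref{expl2}) into the definition of the variance swap and integrate term by term, using Definition~\ref{intdef} to evaluate the elementary exponential integrals that appear. No probabilistic input beyond \eqref{expl2} is needed, since $V_s^{(a)}(t)$ is simply a deterministic (time-)average of the already-computed process $u\mapsto\xi_s^{(a)}(u)$.

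Concretely, I would start from
\begin{align*}
    V_s^{(a)}(t)=\frac{1}{t-s}\int_s^t\xi_s^{(a)}(u)\,du
\end{align*}
and insert \eqref{expl2}, i.e. $\xi_s^{(a)}(u)=D_1^{(a)}(u-s)v_s+D_2^{(a)}(u-s)\lambda_s+D_3^{(a)}(u-s)$. Since $v_s$ and $\lambda_s$ are $\mathcal{F}_s$-measurable and, in particular, constant with respect to the integration variable $u$, linearity of the integral gives
\begin{align*}
    V_s^{(a)}(t)=v_s\cdot\frac{1}{t-s}\int_s^t D_1^{(a)}(u-s)\,du+\lambda_s\cdot\frac{1}{t-s}\int_s^t D_2^{(a)}(u-s)\,du+\frac{1}{t-s}\int_s^t D_3^{(a)}(u-s)\,du.
\end{align*}
After the change of variables $h=u-s$ each integral becomes of the form $\frac{1}{t-s}\int_0^{t-s}(\cdot)\,dh$, and Definition~\ref{intdef} yields $\frac{1}{t-s}\int_0^{t-s}e^{-k h}\,dh=A_k(t-s)$ for every $k>0$. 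The first integral then gives $A_{\kappa^{(a)}}(t-s)=K_1^{(a)}(t-s)$ directly. For the second, writing $D_2^{(a)}(h)=C_1^{(a)}\bigl(e^{-(\beta-\alpha)h}-e^{-\kappa^{(a)}h}\bigr)$ and integrating each exponential separately produces $C_1^{(a)}\bigl(A_{\beta-\alpha}(t-s)-A_{\kappa^{(a)}}(t-s)\bigr)=K_2^{(a)}(t-s)$; here one uses $\beta-\alpha>0$ and $\kappa^{(a)}>0$ so that both $A_{\beta-\alpha}$ and $A_{\kappa^{(a)}}$ are well defined. For the third, the two exponential terms of $D_3^{(a)}$ are treated the same way, while the constant term $C_3^{(a)}$ simply integrates to $C_3^{(a)}\cdot\frac{1}{t-s}\int_s^t du=C_3^{(a)}$; collecting the three contributions yields precisely $K_3^{(a)}(t-s)$ as in \eqref{eq:const2}. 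Adding the three pieces gives the claimed formula with $C_1^{(a)},C_2^{(a)},C_3^{(a)}$ inherited from \eqref{eq:const}.

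I do not expect any genuine obstacle: the argument is entirely routine once \eqref{expl2} and Definition~\ref{intdef} are available. The only points requiring mild care are the bookkeeping needed to carry the constants $C_1^{(a)},C_2^{(a)},C_3^{(a)}$ consistently from the forward-variance statement to this one, and the (already recorded, see Observation~\ref{obs1} and Theorem~\ref{risk}) fact that the denominators $\kappa^{(a)}$, $\beta-\alpha$ and $\kappa^{(a)}-(\beta-\alpha)$ are nonzero, so that every $A_k$ and every constant appearing in the final expression is well defined.
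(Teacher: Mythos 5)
Your proposal is correct and coincides with the paper's own proof: both substitute the explicit expression \eqref{expl2} of the forward variance into $V_s^{(a)}(t)=\frac{1}{t-s}\int_s^t\xi_s^{(a)}(u)\,du$ and evaluate the resulting exponential integrals via Definition \ref{intdef} to read off $K_1^{(a)},K_2^{(a)},K_3^{(a)}$. No differences worth noting.
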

\begin{proof}
    See Proposition \ref{AlemV} in the appendix. 
\end{proof}

\begin{obs}
    Note that all the constants in \eqref{eq:const2} are well defined by the same argument as in Observation \ref{obs1}, $\kappa^{(a)}-(\beta-\alpha)\neq0$, $\beta-\alpha>0$ and $\kappa^{(a)}>0$.
\end{obs}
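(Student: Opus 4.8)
The final statement asserts that every constant entering $K_1^{(a)}$, $K_2^{(a)}$ and $K_3^{(a)}$ in \eqref{eq:const2} — equivalently, the auxiliary functions $A_{\kappa^{(a)}}$, $A_{\beta-\alpha}$ together with the constants $C_1^{(a)},C_2^{(a)},C_3^{(a)}$ of \eqref{eq:const} — is finite and unambiguously defined. The plan is simply to collect the facts, all established earlier, that rule out every vanishing denominator and every infinite numerator occurring in these expressions; this is the same bookkeeping as in Observation \ref{obs1}.

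First I would observe that $\QQ(a)\in\mathcal{E}_m(Q_1,2+\varepsilon_1)\subseteq\mathcal{E}_m\subseteq\mathcal{E}$, so Theorem \ref{risk} yields $\kappa^{(a)}=\kappa+a\sigma>0$ and $\Bar{v}^{(a)}=\frac{\kappa\Bar{v}}{\kappa+a\sigma}>0$; in particular $\kappa^{(a)}\neq 0$. Next, the stability condition $\alpha\in(0,\beta)$ imposed in Section \ref{sec2} gives $\beta-\alpha>0$, hence $\beta-\alpha\neq 0$, so that $A_{\kappa^{(a)}}(h)$ and $A_{\beta-\alpha}(h)$ are well defined for every $h>0$ by Definition \ref{intdef}, which only requires the first index to be strictly positive. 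Finally, Assumption \ref{as4} is exactly the requirement $a\neq\frac{\beta-\alpha-\kappa}{\sigma}$, i.e. $\kappa^{(a)}\neq\beta-\alpha$, so $\kappa^{(a)}-(\beta-\alpha)\neq 0$; this is the denominator appearing in $C_1^{(a)}$ and $C_2^{(a)}$.

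It remains to check the numerators: $\eta,\beta,\lambda_0>0$ are model parameters, and $\EE[J_1]<\infty$ because Assumption \ref{as} (with $\epsilon_J>0$) forces all positive moments of $J_1$ to be finite. Hence $C_1^{(a)},C_2^{(a)}\in\RR$ are well defined, and since $\kappa^{(a)}(\beta-\alpha)\neq 0$ the constant $C_3^{(a)}=\frac{\eta\EE[J_1]\beta\lambda_0}{\kappa^{(a)}(\beta-\alpha)}+\Bar{v}^{(a)}$ is finite as well, as are the coefficients $\frac{C_2^{(a)}}{\kappa^{(a)}}$ and $\frac{C_2^{(a)}}{\beta-\alpha}$ appearing in $K_3^{(a)}$. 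Combining these with the well-definedness of $A_{\kappa^{(a)}}(h)$ and $A_{\beta-\alpha}(h)$ shows that $K_1^{(a)}(h),K_2^{(a)}(h),K_3^{(a)}(h)$ are all well defined for every $h>0$, which is the claim. There is no genuine obstacle here; the only point that needs care is invoking Assumption \ref{as4} to exclude the resonant case $\kappa^{(a)}=\beta-\alpha$, precisely the singular situation that assumption was introduced to avoid.
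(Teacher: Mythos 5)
Your proposal is correct and follows essentially the same argument as the paper: the constants are well defined because $\kappa^{(a)}>0$ (Theorem \ref{risk}), $\beta-\alpha>0$ (the stability condition $\alpha<\beta$), and $\kappa^{(a)}-(\beta-\alpha)\neq0$ (Assumption \ref{as4}), exactly as in Observation \ref{obs1}. Your additional checks of the numerators (finiteness of $\EE[J_1]$ and positivity of the model parameters) are harmless extra bookkeeping that the paper leaves implicit.
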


\subsection{$\text{VIX}$ index}
To finish this section, we introduce the mathematical definition of the $\text{VIX}$ index and obtain its explicit expression as a linear combination of the variance and the Hawkes intensity. 

\begin{definition}
    Let $\QQ(a)\in\mathcal{E}_m(Q_1,2+\varepsilon_1)$, $t\in[0,T]$ and $\Delta=\frac{30}{365}$. The $\text{VIX}^{(a)}$ introduced in the CBOE (Chicago Board Options Exchange) white paper can be defined, with mathematical simplification, in the following way 
    \begin{align*}
        \left(\text{VIX}_t^{(a)}\right)^2=-\frac{2}{\Delta}\EE^{\QQ(a)}\left[\log\left(e^{-r\Delta}\frac{S_{t+\Delta}}{  S_t}\right)\Big|\mathcal{F}_t\right] \cdot 100^2.
    \end{align*}
    See \cite{VIX1,state1,VIX3,zhulian} for further details on the definition of the VIX index.
\end{definition}

The next lemma will be used to deduce that some constants appearing in the explicit expression of the $\text{VIX}$ index are strictly positive. 
\begin{lem}\label{ineq}
    Let $k_1,k_2,h>0$ with $k_1\neq k_2$. Then, 
    \begin{align*}
        \frac{A_{k_2}(h)-A_{k_1}(h)}{k_1-k_2}>0.
    \end{align*}
\end{lem}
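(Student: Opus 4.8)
The plan is to reduce the statement to the monotonicity of $A_k(h)$ in the parameter $k$: I will show that for each fixed $h>0$ the map $k\mapsto A_k(h)$ is strictly decreasing on $(0,\infty)$. Granting this, the numerator $A_{k_2}(h)-A_{k_1}(h)$ and the denominator $k_1-k_2$ automatically have the same (nonzero) sign whenever $k_1\neq k_2$, so their quotient is strictly positive, which is exactly the claim.

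To prove the monotonicity I would use the integral representation from Definition \ref{intdef}, namely $A_k(h)=\frac{1}{h}\int_0^h e^{-ku}\,du$. For every fixed $u>0$ the function $k\mapsto e^{-ku}$ is strictly decreasing, so if $0<k_2<k_1$ then $e^{-k_1 u}<e^{-k_2 u}$ for all $u\in(0,h]$; integrating this strict inequality over $[0,h]$ (the strict inequality holds on a set of positive Lebesgue measure, so monotonicity of the integral is strict) yields $\int_0^h e^{-k_1 u}\,du<\int_0^h e^{-k_2 u}\,du$, i.e.\ $A_{k_1}(h)<A_{k_2}(h)$. Hence $k\mapsto A_k(h)$ is strictly decreasing. (Alternatively, one can set $\varphi(x):=\frac{1-e^{-x}}{x}$, compute $\varphi'(x)=\frac{e^{-x}(x+1)-1}{x^2}$, note that $x\mapsto e^{-x}(x+1)-1$ vanishes at $0$ and has derivative $-xe^{-x}<0$ on $(0,\infty)$, hence is negative there, so $\varphi'<0$; then $A_k(h)=\varphi(kh)$ and $k\mapsto kh$ is increasing.)

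Putting this together: if $k_1>k_2$ then $A_{k_2}(h)-A_{k_1}(h)>0$ and $k_1-k_2>0$; if $k_1<k_2$ both quantities are negative; in either case the ratio is strictly positive. There is no real obstacle in this argument; the only thing needing a little care is to keep every inequality strict, which is ensured by the fact that the pointwise comparison of the integrands is strict on an interval of positive length (equivalently, that $\varphi'$ is strictly negative on all of $(0,\infty)$).
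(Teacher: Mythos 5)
Your argument is correct and follows essentially the same route as the paper: both reduce the claim to the strict pointwise inequality $e^{-k_1u}<e^{-k_2u}$ for $k_1>k_2$ and $u>0$, and then integrate over $[0,h]$. Phrasing it as strict monotonicity of $k\mapsto A_k(h)$ rather than bounding the integrand of the quotient directly (as the paper does after a WLOG reduction) is only a cosmetic difference.
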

\begin{proof}
  See Lemma \ref{ineqA} in the appendix. 
\end{proof}

\begin{prop}\label{formulaVIX}  Let $t\in[0,T]$ and $\QQ(a)\in\mathcal{E}_m(Q_1,2+\varepsilon_1)$. Then,
    \begin{align*}
        \left(\text{VIX}_t^{(a)}\right)^2=V_t^{(a)}(t+\Delta) \cdot 100^2 = (A^{(a)}v_t+B^{(a)}\lambda_t+C^{(a)})\cdot100^2.
    \end{align*}
    where the constants $A^{(a)},B^{(a)}$ and $C^{(a)}$ are independent of $t$ and are given by
    \begin{align*}
        A^{(a)}&=K_1^{(a)}(\Delta)=A_{\kappa^{(a)}}(\Delta)>0 \\
        B^{(a)}&=K_2^{(a)}(\Delta)=C_1^{(a)}\left(A_{\beta-\alpha}(\Delta)-A_{\kappa^{(a)}}(\Delta)\right)>0,\\
        C^{(a)}&=K_3^{(a)}(\Delta)= \left(\frac{C_2^{(a)}}{\kappa^{(a)}}-\Bar{v}^{(a)}\right)A_{\kappa^{(a)}}(\Delta) -\frac{C_2^{(a)}}{\beta-\alpha}A_{\beta-\alpha}(\Delta)+C_3^{(a)}
    \end{align*}
    where $K_1^{(a)}$, $K_2^{(a)}$ and $K_3^{(a)}$ are given in Proposition \ref{VIXpre1}, $C_1^{(a)}$, $C_2^{(a)}$ and $C_3^{(a)}$ are given in \eqref{eq:const} and $A_k$ defined in Definition \ref{intdef}.
\end{prop}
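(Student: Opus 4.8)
The plan is to reduce the computation of the VIX index to the variance swap already computed in Proposition \ref{VIXpre1}, and then to verify the positivity claims for the constants. The first step is to show that $\left(\text{VIX}_t^{(a)}\right)^2 = V_t^{(a)}(t+\Delta)\cdot 100^2$. To do this I would start from the definition
\begin{align*}
    \left(\text{VIX}_t^{(a)}\right)^2 = -\frac{2}{\Delta}\EE^{\QQ(a)}\left[\log\left(e^{-r\Delta}\frac{S_{t+\Delta}}{S_t}\right)\Big|\mathcal{F}_t\right]\cdot 100^2,
\end{align*}
and use the dynamics of $S$ under $\QQ(a)$ from \eqref{volqa}, namely $\frac{dS_u}{S_u} = r\,du + \sqrt{v_u}\left(\sqrt{1-\rho^2}\,dB_u^{\QQ(a)} + \rho\,dW_u^{\QQ(a)}\right)$. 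Applying Itô's formula to $\log S_u$ gives $d\log S_u = \left(r - \tfrac{1}{2}v_u\right)du + \sqrt{v_u}\left(\sqrt{1-\rho^2}\,dB_u^{\QQ(a)} + \rho\,dW_u^{\QQ(a)}\right)$, so that $\log\left(e^{-r\Delta}\frac{S_{t+\Delta}}{S_t}\right) = -\frac{1}{2}\int_t^{t+\Delta} v_u\,du + \int_t^{t+\Delta}\sqrt{v_u}\left(\sqrt{1-\rho^2}\,dB_u^{\QQ(a)} + \rho\,dW_u^{\QQ(a)}\right)$. Taking the conditional expectation, the key point is that the stochastic integral term vanishes in expectation; this requires a square-integrability argument, which follows from the moment bounds $\EE^{\QQ(a)}\left[\exp\left(c\int_0^T v_u\,du\right)\right]<\infty$ available from Proposition \ref{novikov} and Theorem \ref{risk}(5) (in particular $v$ has all moments integrable, so $\sqrt{v_u}$ is in $L^2$). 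Hence $\EE^{\QQ(a)}\left[\log\left(e^{-r\Delta}\frac{S_{t+\Delta}}{S_t}\right)\Big|\mathcal{F}_t\right] = -\frac{1}{2}\EE^{\QQ(a)}\left[\int_t^{t+\Delta} v_u\,du\Big|\mathcal{F}_t\right]$, and by Fubini (again justified by integrability of $v$) this equals $-\frac{1}{2}\int_t^{t+\Delta}\xi_t^{(a)}(u)\,du = -\frac{\Delta}{2}V_t^{(a)}(t+\Delta)$. Multiplying by $-\frac{2}{\Delta}\cdot 100^2$ yields the claimed identity.

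The second step is purely a substitution: apply Proposition \ref{VIXpre1} with $s=t$ and the second time argument equal to $t+\Delta$, so that the time increment $h = (t+\Delta)-t = \Delta$ is a constant independent of $t$. This immediately gives $V_t^{(a)}(t+\Delta) = K_1^{(a)}(\Delta)v_t + K_2^{(a)}(\Delta)\lambda_t + K_3^{(a)}(\Delta)$, and setting $A^{(a)} = K_1^{(a)}(\Delta)$, $B^{(a)} = K_2^{(a)}(\Delta)$, $C^{(a)} = K_3^{(a)}(\Delta)$ produces the stated affine form with $t$-independent coefficients.

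The third step is to verify the two positivity assertions. For $A^{(a)} = A_{\kappa^{(a)}}(\Delta) = \frac{1}{\kappa^{(a)}\Delta}\left(1 - e^{-\kappa^{(a)}\Delta}\right)$, positivity is clear since $\kappa^{(a)}>0$ by Theorem \ref{risk} and $1 - e^{-x}>0$ for $x>0$. For $B^{(a)} = C_1^{(a)}\left(A_{\beta-\alpha}(\Delta) - A_{\kappa^{(a)}}(\Delta)\right)$, recall $C_1^{(a)} = \frac{\eta\EE[J_1]}{\kappa^{(a)}-(\beta-\alpha)}$; writing $B^{(a)} = \eta\EE[J_1]\cdot\frac{A_{\beta-\alpha}(\Delta) - A_{\kappa^{(a)}}(\Delta)}{\kappa^{(a)}-(\beta-\alpha)}$ and invoking Lemma \ref{ineq} with $k_1 = \kappa^{(a)}$, $k_2 = \beta-\alpha$ (these are distinct by Assumption \ref{as4}, and both positive), the fraction is strictly positive; since $\eta>0$ and $\EE[J_1]>0$ (the jumps are strictly positive and integrable), we get $B^{(a)}>0$. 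No sign claim is made for $C^{(a)}$, so nothing further is needed there.

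The main obstacle is the rigorous justification in the first step that the local-martingale part of $\log S$ is a true martingale over $[t,t+\Delta]$ so that its conditional expectation is zero, together with the Fubini interchange; this is where the moment estimates from Proposition \ref{novikov} and Theorem \ref{risk}(5) are essential, and it is worth spelling out that $\int_0^T \EE^{\QQ(a)}[v_u]\,du<\infty$ and $\int_0^T\EE^{\QQ(a)}[v_u]\,du<\infty$ controls both the Itô-integral bracket and the Fubini step. Everything after that is bookkeeping against the already-established Proposition \ref{VIXpre1} and Lemma \ref{ineq}.
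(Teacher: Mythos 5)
Your proposal is correct and follows essentially the same route as the paper: write $\log\bigl(e^{-r\Delta}S_{t+\Delta}/S_t\bigr)$ via the $\QQ(a)$-dynamics of $S$, use the martingale property of $\int\sqrt{v_u}\,dB_u^{\QQ(a)}$ and $\int\sqrt{v_u}\,dW_u^{\QQ(a)}$ (Theorem \ref{risk}) to reduce the VIX to the variance swap $V_t^{(a)}(t+\Delta)$, then plug in Proposition \ref{VIXpre1} and obtain $B^{(a)}>0$ from Lemma \ref{ineq}. You merely spell out the square-integrability/Fubini justifications that the paper leaves implicit; no substantive difference.
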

\begin{proof}
Using that 
    \begin{align*}
        \frac{S_{t+\Delta}}{S_t}=\exp\left(r\Delta-\frac{1}{2}\int_t^{t+\Delta}v_sds+\sqrt{1-\rho^2}\int_{t}^{t+\Delta}\sqrt{v_s}dB_s^{\QQ(a)}+\rho\int_{t}^{t+\Delta}\sqrt{v_s}dW_s^{\QQ(a)}\right)    \end{align*}
        and the fact that both $t\to\int_0^t\sqrt{v_u}dB_u^{\QQ(a)}$ and $t\to\int_0^t\sqrt{v_u}dW_u^{\QQ(a)}$ are $(\mathcal{F},\QQ(a))$-martingales by Theorem \ref{risk}, one can see that the following holds
        \begin{align*}
            \left(\text{VIX}_t^{(a)}\right)^2=V_t^{(a)}(t+\Delta).
        \end{align*}
        Then, the result follows from Proposition \ref{VIXpre1}. Moreover, the fact that $A^{(a)}>0$ follows directly from $A_{\kappa^{(a)}}(\Delta)=\frac{1}{\Delta}\int_0^\Delta e^{-\kappa^{(a)}u}du$ and the fact that $B^{(a)}>0$ follows from the definition of $C_1^{(a)}$ in \eqref{eq:const} and Lemma \ref{ineq}.
\end{proof}

\section{Pricing formula for VIX options}\label{SEC7}
Finally, we have all the results needed to derive a semi-analytical pricing formula for European \text{VIX} call options under the Heston-Hawkes stochastic volatility model. Essentially, the pricing formula is based on the formula in \eqref{fourier}, the explicit expression of $\text{VIX}^2$ and the joint characteristic function of $(v_T,\lambda_T)$. By convenience, we define for $\QQ(a)\in\mathcal{E}_m(Q_1,2+\varepsilon_1)$ and $t\in[0,T]$
\begin{align*}
    \mathcal{C}^{(a)}(t,T,k):=\EE^{\QQ(a)}\left[\max\{\text{VIX}^{(a)}_T/100-k,0\}|\mathcal{F}_t\right]. 
\end{align*}
Then, the price at time $t\in[0,T]$ of an European \text{VIX} call option with maturity time $T$ and strike $K=100K'$ can be written as
\begin{align*}
    e^{-r(T-t)}\EE^{\QQ(a)}\left[\max\{\text{VIX}^{(a)}_T-K,0\}|\mathcal{F}_t\right]=100e^{-r(T-t)}\mathcal{C}^{(a)}\left(t,T,K'\right).
\end{align*}

\begin{thm} Let $\QQ(a)\in\mathcal{E}_m(Q_1,2+\varepsilon_1)$, the price at time $t\in[0,T]$ of an European \text{VIX} call option with maturity time $T$ and strike $K=100K'$  is given by
\begin{align*}
    e^{-r(T-t)}\EE^{\QQ(a)}\left[\max\{\text{VIX}^{(a)}_T-K,0\}|\mathcal{F}_t\right]=100e^{-r(T-t)}\mathcal{C}^{(a)}\left(t,T,K'\right)
\end{align*}
where
\begin{align}\label{formulaprice}
 \mathcal{C}^{(a)}(t,T,k)=\frac{1}{2\sqrt{\pi}}\int_0^\infty\Re\left[\frac{\operatorname{erfc}(k\sqrt{\phi})}{\phi^{3/2}}e^{\phi C^{(a)}}f^{(a)}(t,v_t,\lambda_t;A^{(a)}\phi,B^{(a)}\phi)\right]d\phi_I,
\end{align}
$\text{erfc}$ is the complementary error function defined by $\text{erfc}(z)=1-\text{erf}(z)$, $\phi=\phi_R+i\phi_I\in\mathbb{C}$ satisfies 
\begin{align}\label{phiR}
    0<\phi_R<\min\Bigg\{\frac{2\kappa^{(a)}}{\sigma^2A^{(a)}\left(2e^{\kappa^{(a)}T}-1\right)},\frac{L_J}{A^{(a)}},\frac{\beta-\alpha}{B^{(a)}\alpha\beta}\Bigg\},
\end{align}
$A^{(a)},B^{(a)},C^{(a)}$ are given in Proposition \ref{formulaVIX}, $L_J$ is given in Definition \ref{LJdef} and $f^{(a)}$ is the conditional characteristic function of $(v,\lambda)$ under $\QQ(a)$ given in Proposition \ref{cf}, that is, 
\begin{align*}
    f^{(a)}(t,v_t,\lambda_t;\phi,\psi)=\EE^{\QQ(a)}\left[e^{\phi v_T+\psi \lambda_T}|\mathcal{F}_t\right]=e^{F^{(a)}(t;\phi,\psi)+G^{(a)}(t;\phi)v_t+H^{(a)}(t;\phi,\psi)\lambda_t},
\end{align*}
where $F^{(a)}, G^{(a)}$ and $H^{(a)}$ satisfy the ODEs given in \eqref{ODEs50}, that is,
\begin{align*}
    \frac{d}{dt}G^{(a)}(t;\phi)-\kappa^{(a)}G^{(a)}(t;\phi)+\frac{1}{2}\sigma^2G^{(a)}(t;\phi)^2=0,& \hspace{0.5cm}G^{(a)}(T;\phi)=\phi, \notag\\
    \frac{d}{dt}H^{(a)}(t;\phi,\psi)-\beta H^{(a)}(t;\phi,\psi)+e^{\alpha H^{(a)}(t;\phi,\psi)}\EE\left[e^{\eta G^{(a)}(t;\phi) J_1}\right]-1=0,& \hspace{0.5cm}H^{(a)}(T;\phi,\psi)=\psi, \notag\\
    \frac{d}{dt}F^{(a)}(t;\phi,\psi)+\kappa^{(a)}\Bar{v}^{(a)}G^{(a)}(t;\phi)+\beta\lambda_0H^{(a)}(t;\phi,\psi)=0,& \hspace{0.5cm}F^{(a)}(T;\phi,\psi)=0.
\end{align*}
\end{thm}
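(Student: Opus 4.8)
The plan is to assemble the final pricing formula from three ingredients already established in the paper: the Fourier representation \eqref{fourier}, the explicit expression of $\text{VIX}^2$ from Proposition \ref{formulaVIX}, and the joint conditional characteristic function of $(v_T,\lambda_T)$ from Proposition \ref{cf}. First I would reduce the option price to the quantity $\mathcal{C}^{(a)}(t,T,k)$ using the scaling $K=100K'$, exactly as stated just before the theorem; this is immediate from linearity of the conditional expectation and $\text{VIX}^{(a)}_T = 100\cdot\text{VIX}^{(a)}_T/100$.

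Next I would apply \eqref{fourier} conditionally on $\mathcal{F}_t$. By Proposition \ref{formulaVIX}, $\left(\text{VIX}^{(a)}_T/100\right)^2 = A^{(a)}v_T + B^{(a)}\lambda_T + C^{(a)} =: X$, which is a strictly positive random variable since $A^{(a)},B^{(a)}>0$, $v_T,\lambda_T>0$ and one checks $C^{(a)}\geq 0$ (or at least that $X>0$, which follows from $A^{(a)}v_T>0$ together with the representation of $C^{(a)}$). Then $\max\{\text{VIX}^{(a)}_T/100-k,0\} = \max\{\sqrt{X}-k,0\}$, and the conditional version of \eqref{fourier} gives
\begin{align*}
    \mathcal{C}^{(a)}(t,T,k)=\frac{1}{2\sqrt{\pi}}\int_0^\infty\Re\left[\frac{\operatorname{erfc}(k\sqrt{\phi})}{\phi^{3/2}}\,\EE^{\QQ(a)}\left[e^{\phi X}\big|\mathcal{F}_t\right]\right]d\phi_I,
\end{align*}
provided $\phi=\phi_R+i\phi_I$ with $\phi_R>0$ satisfies the integrability requirement $\EE^{\QQ(a)}[e^{\phi_R X}|\mathcal{F}_t]<\infty$. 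Writing $e^{\phi X} = e^{\phi C^{(a)}} e^{\phi A^{(a)} v_T + \phi B^{(a)}\lambda_T}$ and pulling out the deterministic factor $e^{\phi C^{(a)}}$, the conditional expectation becomes $e^{\phi C^{(a)}} f^{(a)}(t,v_t,\lambda_t;A^{(a)}\phi,B^{(a)}\phi)$ by Proposition \ref{cf}(ii), which yields \eqref{formulaprice}.

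The step requiring genuine care is verifying the integrability/domain condition, i.e. that the choice of $\phi_R$ in \eqref{phiR} does make $\EE^{\QQ(a)}[e^{\phi_R X}|\mathcal{F}_t]$ finite and simultaneously lands the arguments $A^{(a)}\phi$ and $B^{(a)}\phi$ inside the region where Proposition \ref{cf} applies. Here $\Re(A^{(a)}\phi)=A^{(a)}\phi_R$ and $\Re(B^{(a)}\phi)=B^{(a)}\phi_R$, so the hypotheses of Proposition \ref{cf} — namely $0<\Re(A^{(a)}\phi)<\min\{2\kappa^{(a)}/(\sigma^2(2e^{\kappa^{(a)}T}-1)),L_J\}$ and $\Re(B^{(a)}\phi)<(\beta-\alpha)/(\alpha\beta)$ — translate precisely into the three upper bounds on $\phi_R$ displayed in \eqref{phiR} (dividing through by $A^{(a)}>0$ and $B^{(a)}>0$ respectively). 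Since the bound in \eqref{phiR} is a strictly positive number, such $\phi_R$ exist, and for any of them Proposition \ref{cf}(i) gives $|\EE^{\QQ(a)}[e^{A^{(a)}\phi_R v_T + B^{(a)}\phi_R \lambda_T}|\mathcal{F}_t]|<\infty$, which is exactly the finiteness of $\EE^{\QQ(a)}[e^{\phi_R X}|\mathcal{F}_t]$ up to the harmless constant $e^{\phi_R C^{(a)}}$. I expect the only mild subtlety beyond this bookkeeping to be justifying the conditional form of \eqref{fourier} (a routine application of the unconditional identity fiberwise, using that $(v,\lambda)$ is Markov so the conditional law depends only on $(v_t,\lambda_t)$) and confirming $X>0$ almost surely; both are short. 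Assembling these pieces gives the claimed formula.
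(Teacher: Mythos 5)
Your proposal is correct and follows essentially the same route as the paper: reduce to $\mathcal{C}^{(a)}(t,T,k)$ via the scaling $K=100K'$, substitute the expression $\left(\text{VIX}^{(a)}_T/100\right)^2=A^{(a)}v_T+B^{(a)}\lambda_T+C^{(a)}$ from Proposition \ref{formulaVIX} into the conditional form of \eqref{fourier}, and identify the resulting conditional expectation with $e^{\phi C^{(a)}}f^{(a)}(t,v_t,\lambda_t;A^{(a)}\phi,B^{(a)}\phi)$ via Proposition \ref{cf}, with the bounds in \eqref{phiR} being exactly the hypotheses of Proposition \ref{cf} divided by $A^{(a)}>0$ and $B^{(a)}>0$. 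Your extra remarks on the conditional application of \eqref{fourier} and the a.s.\ positivity of $X$ (most cleanly seen from $X=V_T^{(a)}(T+\Delta)$, an average of conditional expectations of the strictly positive variance) only make explicit points the paper leaves implicit.
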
 
\begin{proof}
To obtain the pricing formula in \eqref{formulaprice} we use the expression of $\text{VIX}^2$ given in Proposition \ref{formulaVIX}, the formula in \eqref{fourier} and the joint conditional characteristic function of $(v_T,\lambda_T)$ given in Proposition \ref{cf}:
\begin{align*}
    \mathcal{C}^{(a)}(t,T,k) & =\EE^{\QQ(a)}\left[\max\{\text{VIX}^{(a)}_T/100-k,0\}|\mathcal{F}_t\right] \\
    & = \EE^{\QQ(a)}\left[\max\{\sqrt{A^{(a)}v_T+B^{(a)}\lambda_T+C^{(a)}}-k,0\}|\mathcal{F}_t\right] \\
    & = \frac{1}{2\sqrt{\pi}}\int_0^\infty\Re\left[\frac{\text{erfc}\left(k\sqrt{\phi}\right)}{\phi^{3/2}}\EE^{\QQ(a)}\left[e^{\phi\left(A^{(a)}v_T+B^{(a)}\lambda_T+C^{(a)}\right)}\big|\mathcal{F}_t\right]\right]d\phi_I \\
    & = \frac{1}{2\sqrt{\pi}}\int_0^\infty\Re\left[\frac{\text{erfc}(k\sqrt{\phi})}{\phi^{3/2}}e^{\phi C^{(a)}}f^{(a)}(t,v_t,\lambda_t;A^{(a)}\phi,B^{(a)}\phi)\right]d\phi_I.
\end{align*}
Note that the inequality in \eqref{phiR} ensures that the characteristic function is well defined, see Proposition \ref{cf}. Moreover, we have used that $A^{(a)},B^{(a)}>0$ which is proved in Proposition \ref{formulaVIX} and the fact that $\phi_R>0$ guarantees that we can use the formula in \eqref{fourier}.
\end{proof}

\newpage
\renewcommand{\abstractname}{Acknowledgements}
\begin{abstract}
The author would like to thank David Ruiz Banos and Salvador Ortiz-Latorre for their help and insightful feedback during the development of this article.
\end{abstract}

\appendix
\section{Appendix: Technical results}\label{sec: appendix}
We give the proofs that were postponed.

\subsection{Joint characteristic function}
\begin{lem}\label{AJQP} Let $\QQ(a)\in\mathcal{E}_m(Q_1,2+\varepsilon_1)$, then $P_{J_1}=\QQ(a)_{J_1}$, that is, the law of $J_1$ is the same under $\PP$ and under $\QQ(a)$. 
\end{lem}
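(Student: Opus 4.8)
The plan is to reduce the statement to the pathwise identity $\EE^{\PP}[X_T^{(a)}\mid\sigma(J_1)]=1$, $\PP$-a.s. Granting this, for every bounded measurable $f\colon\RR\to\RR$ one has $\EE^{\QQ(a)}[f(J_1)]=\EE^{\PP}[f(J_1)X_T^{(a)}]=\EE^{\PP}\big[f(J_1)\,\EE^{\PP}[X_T^{(a)}\mid\sigma(J_1)]\big]=\EE^{\PP}[f(J_1)]$, and since this holds for all such $f$ we conclude $P_{J_1}=\QQ(a)_{J_1}$.

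To prove the conditional identity I would pass to an initially enlarged filtration. Set $\mathcal{H}:=\sigma(N_s:s\in[0,T])\vee\sigma(J_i:i\geq1)$, completed by the $\PP$-null sets, and $\mathcal{G}_t:=\bigcap_{u>t}\big(\mathcal{F}_u^{(B,W)}\vee\mathcal{H}\big)$ for $t\in[0,T]$. Because $(B,W)$, $N$ and $\{J_i\}_{i\geq1}$ are mutually independent, $\mathcal{H}$ is independent of $\mathcal{F}_T^{(B,W)}$, so the classical result on enlargement of a filtration by an independent $\sigma$-algebra shows that $(B,W)$ remains a two-dimensional $(\mathcal{G},\PP)$-Brownian motion. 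Moreover $v$ is $\mathcal{G}$-adapted, since $v_t$ is $\mathcal{F}_t^{(B,W)}\vee\mathcal{F}_t^L$-measurable and $\mathcal{F}_t^L\subseteq\mathcal{H}$ (each $L_s$ with $s\leq t$ is a measurable function of $(N_u:u\leq T)$ and $(J_i:i\geq1)$). Hence $\theta^{(a)}$ and $\sqrt{v}$ are $\mathcal{G}$-adapted, the integrals $\int_0^\cdot\theta_u^{(a)}\,dB_u$ and $\int_0^\cdot a\sqrt{v_u}\,dW_u$ are well-defined continuous $(\mathcal{G},\PP)$-local martingales (their quadratic variations $\int_0^\cdot(\theta_u^{(a)})^2\,du$ and $a^2\int_0^\cdot v_u\,du$ are a.s.\ finite by boundedness of $\mu$ and Proposition \ref{oldprop}), and they are orthogonal because $[B,W]\equiv0$. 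Thus $X^{(a)}=Y^{(a)}Z^{(a)}$ is the Doléans--Dade exponential of the continuous local martingale $-\int_0^\cdot\theta_u^{(a)}\,dB_u-\int_0^\cdot a\sqrt{v_u}\,dW_u$, so it is a nonnegative $(\mathcal{G},\PP)$-local martingale and therefore a $(\mathcal{G},\PP)$-supermartingale.

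The conclusion then follows by combining two facts. First, the supermartingale property gives $\EE^{\PP}[X_T^{(a)}\mid\mathcal{G}_0]\leq X_0^{(a)}=1$ a.s., and since $\sigma(J_1)\subseteq\mathcal{H}\subseteq\mathcal{G}_0$, taking the conditional expectation given $\sigma(J_1)$ yields $\EE^{\PP}[X_T^{(a)}\mid\sigma(J_1)]\leq1$ a.s. Second, $X^{(a)}$ is a genuine $(\mathcal{F},\PP)$-martingale by Theorem \ref{risk}(1) — applicable because $\QQ(a)\in\mathcal{E}_m(Q_1,2+\varepsilon_1)\subseteq\mathcal{E}$, so $|a|<\sqrt{2c_l}$ — whence $\EE^{\PP}\big[\EE^{\PP}[X_T^{(a)}\mid\sigma(J_1)]\big]=\EE^{\PP}[X_T^{(a)}]=1$. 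A nonnegative random variable that is $\leq1$ a.s.\ and has expectation $1$ must equal $1$ a.s., so $\EE^{\PP}[X_T^{(a)}\mid\sigma(J_1)]=1$, which completes the proof.

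I expect the only genuinely delicate step to be the filtration enlargement: one must make sure that revealing at time $0$ the entire $\sigma$-algebra $\mathcal{H}$ generated by the Hawkes process $N$ and all the jump sizes $\{J_i\}$ preserves both the Brownian property of $(B,W)$ and the adaptedness of $v$, so that $X^{(a)}$ is still a (super)martingale with respect to $\mathcal{G}$. Everything afterwards is the soft ``a nonnegative supermartingale with unit terminal expectation is forced to have that value'' argument.
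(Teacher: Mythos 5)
Your proof is correct, but it takes a genuinely different route from the paper's. The paper also exploits independence of $(B,W)$ from the jump part, but it proceeds by iterated conditioning on successively smaller $\sigma$-algebras: first it conditions on $\mathcal{F}_T^W\vee\mathcal{F}_T^L\vee\mathcal{F}^{J_1}$ and integrates out $Y_T^{(a)}$ using the (conditional) lognormal law of the Wiener integral against the independent Brownian motion $B$, and then it removes $Z_T^{(a)}$ by asserting that $Z^{(a)}$ is a martingale with respect to the initially enlarged filtration $\{\mathcal{F}_t^W\vee\mathcal{F}_T^L\vee\mathcal{F}^{J_1}\}_{t\in[0,T]}$, so that $\EE[Z_T^{(a)}\mid\mathcal{F}_0^W\vee\mathcal{F}_T^L\vee\mathcal{F}^{J_1}]=1$. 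You instead reduce everything to the single identity $\EE[X_T^{(a)}\mid\sigma(J_1)]=1$ and obtain it by a supermartingale squeeze: after the initial enlargement by $\mathcal{H}$, the full exponential $X^{(a)}=\mathcal{E}\bigl(-\int\theta^{(a)}dB-a\int\sqrt{v}\,dW\bigr)$ is a nonnegative $(\mathcal{G},\PP)$-local martingale, hence supermartingale, giving $\EE[X_T^{(a)}\mid\mathcal{G}_0]\leq 1$, while Theorem \ref{risk}(1) gives $\EE[X_T^{(a)}]=1$, forcing equality. What your route buys is that you never need the \emph{true} martingale property of $Z^{(a)}$ (or of $X^{(a)}$) in the enlarged filtration --- a point the paper asserts rather than proves --- since the one-sided supermartingale inequality plus the known unit expectation under the original filtration suffices; what it costs is the need to verify carefully that the enlargement is harmless (that $(B,W)$ stays a $\mathcal{G}$-Brownian motion, that $v$, $\theta^{(a)}$ are $\mathcal{G}$-adapted with a.s.\ square-integrable paths via Proposition \ref{oldprop}, and that the stochastic integrals under $\mathcal{F}$ and $\mathcal{G}$ coincide), all of which you address and which are standard. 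Both arguments are sound; yours is arguably the more robust of the two.
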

\begin{proof}
   Let $f$ be a measurable and bounded function,  we are going to prove that $\EE^{\QQ(a)}[f(J_1)]=\EE[f(J_1)]$ which will imply that the law of $J_1$ is the same under $\PP$ and $\QQ(a)$. By Theorem \ref{risk}, $\frac{d\QQ(a)}{d\PP}=X_T^{(a)}=Y_T^{(a)}Z_T^{(a)}$. Now, using that $Z_T^{(a)}$ is $\mathcal{F}_T^W\vee\mathcal{F}_T^L$-measurable, that
   \begin{align*}
       Y_T^{(a)}|\mathcal{F}_T^W\vee\mathcal{F}_T^L\vee\mathcal{F}^{J_1}\sim\text{Lognormal}\left(-\frac{1}{2}\int_0^T\left(\theta_s^{(a)}\right)^2ds,\int_0^T\left(\theta_s^{(a)}\right)^2ds\right)
   \end{align*}
   because $\theta^{(a)}$ is $\{\mathcal{F}_t^W\vee\mathcal{F}_t^L\vee\mathcal{F}^{J_1}\}_{t\in[0,T]}$-adapted and the fact that $Z^{(a)}$ is a $\{\mathcal{F}_t^W\vee\mathcal{F}_T^L\vee\mathcal{F}^{J_1}\}_{t\in[0,T]}$-martingale we have 
   \begin{align*}
       \EE^{\QQ(a)}[f(J_1)]&=\EE[f(J_1)X_T^{(a)}]=\EE[f(J_1)Y_T^{(a)}Z_T^{(a)}]=\EE[\EE[f(J_1)Y_T^{(a)}Z_T^{(a)}|\mathcal{F}_T^W\vee\mathcal{F}_T^L\vee\mathcal{F}^{J_1}]] \\
       &=\EE[f(J_1)Z_T^{(a)}\EE[Y_T^{(a)}|\mathcal{F}_T^W\vee\mathcal{F}_T^L\vee\mathcal{F}^{J_1}]]= \EE[f(J_1)Z_T^{(a)}] \\
       &=\EE[\EE[f(J_1)Z_T^{(a)}|\mathcal{F}_0^W\vee\mathcal{F}_T^L\vee\mathcal{F}^{J_1}]]=\EE[f(J_1)\EE[Z_T^{(a)}|\mathcal{F}_0^W\vee\mathcal{F}_T^L\vee\mathcal{F}^{J_1}]]=\EE[f(J_1)].
   \end{align*}
\end{proof}
\begin{lem}\label{AODEs} Let $\QQ(a)\in\mathcal{E}_m(Q_1,2+\varepsilon_1)$ and $\phi,\psi\in\mathbb{C}$ such that
\begin{align*}
    0<\Re(\phi)<\min\Bigg\{\frac{2\kappa^{(a)}}{\sigma^2\left(2e^{\kappa^{(a)}T}-1\right)},L_J\Bigg\} \hspace{1cm}\text{and}\hspace{1cm} \Re(\psi)<\frac{\beta-\alpha}{\alpha\beta}.
\end{align*}
Then,
\begin{enumerate}[(i)]
    \item The ODE 
    \begin{align*}
        \frac{d}{dt}G^{(a)}(t;\phi)-\kappa^{(a)}G^{(a)}(t;\phi)+\frac{1}{2}\sigma^2G^{(a)}(t;\phi)^2=0, \hspace{0.5cm} G^{(a)}(T;\phi)=\phi,
    \end{align*}
    has a unique solution in the interval $[0,T]$ given by
    \begin{align*}
        G^{(a)}(t;\phi)=\frac{2\kappa^{(a)}}{\sigma^2+e^{\kappa^{(a)}(T-t)}\left(\frac{2\kappa^{(a)}}{\phi}-\sigma^2\right)}.
    \end{align*}
    \item $\sup_{t\in[0,T]}\Re(G^{(a)}(t;\phi))=\Re(\phi)$.
    \item $\sup_{t\in[0,T]}\EE\left[e^{\eta \Re(G^{(a)}(t;\phi)) J_1}\right]<\frac{\beta}{\alpha}\exp\left(\frac{\alpha}{\beta}-1\right)$. In particular, $\Big|\EE\left[e^{\eta G^{(a)}(t;\phi) J_1}\right]\Big|<\infty$ for all $t\in[0,T]$. 
    \item The ODE 
    \begin{align}\label{ODEHAPP}
         \frac{d}{dt}H^{(a)}(t;\phi,\psi)-\beta H^{(a)}(t;\phi,\psi)+e^{\alpha H^{(a)}(t;\phi,\psi)}\EE\left[e^{\eta G^{(a)}(t;\phi) J_1}\right]-1=0, \hspace{0.5cm} H^{(a)}(T;\phi,\psi)=\psi
    \end{align}
    has a unique solution in $[0,T]$. 
   \item The ODE
   \begin{align*}
        \frac{d}{dt}F^{(a)}(t;\phi,\psi)+\kappa^{(a)}\Bar{v}^{(a)}G^{(a)}(t;\phi)+\beta\lambda_0H^{(a)}(t;\phi,\psi)=0, \hspace{0.5cm} F^{(a)}(T;\phi,\psi)=0.
   \end{align*}
   has a unique solution in $[0,T]$. 
\end{enumerate}
\end{lem}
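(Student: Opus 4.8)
The plan is to establish the five items in the order given, since (ii)--(v) rest on (i) and on one another. For (i) I would linearise the classical Riccati equation by the substitution $u(t):=1/G^{(a)}(t;\phi)$, which turns it into the linear ODE $\dot u=-\kappa^{(a)}u+\tfrac12\sigma^2$, $u(T)=1/\phi$; solving and inverting produces exactly the stated formula. The only point to verify is that the denominator $\sigma^2+e^{\kappa^{(a)}(T-t)}\bigl(2\kappa^{(a)}/\phi-\sigma^2\bigr)$ never vanishes on $[0,T]$: its imaginary part is a nonzero multiple of $\Im(\phi)$, so the sole case to worry about is $\Im(\phi)=0$, and then the hypothesis $\Re(\phi)<2\kappa^{(a)}/\sigma^2$ — which follows from the stated bound since $2e^{\kappa^{(a)}T}-1\ge1$ — makes the denominator strictly positive. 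Uniqueness on $[0,T]$ is Picard--Lindel\"of, the right-hand side being polynomial in $G^{(a)}$.

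For (ii) I would argue directly on the ODE rather than on the explicit formula. Writing $g:=\Re(G^{(a)}(\cdot;\phi))$ and $q:=\Im(G^{(a)}(\cdot;\phi))$ and taking real parts in the Riccati equation gives $\dot g=\kappa^{(a)}g-\tfrac12\sigma^2g^2+\tfrac12\sigma^2q^2$; hence at any point where $g=\Re(\phi)$ one has $\dot g=\Re(\phi)\bigl(\kappa^{(a)}-\tfrac12\sigma^2\Re(\phi)\bigr)+\tfrac12\sigma^2q^2>0$, using $0<\Re(\phi)<2\kappa^{(a)}/\sigma^2$. A barrier argument starting from $g(T)=\Re(\phi)$ then yields $g(t)\le\Re(\phi)$ on $[0,T]$, with equality at $t=T$, which is (ii). Part (iii) follows at once: $J_1>0$ and $\Re(G^{(a)}(t;\phi))\le\Re(\phi)<L_J$ give $\EE[e^{\eta\Re(G^{(a)}(t;\phi))J_1}]\le M_J(\eta\Re(\phi))$, and since $\eta\Re(\phi)<\eta L_J=M_J^{-1}\bigl(\tfrac{\beta}{\alpha}e^{\alpha/\beta-1}\bigr)<\epsilon_J$, continuity and strict monotonicity of $M_J$ on $(-\infty,\epsilon_J)$ make this $<\tfrac{\beta}{\alpha}e^{\alpha/\beta-1}$ uniformly in $t$; the ``in particular'' is $\bigl|\EE[e^{\eta G^{(a)}(t;\phi)J_1}]\bigr|\le\EE[e^{\eta\Re(G^{(a)}(t;\phi))J_1}]<\infty$.

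Item (iv) is the crux. First I would note that $c(t):=\EE[e^{\eta G^{(a)}(t;\phi)J_1}]$ is finite and bounded by $M_J(\eta\Re(\phi))$ on $[0,T]$ by (iii), and continuous there by dominated convergence, the dominating function $e^{\eta\Re(\phi)J_1}$ being integrable by Assumption~\ref{as} together with $\eta\Re(\phi)<\epsilon_J$. The ODE $\dot H^{(a)}=\beta H^{(a)}-e^{\alpha H^{(a)}}c(t)+1$, $H^{(a)}(T)=\psi$, then has a unique maximal solution by Picard--Lindel\"of, and it remains to exclude finite-time blow-up. For this I would control $h:=\Re(H^{(a)})$: it satisfies $\dot h=\beta h-e^{\alpha h}\Re\!\bigl(e^{i\alpha\Im(H^{(a)})}c(t)\bigr)+1\ge\beta h-e^{\alpha h}|c(t)|+1$, so at any point where $h=B:=\tfrac{\beta-\alpha}{\alpha\beta}$ we have, using $\beta B+1=\beta/\alpha$, $e^{\alpha B}=e^{1-\alpha/\beta}$ and $|c(t)|<\tfrac{\beta}{\alpha}e^{\alpha/\beta-1}$ from (iii), that $\dot h\ge\beta B+1-e^{\alpha B}|c(t)|>\tfrac{\beta}{\alpha}-\tfrac{\beta}{\alpha}=0$. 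Since $h(T)=\Re(\psi)<B$, the barrier argument of (ii) gives $h(t)<B$ throughout the interval of existence; hence $|e^{\alpha H^{(a)}}|=e^{\alpha h}<e^{\alpha B}$ there, so $|\dot H^{(a)}|\le\beta|H^{(a)}|+e^{\alpha B}M_J(\eta\Re(\phi))+1$, and Gronwall's inequality bounds $|H^{(a)}|$ on finite intervals, so the solution extends to all of $[0,T]$.

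Finally, (v) is a quadrature: by (i) and (iv) the functions $G^{(a)}(\cdot;\phi)$ and $H^{(a)}(\cdot;\phi,\psi)$ are continuous on $[0,T]$, so $F^{(a)}(t;\phi,\psi):=\int_t^T\bigl[\kappa^{(a)}\bar v^{(a)}G^{(a)}(u;\phi)+\beta\lambda_0H^{(a)}(u;\phi,\psi)\bigr]du$ is the unique $C^1$ function with $F^{(a)}(T;\phi,\psi)=0$ solving the last ODE. I expect the main difficulty to be (iv): because the jump-size moment generating function enters as a time-dependent coefficient $c(t)$ there is no closed form, and the whole argument hinges on the barrier at $B=\tfrac{\beta-\alpha}{\alpha\beta}$ being exactly closed — which is possible only because the hypothesis $\Re(\psi)<B$ and the bound $|c(t)|<\tfrac{\beta}{\alpha}e^{\alpha/\beta-1}$ are matched precisely by the elementary inequality $e^{u-1}\ge u$ (equality iff $u=1$), applied at $u=\alpha B+\alpha/\beta$, which equals $1$.
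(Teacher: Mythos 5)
Your proof is correct, and in the two substantive items it takes a genuinely different route from the paper. For (ii) the paper computes $\Re(G^{(a)}(t;\phi))$ explicitly from the closed formula, checks positivity of its denominator, and reduces $\Re(G^{(a)}(t;\phi))\le\Re(\phi)$ to a polynomial inequality in $\Re(\phi),\Im(\phi)$ whose term-by-term estimation is what consumes the hypothesis $\Re(\phi)<2\kappa^{(a)}/\bigl(\sigma^2(2e^{\kappa^{(a)}T}-1)\bigr)$; your differential-inequality barrier ($\dot g=\kappa^{(a)}g-\tfrac12\sigma^2g^2+\tfrac12\sigma^2q^2>0$ whenever $g=\Re(\phi)$, hence no crossing backward from $g(T)=\Re(\phi)$) is shorter, bypasses the explicit real-part computation, and in fact only uses $0<\Re(\phi)<2\kappa^{(a)}/\sigma^2$. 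For (iv) the paper first bounds $\Im(h^{(a)})$ by comparison with linear ODEs (conditionally on the real part staying bounded), and then bounds $\Re(h^{(a)})$ by comparison with the two autonomous equations $\dot x=\pm U^{(a)}e^{\alpha x}-\beta x-1$ together with a phase-line analysis of $f^{(a)}_M$ (minimum at $x^{(a)}_{\min}=\tfrac1\alpha\log(\beta/(\alpha U^{(a)}))$, two zeros, stability of the lower one), using $\Re(\psi)<\tfrac{\beta-\alpha}{\alpha\beta}<x^{(a)}_{\min}<x^{(a)}_{z2}$; you instead run the barrier directly at $B=\tfrac{\beta-\alpha}{\alpha\beta}$ — the same arithmetic as the paper's inequality $x^{(a)}_{\min}>\tfrac{\beta-\alpha}{\alpha\beta}$, namely $U^{(a)}<\tfrac\beta\alpha e^{\alpha/\beta-1}$ from (iii), with strictness preserved — and then dispose of real and imaginary parts simultaneously via $|\dot H^{(a)}|\le\beta|H^{(a)}|+e^{\alpha B}M_J(\eta\Re(\phi))+1$, Gronwall, and the standard continuation criterion excluding finite-time blow-up. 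The paper's route yields a bit more (its majorant exists globally for any $\Re(\psi)<x^{(a)}_{z2}$, suggesting the hypothesis on $\Re(\psi)$ could be relaxed), while yours is more economical, needs no separate treatment of the imaginary part, and makes transparent that $\tfrac{\beta-\alpha}{\alpha\beta}$ is exactly the level at which the bound from (iii) closes the barrier. Items (i), (iii) and (v) coincide with the paper's treatment (explicit Riccati solution with non-vanishing denominator, monotonicity of $M_J$, quadrature); your explicit remarks on the continuity of $t\mapsto\EE\left[e^{\eta G^{(a)}(t;\phi)J_1}\right]$ by dominated convergence and on uniqueness via local Lipschitz continuity are points the paper leaves implicit.
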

\begin{proof}
(i) One can check that the solution of the ODE is 
\begin{align*}
    G^{(a)}(t;\phi)=\frac{2\kappa^{(a)}}{\sigma^2+e^{\kappa^{(a)}(T-t)}\left(\frac{2\kappa^{(a)}}{\phi}-\sigma^2\right)}.
\end{align*}
Note that the denominator of $G^{(a)}(t;\phi)$ is $0$ for $t\in[0,T)$ if and only if
\begin{align*}
    \phi=\frac{2\kappa^{(a)}}{\sigma^2}\frac{e^{\kappa^{(a)}(T-t)}}{e^{\kappa^{(a)}(T-t)}-1}.
\end{align*}
A necessary condition for the previous equality to hold is that $\Im(\phi)=0$. It can be checked that 
\begin{align*}
    \inf_{t\in[0,T]}\frac{2\kappa^{(a)}}{\sigma^2}\frac{e^{\kappa^{(a)}(T-t)}}{e^{\kappa^{(a)}(T-t)}-1}= \frac{2\kappa^{(a)}}{\sigma^2}\frac{e^{\kappa^{(a)}T}}{e^{\kappa^{(a)}T}-1},
\end{align*}
and since $\Re(\phi)<\frac{2\kappa^{(a)}}{\sigma^2\left(2e^{\kappa^{(a)}T}-1\right)}<\frac{2\kappa^{(a)}}{\sigma^2}\frac{e^{\kappa^{(a)}T}}{e^{\kappa^{(a)}T}-1}\leq\frac{2\kappa^{(a)}}{\sigma^2}\frac{e^{\kappa^{(a)}(T-t)}}{e^{\kappa^{(a)}(T-t)}-1}$ for all $t\in[0,T)$, $ G^{(a)}$ is well defined in the whole interval $[0,T]$. 

(ii) First, we compute the real part of $ G^{(a)}$.  Writing $DG^{(a)}(t;\phi)=\sigma^2+e^{\kappa^{(a)}(T-t)}\left(\frac{2\kappa^{(a)}}{\phi}-\sigma^2\right)$ we have
   \begin{align*}
       \Re(G^{(a)}(t;\phi))=\frac{2\kappa^{(a)}\Re(DG^{(a)}(t;\phi))}{|DG^{(a)}(t;\phi)|^2}.
   \end{align*}
   Then,
   \begin{align*}
       \Re(DG^{(a)}(t;\phi)) &=\sigma^2+e^{\kappa^{(a)}(T-t)}\left(\frac{2\kappa^{(a)}\Re(\phi)}{|\phi|^2}-\sigma^2\right) \\
       &=\sigma^2\left(1-e^{\kappa^{(a)}(T-t)}\right)+\frac{2\kappa^{(a)}\Re(\phi)}{|\phi|^2}e^{\kappa^{(a)}(T-t)}, \\
       \Im(DG^{(a)}(t;\phi)) &=-\frac{2\kappa^{(a)}\Im(\phi)}{|\phi|^2}e^{\kappa^{(a)}(T-t)}.
   \end{align*}
   Note that
   \begin{align*}
       |DG^{(a)}(t;\phi)|^2 = \ & \sigma^4\left(1-e^{\kappa^{(a)}(T-t)}\right)^2+\frac{4\left(\kappa^{(a)}\right)^2\Re(\phi)^2}{|\phi|^4}e^{2\kappa^{(a)}(T-t)} \\ &+\frac{4\kappa^{(a)}\sigma^2\left(1-e^{\kappa^{(a)}(T-t)}\right)e^{\kappa^{(a)}(T-t)}\Re(\phi)}{|\phi|^2} 
       +\frac{4\left(\kappa^{(a)}\right)^2\Im(\phi)^2}{|\phi|^4}e^{2\kappa^{(a)}(T-t)} \\
       = \ &  \sigma^4\left(1-e^{\kappa^{(a)}(T-t)}\right)^2+\frac{4\left(\kappa^{(a)}\right)^2}{|\phi|^2}e^{2\kappa^{(a)}(T-t)}+\frac{4\kappa^{(a)}\sigma^2\left(1-e^{\kappa^{(a)}(T-t)}\right)e^{\kappa^{(a)}(T-t)}\Re(\phi)}{|\phi|^2}.
   \end{align*}
    Thus, 
    \begin{align*}
        \Re(G^{(a)}(t;\phi))=\frac{2\kappa^{(a)}\sigma^2\left(1-e^{\kappa^{(a)}(T-t)}\right)|\phi|^2+4\left(\kappa^{(a)}\right)^2e^{\kappa^{(a)}(T-t)}\Re(\phi)}{\sigma^4\left(1-e^{\kappa^{(a)}(T-t)}\right)^2|\phi|^2+4\left(\kappa^{(a)}\right)^2e^{2\kappa^{(a)}(T-t)}+4\kappa^{(a)}\sigma^2\left(1-e^{\kappa^{(a)}(T-t)}\right)e^{\kappa^{(a)}(T-t)}\Re(\phi)}.
    \end{align*}
We check that the denominator of $\Re(G^{(a)}(t;\phi))$ in the previous expression is always strictly positive for $t\in[0,T)$. If $t\in[0,T)$ we have the following
\begin{align*}
    \sigma^4\left(1-e^{\kappa^{(a)}(T-t)}\right)^2|\phi|^2+4\left(\kappa^{(a)}\right)^2e^{2\kappa^{(a)}(T-t)}+4\kappa^{(a)}\sigma^2\left(1-e^{\kappa^{(a)}(T-t)}\right)e^{\kappa^{(a)}(T-t)}\Re(\phi) \\
    \geq\sigma^4\left(1-e^{\kappa^{(a)}(T-t)}\right)^2\Re(\phi)^2+4\left(\kappa^{(a)}\right)^2e^{2\kappa^{(a)}(T-t)}+4\kappa^{(a)}\sigma^2\left(1-e^{\kappa^{(a)}(T-t)}\right)e^{\kappa^{(a)}(T-t)}\Re(\phi) \\
    = f_t(\Re(\phi)),
\end{align*}
where $f_t$ is defined by
\begin{align*}
    f_t(x):=\sigma^4\left(1-e^{\kappa^{(a)}(T-t)}\right)^2x^2+4\left(\kappa^{(a)}\right)^2e^{2\kappa^{(a)}(T-t)}+4\kappa^{(a)}\sigma^2\left(1-e^{\kappa^{(a)}(T-t)}\right)e^{\kappa^{(a)}(T-t)}x.
\end{align*}
Using that $f_t$ is a quadratic function in $x$, one can check that the minimum of $f_t$ is $0$ and it is achieved at $x_t=\frac{2\kappa^{(a)}}{\sigma^2}\frac{e^{\kappa^{(a)}(T-t)}}{e^{\kappa^{(a)}(T-t)}-1}>0$ and for other values of $x$, $f_t(x)$ is strictly positive. Using again that
\begin{align*}
    \inf_{t\in[0,T)}\frac{2\kappa^{(a)}}{\sigma^2}\frac{e^{\kappa^{(a)}(T-t)}}{e^{\kappa^{(a)}(T-t)}-1}=\frac{2\kappa^{(a)}}{\sigma^2}\frac{e^{\kappa^{(a)}T}}{e^{\kappa^{(a)}T}-1},
\end{align*}
and that $\Re(\phi)<\frac{2\kappa^{(a)}}{\sigma^2(2e^{\kappa^{(a)}T}-1)}<\frac{2\kappa^{(a)}}{\sigma^2}\frac{e^{\kappa^{(a)}T}}{e^{\kappa^{(a)}T}-1}\leq\frac{2\kappa^{(a)}}{\sigma^2}\frac{e^{\kappa^{(a)}(T-t)}}{e^{\kappa^{(a)}(T-t)}-1}$ for all $t\in[0,T)$, we conclude that the denominator of $\Re(G^{(a)}(t;\phi))$  is strictly positive for all $t\in[0,T]$.

Finally, we prove that $0<\Re(\phi)<\frac{2\kappa^{(a)}}{\sigma^2(2e^{\kappa^{(a)}T}-1)}$ implies that $\sup_{t\in[0,T]}\Re(G^{(a)}(t;\phi))=\Re(\phi)$. Recall that $\Re(G^{(a)}(T;\phi))=\Re(\phi)$ and we need to prove that $\Re(G^{(a)}(t;\phi))\leq\Re(\phi)$ for $t\in[0,T)$. Using that the denominator is always positive one can check that  $\Re(G^{(a)}(t;\phi))\leq\Re(\phi)$ is equivalent to 

\begin{align}\label{ineq1}
    L_1(t)\Re(\phi)^3+L_2(t)\Re(\phi)^2+L_3(t)\Re(\phi)+L_4(t)\Im(\phi)^2+L_5(t)\Im(\phi)^2\Re(\phi)\leq0,
\end{align}
where
\begin{align*}
    L_1(t)&=-\sigma^4\left(1-e^{\kappa^{(a)}(T-t)}\right)^2, \\
    L_2(t)&=2\kappa^{(a)}\sigma^2\left(1-e^{\kappa^{(a)}(T-t)}\right)\left(1-2e^{\kappa^{(a)}(T-t)}\right), \\
    L_3(t)&=4\left(\kappa^{(a)}\right)^2e^{\kappa^{(a)}(T-t)}\left(1-e^{\kappa^{(a)}(T-t)}\right), \\
    L_4(t)&=2\kappa^{(a)}\sigma^2\left(1-e^{\kappa^{(a)}(T-t)}\right), \\
    L_5(t)&=-\sigma^4\left(1-e^{\kappa^{(a)}(T-t)}\right)^2.
\end{align*}
Using that $L_4(t)<0$ and $L_5(t)<0$ for all $t\in[0,T)$, we obtain the following
\begin{align*}
    L_1(t)\Re(\phi)^3+L_2(t)\Re(\phi)^2+L_3(t)\Re(\phi)+L_4(t)\Im(\phi)^2+L_5(t)\Im(\phi)^2\Re(\phi)  \\
    \leq L_1(t)\Re(\phi)^3+L_2(t)\Re(\phi)^2+L_3(t)\Re(\phi)\leq 0.
\end{align*}
Using that $\Re(\phi)>0$ and that $\left(1-e^{\kappa^{(a)}(T-t)}\right)<0$ for $t\in[0,T)$, the inequality in \eqref{ineq1} is equivalent to
\begin{align*}
    \widetilde{L}_1(t)\Re(\phi)^2+\widetilde{L}_2(t)\Re(\phi)+\widetilde{L}_3(t)\leq 0,
\end{align*}
where
\begin{align*}
    \widetilde{L}_1(t)&=\sigma^4\left(1-e^{\kappa^{(a)}(T-t)}\right), \\
    \widetilde{L}_2(t)&=2\kappa^{(a)}\sigma^2\left(2e^{\kappa^{(a)}(T-t)}-1\right), \\
    \widetilde{L}_3(t)&=-4\left(\kappa^{(a)}\right)^2e^{\kappa^{(a)}(T-t)}.
\end{align*}
Note that
\begin{align*}
    \widetilde{L}_1(t)\Re(\phi)^2+\widetilde{L}_2(t)\Re(\phi)+\widetilde{L}_3(t)&\leq \sup_{t\in[0,T]}\widetilde{L}_1(t)\Re(\phi)^2+\sup_{t\in[0,T}\widetilde{L}_2(t)\Re(\phi)+\sup_{t\in[0,T]}\widetilde{L}_3(t) \\
    &\leq 2\kappa^{(a)}\sigma^2\left(2e^{\kappa^{(a)}T}-1\right)\Re(\phi)-4\left(\kappa^{(a)}\right)^2\leq 0
\end{align*}
which holds because $0<\Re(\phi)<\frac{2\kappa^{(a)}}{\sigma^2(2e^{\kappa^{(a)}T}-1)}$. This finishes the proof. 

(iii) Using that $L_J=\frac{1}{\eta}M_J^{-1}\left(\frac{\beta}{\alpha}\exp\left(\frac{\alpha}{\beta}-1\right)\right)$ and that $\Re(\phi)<L_J$ we have
\begin{align*}
    \sup_{t\in[0,T]}\EE\left[e^{\eta\Re(G^{(a)}(t;\phi))J_1}\right]&\leq\EE\left[\sup_{t\in[0,T]}e^{\eta\Re(G^{(a)}(t;\phi))J_1}\right]=\EE\left[e^{\eta\sup_{t\in[0,T]}\Re(G^{(a)}(t;\phi))J_1}\right] \\
    &=\EE\left[e^{\eta\Re(\phi)J_1]}\right]<\EE\left[e^{\eta L_JJ_1}\right]=\frac{\beta}{\alpha}\exp\left(\frac{\alpha}{\beta}-1\right).
\end{align*}
In particular, 
 \begin{align*}
        \Big|\EE\left[e^{\eta G^{(a)}(t;\phi) J_1}\right]\Big|&\leq\EE\left[\Big| e^{\eta G^{(a)}(t;\phi) J_1}\Big|\right]=\EE\left[e^{\eta \Re(G^{(a)}(t;\phi)) J_1}\right] \\
        &\leq\sup_{t\in[0,T]}\EE\left[e^{\eta\Re(G^{(a)}(t;\phi))J_1}\right]<\frac{\beta}{\alpha}\exp\left(\frac{\alpha}{\beta}-1\right)<\infty,
   \end{align*}
   for all $t\in[0,T]$.

(iv) We introduce the function $h^{(a)}(t;\phi,\psi):=H^{(a)}(T-t;\phi,\psi)$. Then, the ODE for $h^{(a)}$ is 
\begin{align*}
    \frac{d}{dt}h^{(a)}(t;\phi,\psi)&=f^{(a)}(t,h^{(a)}(t;\phi,\psi)) \\
    &=e^{\alpha h^{(a)}(t;\phi,\psi)}\EE\left[e^{\eta G^{(a)}(T-t;\phi) J_1}\right]-\beta h^{(a)}(t;\phi,\psi)-1, \hspace{1cm} h^{(a)}(0;\phi,\psi)=\psi,
\end{align*}
where $f^{(a)}(t,x)=e^{\alpha x}\EE\left[e^{\eta G^{(a)}(T-t;\phi) J_1}\right]-\beta x-1\in\mathbb{C}$. Since $f^{(a)}$ is a continuously differentiable function, it is Lipschitz continuous on bounded intervals and there exists a unique local solution for every initial condition, see \cite[Chapter II, Theorem 1.1]{hartman}. First, we study the imaginary part of $h^{(a)}$. Define $U^{(a)}:=\sup_{t\in[0,T]}\EE\left[e^{\eta\Re(G^{(a)}(t;\phi))J_1}\right]<\infty$ and note that
\begin{align*}
    \frac{d}{dt}\Im(h^{(a)}(t;\phi,\psi))&\leq\Big|e^{\alpha h^{(a)}(t;\phi,\psi)}\EE\left[e^{\eta G^{(a)}(T-t;\phi) J_1}\right]\Big|-\beta\Im(h^{(a)}(t;\phi,\psi)) \\
    &\leq e^{\alpha\Re(h^{(a)}(t;\phi,\psi))}\EE\left[e^{\eta\Re(G^{(a)}(T-t;\phi))J_1}\right]-\beta\Im(h^{(a)}(t;\phi,\psi)) \\
    &\leq e^{\alpha\sup_{t\in[0,T]}\Re(h^{(a)}(t;\phi,\psi))}U^{(a)}-\beta\Im(h^{(a)}(t;\phi,\psi)) \\
    &= C^{(a)}-\beta\Im(h^{(a)}(t;\phi,\psi)),
\end{align*}
where $C^{(a)}:=e^{\alpha\sup_{t\in[0,T]}\Re(h^{(a)}(t))}U^{(a)}$. Similarly, we can get the following lower bound for $\frac{d}{dt}\Im(h^{(a)}(t;\phi,\psi))$:
\begin{align*}
    -C^{(a)}-\beta\Im(h^{(a)}(t;\phi,\psi)))\leq\frac{d}{dt}\Im(h^{(a)}(t;\phi,\psi))).
\end{align*}
Therefore, 
\begin{align*}
     -C^{(a)}-\beta\Im(h^{(a)}(t;\phi,\psi)))\leq\frac{d}{dt}\Im(h^{(a)}(t;\phi,\psi)))\leq C^{(a)}-\beta\Im(h^{(a)}(t;\phi,\psi))).
\end{align*}
Define now the two following ODEs
\begin{align*}
    \frac{d}{dt}h^{(a)}_{IM}(t;\phi,\psi))=C^{(a)}-\beta h^{(a)}_{IM}(t;\phi,\psi))&, \hspace{1cm} h^{(a)}_{IM}(0;\phi,\psi))=\Im(\psi), \\
    \frac{d}{dt}h^{(a)}_{Im}(t;\phi,\psi))=-C^{(a)}-\beta h^{(a)}_{Im}(t;\phi,\psi))&, \hspace{1cm} h^{(a)}_{Im}(0;\phi,\psi))=\Im(\psi).
\end{align*}
Then, by the comparison theorem 
\begin{align*}
    -\frac{C^{(a)}}{\beta}\left(1-e^{-\beta t}\right)+e^{-\beta t}\Im(\psi)=h^{(a)}_{Im}(t;\phi,\psi))\leq\Im(h^{(a)}(t;\phi,\psi)))
\end{align*}
and
\begin{align*}
    \Im(h^{(a)}(t;\phi,\psi)))\leq h^{(a)}_{IM}(t;\phi,\psi)) = \frac{C^{(a)}}{\beta}\left(1-e^{-\beta t}\right)+e^{-\beta t}\Im(\psi)
\end{align*}
for all $t\in[0,T]$. This proves that as long as $C^{(a)}<\infty$, the imaginary part of $h^{(a)}$ does not blow up in $[0,T]$. Since $C^{(a)}=e^{\alpha\sup_{t\in[0,T]}\Re(h^{(a)}(t))}U^{(a)}$ and $U^{(a)}<\infty$, it suffices to check that the real part of $h^{(a)}$ does not explode in $[0,T]$ to ensure existence of $h^{(a)}$ over the whole interval $[0,T]$. 

Similarly as done with the imaginary part we can get the following bounds on the derivative of the real part of $h^{(a)}$:
\begin{align*}
    -e^{\alpha\Re(h^{(a)}(t;\phi,\psi))}U^{(a)}-\beta\Re(h^{(a)}(t;\phi,\psi))-1\leq\frac{d}{dt}\Re(h^{(a)}(t;\phi,\psi))
    \end{align*} and \begin{align*}
    \frac{d}{dt}\Re(h^{(a)}(t;\phi,\psi))\leq e^{\alpha\Re(h^{(a)}(t;\phi,\psi))}U^{(a)}-\beta\Re(h^{(a)}(t;\phi,\psi))-1.
\end{align*}
Define now the two following ODEs
\begin{align*}
    \frac{d}{dt}h^{(a)}_{RM}(t;\phi,\psi)&=f^{(a)}_{M}(h^{(a)}_{RM}(t;\phi,\psi)) \\
    &=U^{(a)}e^{\alpha h^{(a)}_{RM}(t;\phi,\psi)}-\beta h^{(a)}_{RM}(t;\phi,\psi)-1, \hspace{1cm} h^{(a)}_{RM}(0;\phi,\psi)=\Re(\psi), \\
    \frac{d}{dt}h^{(a)}_{Rm}(t;\phi,\psi)&=f^{(a)}_{m}(h^{(a)}_{Rm}(t;\phi,\psi))  \\
    &=-U^{(a)}e^{\alpha h^{(a)}_{Rm}(t;\phi,\psi)}-\beta h^{(a)}_{Rm}(t;\phi,\psi)-1, \hspace{1cm} h^{(a)}_{Rm}(0;\phi,\psi)=\Re(\psi),
\end{align*}
where $f^{(a)}_{M}(x)=U^{(a)}e^{\alpha x}-\beta x-1$ and $f^{(a)}_{m}(x)=-U^{(a)}e^{\alpha x}-\beta x-1$. Then, by the comparison theorem and as long as $h^{(a)}_{RM}$ and $h^{(a)}_{Rm}$ exist over the whole interval $[0,T]$ we have
\begin{align}\label{comp}
    h^{(a)}_{Rm}(t;\phi,\psi)\leq \Re(h^{(a)}(t;\phi,\psi)) \leq h^{(a)}_{RM}(t;\phi,\psi),
\end{align}
for all $t\in[0,T]$. We study the existence of $h^{(a)}_{RM}$ in the interval $[0,T]$. One can check that $f^{(a)}_M$ has a unique absolute minimum at $x^{(a)}_{\text{min}}=\frac{1}{\alpha}\log\left(\frac{\beta}{\alpha U^{(a)}}\right)$ and it is equal to
\begin{align*}
    f^{(a)}_M(x_{\text{min}})=\frac{\beta}{\alpha}\left(1-\log\left(\frac{\beta}{\alpha U}\right)\right)-1.
\end{align*}
Note that $f^{(a)}_M(x_{\text{min}})<0$ if and only if $U^{(a)}<\frac{\beta}{\alpha}\exp\left(\frac{\alpha}{\beta}-1\right)$, which holds by (iii). Moreover, 
\begin{align}\label{ineq2}
    x^{(a)}_{\text{min}}=\frac{1}{\alpha}\log\left(\frac{\beta}{\alpha U^{(a)}}\right)>\frac{\beta-\alpha}{\alpha\beta}>0. 
\end{align}
Since $\lim_{x\to\pm\infty}f^{(a)}_M(x)=\infty$ and there is no local minimum or maximum, there exist exactly two zeroes $x^{(a)}_{z1}$ and $x^{(a)}_{z2}$ of $f^{(a)}_M$ that satisfy $x^{(a)}_{z_1}<x^{(a)}_{\text{min}}<x^{(a)}_{z2}$. Note that $x^{(a)}_{z1}$ is a stable equilibrium point and for all initial conditions in the interval $(-\infty,x^{(a)}_{z2})$, $h_{RM}$ exists for all $t\in[0,\infty)$ and it converges to $x^{(a)}_{z1}$. Therefore, using \eqref{ineq2} and $\Re(\psi)<\frac{\beta-\alpha}{\alpha\beta}$ we have
\begin{align*}
    \Re(\psi)<\frac{\beta-\alpha}{\alpha\beta}<\frac{1}{\alpha}\log\left(\frac{\beta}{\alpha U^{(a)}}\right)=x^{(a)}_{\text{min}}<x^{(a)}_{z2}
\end{align*}
We conclude that $h^{(a)}_{RM}$ exists for all $t\in[0,\infty)$ and it converges to $x^{(a)}_{z_1}$. 

We study now the existence of $h^{(a)}_{Rm}$ in the interval $[0,T]$. One can check that $f^{(a)}_m$ is decreasing in $\RR$ and that $\lim_{x\to-\infty}f^{(a)}_m(x)=+\infty$ and $\lim_{x\to\infty}f^{(a)}_m(x)=-\infty$. This implies the existence of a unique zero which is a stable equilibrium point, and for any initial condition the solution of the ODE exists for all $t\in[0,\infty)$ and it converges to that zero of $f_m$. Therefore, there is no constraint on $\Re(\psi)$ to guarantee the existence of $h^{(a)}_{Rm}$ over the interval $[0,T]$. 

Finally, by \eqref{comp} we conclude that $\Re(h^{(a)}(t))$ exists for all $t\in[0,T]$ and $\Im(h^{(a)}(t))$ as well. Hence, the ODE in \eqref{ODEHAPP} has a unique solution in $[0,T]$. 

(v) Since $G^{(a)}$ and $H^{(a)}$ exist for all $t\in[0,T]$, $F^{(a)}$ also does. 
\end{proof}

\subsection{Explicit expression of the VIX index}

\begin{lem}\label{Alemfor}
    Let $\QQ(a)\in\mathcal{E}_m(Q_1,2+\varepsilon_1)$, the following holds
    \begin{enumerate}
        \item The process $t\to\int_0^t\lambda_udX_u^{(a)}$ is a square integrable $(\mathcal{F},\PP)$-martingale. 
        \item Let $0\leq s\leq t\leq T$, then \begin{align*}
            \EE[\lambda_tX_t^{(a)}|\mathcal{F}_s]=\left(\lambda_s-\frac{\beta\lambda_0}{\beta-\alpha}\right)X_s^{(a)}e^{-(\beta-\alpha)(t-s)}+\frac{\beta\lambda_0X_s^{(a)}}{\beta-\alpha}.
        \end{align*}
    \end{enumerate}
\end{lem}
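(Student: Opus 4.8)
The plan is to make the driving noise of $X^{(a)}$ explicit and then estimate the quadratic variation of the integral. Since $X^{(a)}=Y^{(a)}Z^{(a)}$ with $dY_t^{(a)}=-Y_t^{(a)}\theta_t^{(a)}\,dB_t$ and $dZ_t^{(a)}=-aZ_t^{(a)}\sqrt{v_t}\,dW_t$, and since the two components of the Brownian motion satisfy $[B,W]\equiv 0$, the product rule gives $dX_t^{(a)}=-X_t^{(a)}\bigl(\theta_t^{(a)}\,dB_t+a\sqrt{v_t}\,dW_t\bigr)$. Consequently $t\mapsto\int_0^t\lambda_u\,dX_u^{(a)}$, which equals $\int_0^t\lambda_{u-}\,dX_u^{(a)}$ because $X^{(a)}$ is continuous, is a continuous local martingale with quadratic variation $\int_0^t\lambda_u^2(X_u^{(a)})^2\bigl((\theta_u^{(a)})^2+a^2v_u\bigr)\,du$, so it suffices to prove
\[
    \EE\!\left[\int_0^T\lambda_u^2(X_u^{(a)})^2\bigl((\theta_u^{(a)})^2+a^2v_u\bigr)\,du\right]<\infty .
\]
Using boundedness of $\mu-r$ one bounds $(\theta_u^{(a)})^2\leq C(v_u^{-1}+v_u+1)$, so it is enough to estimate $\EE[\lambda_u^2(X_u^{(a)})^2 v_u^{-1}]$, $\EE[\lambda_u^2(X_u^{(a)})^2 v_u]$ and $\EE[\lambda_u^2(X_u^{(a)})^2]$, integrated over $[0,T]$. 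Each is handled by Hölder's inequality, combining: $\EE[(X_u^{(a)})^{2+\varepsilon_1}]\leq\EE[(X_T^{(a)})^{2+\varepsilon_1}]<\infty$ (convexity, the submartingale property of $(X^{(a)})^{2+\varepsilon_1}$, and Proposition \ref{oldprop}); $\int_0^T\EE[v_u^{-(1+\varepsilon_2)}]\,du<\infty$ (Proposition \ref{oldprop}); finiteness of all polynomial moments of $v_u$, uniform on $[0,T]$ (Proposition \ref{novikov}); and finiteness of all polynomial moments of $\lambda_u$, which follows from $\lambda_u\leq\lambda_0+\alpha N_u$ and the exponential moment estimates for the Hawkes process with $\alpha<\beta$ underlying Section \ref{prel}. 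The delicate point — and the main obstacle — is that the exponents in Hölder's inequality must be chosen to close; this is exactly what fixing $\QQ(a)\in\mathcal{E}_m(Q_1,2+\varepsilon_1)$ guarantees, the admissible range of $a$ there being calibrated for precisely this type of estimate.

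\textbf{Part 2.} The plan is to exhibit a closed-form martingale from which the identity drops out. Put $\widetilde\lambda_t:=\lambda_t-\tfrac{\beta\lambda_0}{\beta-\alpha}$. From \eqref{dynlambda}, together with the fact that the $(\mathcal{F},\PP)$-compensator of $N$ is the Hawkes intensity $\int_0^\cdot\lambda_u\,du$ (the Girsanov transform of Theorem \ref{risk} acts only on $B$ and $W$, so does not alter it; compare Proposition \ref{oldprop}), an elementary rearrangement yields $d\widetilde\lambda_t=-(\beta-\alpha)\widetilde\lambda_t\,dt+\alpha\,(dN_t-\lambda_t\,dt)$, hence
\[
    d\!\left(e^{(\beta-\alpha)t}\widetilde\lambda_t\right)=\alpha e^{(\beta-\alpha)t}\,(dN_t-\lambda_t\,dt),
\]
so $e^{(\beta-\alpha)\cdot}\widetilde\lambda$ is a $(\mathcal{F},\PP)$-local martingale (indeed a true one, since $|\widetilde\lambda_t|\leq\lambda_0+\alpha N_T+\tfrac{\beta\lambda_0}{\beta-\alpha}$ is integrable). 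Setting $\Phi_t:=e^{(\beta-\alpha)t}\widetilde\lambda_t X_t^{(a)}$ and using that $X^{(a)}$ is continuous, so that $[e^{(\beta-\alpha)\cdot}\widetilde\lambda,X^{(a)}]\equiv 0$, integration by parts gives
\[
    d\Phi_t=e^{(\beta-\alpha)t}\widetilde\lambda_{t-}\,dX_t^{(a)}+X_{t-}^{(a)}\,d\!\left(e^{(\beta-\alpha)t}\widetilde\lambda_t\right),
\]
which is a local martingale. I would then upgrade $\Phi$ to a true martingale on $[0,T]$ by noting $|\Phi_t|\leq e^{(\beta-\alpha)T}\bigl(\lambda_0+\alpha N_T+\tfrac{\beta\lambda_0}{\beta-\alpha}\bigr)\sup_{u\leq T}X_u^{(a)}$, where $\sup_{u\leq T}X_u^{(a)}\in L^{2+\varepsilon_1}$ by Doob's maximal inequality and Proposition \ref{oldprop}, while $N_T$ has moments of all orders; Hölder then gives $\sup_{t\leq T}|\Phi_t|\in L^1$, so $\Phi$ is a uniformly integrable martingale.

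Finally, $\EE[\Phi_t\mid\mathcal{F}_s]=\Phi_s$ reads $\EE[\widetilde\lambda_t X_t^{(a)}\mid\mathcal{F}_s]=e^{-(\beta-\alpha)(t-s)}\widetilde\lambda_s X_s^{(a)}$ for $0\leq s\leq t\leq T$, and adding $\tfrac{\beta\lambda_0}{\beta-\alpha}\EE[X_t^{(a)}\mid\mathcal{F}_s]=\tfrac{\beta\lambda_0}{\beta-\alpha}X_s^{(a)}$ (martingale property of $X^{(a)}$, Theorem \ref{risk}(1)) to both sides reproduces the claimed formula. I note that Part 2 is in fact independent of Part 1: one could alternatively integrate the semimartingale decomposition of $\lambda_t X_t^{(a)}$, take conditional expectations, and solve the linear ODE $g'(t)=-(\beta-\alpha)g(t)+\beta\lambda_0 X_s^{(a)}$ with $g(s)=\lambda_s X_s^{(a)}$, but that route leans on Part 1 to justify interchanging expectation and integral, whereas the martingale argument above does not. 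So the only genuine work sits in Part 1, namely absorbing the $v_u^{-1}$ singularity of $(\theta_u^{(a)})^2$ against the available moments with compatible Hölder exponents.
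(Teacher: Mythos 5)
Your Part 1 follows the paper's own route: identify $d[X^{(a)}]_t=(X_t^{(a)})^2\bigl((\theta_t^{(a)})^2+a^2v_t\bigr)dt$, reduce the claim to $\EE\bigl[\int_0^T\lambda_u^2\,d[X^{(a)}]_u\bigr]<\infty$, and close the estimate with H\"older using the $(2+\varepsilon_1)$-moment of $X_T^{(a)}$, the integrated $-(1+\varepsilon_2)$-moment of $v$, polynomial moments of $v$, and polynomial moments of the Hawkes intensity; this is exactly the paper's proof up to bookkeeping (the paper takes exponents $p_2=1+\varepsilon_2$, $p_3=1+\varepsilon_1/2$ and a final H\"older in $du$). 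Your Part 2, however, is genuinely different and cleaner: instead of the paper's route --- integrate the semimartingale decomposition of $\lambda_tX_t^{(a)}$, take conditional expectations (which needs Part 1 for the $\int\lambda_{u-}dX_u^{(a)}$ term and an auxiliary result from the companion paper for the compensated $\int X_u^{(a)}dN_u$ term), and solve the resulting linear ODE --- you exhibit the closed-form local martingale $\Phi_t=e^{(\beta-\alpha)t}\bigl(\lambda_t-\tfrac{\beta\lambda_0}{\beta-\alpha}\bigr)X_t^{(a)}$ and upgrade it to a uniformly integrable martingale by the domination $\sup_{t\leq T}|\Phi_t|\leq e^{(\beta-\alpha)T}\bigl(\lambda_0+\alpha N_T+\tfrac{\beta\lambda_0}{\beta-\alpha}\bigr)\sup_{u\leq T}X_u^{(a)}\in L^1$ (Doob in $L^{2+\varepsilon_1}$ plus moments of $N_T$, then H\"older). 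This makes Part 2 logically independent of Part 1 and of the external lemma the paper invokes, at the modest cost of having guessed the right exponential discounting; the paper's ODE route is heavier but more mechanical and reuses machinery it has already established.

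Two citation-level imprecisions, neither fatal: the "closing" of the H\"older exponents is not what the choice of $\QQ(a)\in\mathcal{E}_m(Q_1,2+\varepsilon_1)$ guarantees --- that choice only delivers $\EE[(X_T^{(a)})^{2+\varepsilon_1}]<\infty$; the compatibility of exponents rests on the fixed $\varepsilon_1,\varepsilon_2$ of the standing assumption (the paper's own choice implicitly requires $1/(1+\varepsilon_2)+2/(2+\varepsilon_1)<1$), so you are no worse off than the paper but should not attribute it to the range of $a$. Likewise, polynomial moments of $v_u$ do not follow from Proposition 4.2 (which gives exponential moments of the integrated variance); they come from the model structure and are what the paper cites from the companion work. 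With those attributions corrected, your argument is complete.
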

\begin{proof}
    (1) By Theorem \ref{risk} and Proposition \ref{oldprop}(1) we know that $X^{(a)}$ is a square integrable martingale. We check that
\begin{align*}
    \EE\left[\int_0^T\lambda_{u-}^2d[X^{(a)}]_u\right]<\infty
\end{align*}
to prove that $t\to\int_0^t\lambda_{u-}dX_u^{(a)}$ is a square integrable $(\mathcal{F},\PP)$-martingale. The quadratic variation of $X^{(a)}$ is given by $d[X^{(a)}]_t=\left[\left(\theta_t^{(a)}\right)^2+a^2v_t\right]\left(X_t^{(a)}\right)^2dt$, where $\theta^{(a)}$ is defined in Theorem \ref{risk}. Then, 
\begin{align}\label{Lco}
    \EE\left[\int_0^T\lambda_{u-}^2d[X^{(a)}]_u\right] & = \EE\left[\int_0^T\lambda_{u}^2\left[\left(\theta_u^{(a)}\right)^2+a^2v_u\right]\left(X_u^{(a)}\right)^2du\right]\notag \\ 
    & =\int_0^T\EE\left[\lambda_{u}^2\left(\theta_u^{(a)}\right)^2\left(X_u^{(a)}\right)^2\right]du+a^2\int_0^T\EE\left[\lambda_{u}^2v_u\left(X_u^{(a)}\right)^2\right]du.
\end{align}
We focus on the first term in \eqref{Lco}. Recall that $\varepsilon_1$ and $\varepsilon_2$ where fixed in Assumption \ref{as3} and their connection with the results in Proposition \ref{oldprop}. Applying Hölder's inequality with $p_1=\frac{p_2p_3}{p_2p_3-p_2-p_3}>1$, $p_2=1+\varepsilon_2>1$, $p_3=1+\frac{\varepsilon_1}{2}>1$,  and then Doob's martingale inequality, see \cite[Section 2.1.2, Theorem 2.1.5]{applebaum_2009}, to the last expectation we get
\begin{align}\label{Lcota}
    \EE\left[\lambda_{u}^2\left(\theta_u^{(a)}\right)^2\left(X_u^{(a)}\right)^2\right]&\leq\EE[\lambda_{u}^{2p_1}]^{\frac{1}{p_1}}\EE\left[\left(\theta_u^{(a)}\right)^{2p_2}\right]^{\frac{1}{p_2}}\EE\left[\left(X_u^{(a)}\right)^{2p_3}\right]^{\frac{1}{p_3}} \notag\\
    & = \EE[\lambda_{u}^{2p_1}]^{\frac{1}{p_1}}\EE\left[\left(\theta_u^{(a)}\right)^{2+2\varepsilon_2}\right]^{\frac{1}{p_2}}\EE\left[\left(X_u^{(a)}\right)^{2+\varepsilon_1}\right]^{\frac{1}{p_3}} \notag\\
    & \leq \left(\frac{2+\varepsilon_1}{1+\varepsilon_1}\right)^{2}\EE[\lambda_{u}^{2p_1}]^{\frac{1}{p_1}}\EE\left[\left(\theta_u^{(a)}\right)^{2+2\varepsilon_2}\right]^{\frac{1}{p_2}}\EE\left[\left(X_T^{(a)}\right)^{2+\varepsilon_1}\right]^{\frac{1}{p_3}}.
\end{align}
By \cite[Section 3.1]{momentshawkes}, $\EE[\lambda_{u}^{2p_1}]<\infty$ for all $u\in[0,T]$ and it is continuous as a function of $u$. Moreover, by Proposition \ref{oldprop}(1) we have $\EE\left[\left(X_T^{(a)}\right)^{2+\varepsilon_1}\right]<\infty$. Applying Hölder's inequality for sums we get
\begin{align*}
    \EE\left[\left(\theta_u^{(a)}\right)^{2+2\varepsilon_2}\right]\leq\frac{2^{1+2\varepsilon_2}}{(1-\rho^2)^{1+\varepsilon_2}}\left[D^{1+\varepsilon_2}\EE\left[\left(\frac{1}{v_u}\right)^{1+\varepsilon_2}\right]+(a\rho)^{2+2\varepsilon_2}\EE\left[v_u^{1+\varepsilon_2}\right]\right]
\end{align*}
where $D=\sup_{t\in[0,T]}(\mu_t-r)^2$ is defined in Assumption \ref{as3}. By Proposition \ref{oldprop}(1) and \cite[Lemma 3.5]{arxiv2} we have that 
\begin{align*}
    \int_0^T\EE\left[\left(\theta_u^{(a)}\right)^{2+2\varepsilon_2}\right]du<\infty.
\end{align*}
Applying Hölder's inequality with $q_1=\frac{p_2}{p_2-1}>1$ and $p_2>1$ we get
\begin{align*}
    \int_0^T\EE[\lambda_{u}^{2p_1}]^{\frac{1}{p_1}}\EE\left[\left(\theta_u^{(a)}\right)^{2+2\varepsilon_2}\right]^{\frac{1}{p_2}}du\leq\left(\int_0^T\EE[\lambda_{u}^{2p_1}]^{\frac{q_1}{p_1}}du\right)^{\frac{1}{q_1}}\left(\int_0^T\EE\left[\left(\theta_u^{(a)}\right)^{2+2\varepsilon_2}\right]du\right)^{\frac{1}{p_2}}<\infty.
\end{align*}
By \eqref{Lcota} this implies that
\begin{align*}
    \int_0^T\EE\left[\lambda_{u}^2\left(\theta_u^{(a)}\right)^2\left(X_u^{(a)}\right)^2\right]du<\infty.
\end{align*}
Similarly, using again \cite[Section 3.1]{momentshawkes}, \cite[Lemma 3.5]{arxiv2} and Proposition \ref{oldprop}(1) we can show that the second term in \eqref{Lco} is finite. We conclude that the process $t\to\int_0^t\lambda_{u-}dX_u^{(a)}$ is a square integrable $(\mathcal{F},\PP)$-martingale.

(2) By \cite[Lemma A.8]{arxiv2} we know the process $t\to\int_0^tX_u^{(a)}d(N-\Lambda^N)_u$ is a square integrable $(\mathcal{F},\PP)$-martingale where $d\Lambda^N_t=\lambda_tdt$. Thus, 
\begin{align*}
  \EE\left[\int_s^tX_u^{(a)}dN_u\Big|\mathcal{F}_s\right]&=\EE\left[\int_s^tX_u^{(a)}d(N-\Lambda^N)_u\Big|\mathcal{F}_s\right]+\EE\left[\int_s^tX_u^{(a)}d\Lambda^N_u\Big|\mathcal{F}_s\right] \\
  & = \int_s^t\EE\left[X_u^{(a)}\lambda_u|\mathcal{F}_s\right]du.
\end{align*}
Recall that $d\lambda_t=-\beta(\lambda_t-\lambda_0)dt+\alpha dN_t$. Then,
\begin{align*}
    \EE\left[\int_s^tX_u^{(a)}d\lambda_u\Big|\mathcal{F}_s\right]&=-\beta\int_s^t\EE[X_u^{(a)}\lambda_u|\mathcal{F}_s]du+\beta\lambda_0(t-s)X_s^{(a)}+\alpha\EE\left[\int_s^tX_u^{(a)}dN_u\Big|\mathcal{F}_s\right] \\
    & =-(\beta-\alpha)\int_s^t\EE[X_u^{(a)}\lambda_u|\mathcal{F}_s]du+\beta\lambda_0(t-s)X_s^{(a)}.
\end{align*}
Applying Itô formula, using that $\lambda$ is of finite variation and $X^{(a)}$ is continuous we have
\begin{align*}
    \lambda_tX^{(a)}_t=\lambda_sX_s^{(a)}+\int_s^t\lambda_{u-}dX_u^{(a)}+\int_s^tX_u^{(a)}d\lambda_u.
\end{align*}
Using that $t\to\int_0^t\lambda_{u-}dX_u^{(a)}$ is a square integrable $(\mathcal{F},\PP)$-martingale we obtain
\begin{align*}
    \EE[\lambda_tX_t^{(a)}|\mathcal{F}_s] & =  \lambda_sX_s^{(a)}+\EE\left[\int_s^t\lambda_{u-}dX_u^{(a)}\Big|\mathcal{F}_s\right]+\EE\left[\int_s^tX_u^{(a)}d\lambda_u\Big|\mathcal{F}_s\right] \\
    & = \lambda_sX_s^{(a)} -(\beta-\alpha)\int_s^t\EE[X_u^{(a)}\lambda_u|\mathcal{F}_s]du+\beta\lambda_0(t-s)X_s^{(a)}.
\end{align*}
So, writing $h_s^{(a)}(t):=\EE[\lambda_tX_t^{(a)}|\mathcal{F}_s]$ we have $h_s^{(a)}(t)=\lambda_sX_s^{(a)} -(\beta-\alpha)\int_s^th_s^{(a)}(u)du+\beta\lambda_0(t-s)X_s^{(a)}$. Taking the derivative
\begin{align*}
    h_s^{(a)}(t)^\prime=-(\beta-\alpha)h_s^{(a)}(t)+\beta\lambda_0X_s^{(a)}.
\end{align*}
Solving the differential equation with the initial condition $h_s(s)=\lambda_sX_s^{(a)}$ we obtain
\begin{align*}
    h_s^{(a)}(t)=\EE\left[X_t^{(a)}\lambda_t|\mathcal{F}_s\right]=X_s^{(a)}\left(\lambda_s-\frac{\beta\lambda_0}{\beta-\alpha}\right)e^{-(\beta-\alpha)(t-s)}+\frac{\beta\lambda_0X_s^{(a)}}{\beta-\alpha}.
\end{align*}
\end{proof}

\begin{prop}\label{AFOR}
Let $\QQ(a)\in\mathcal{E}_m(Q_1,2+\varepsilon_1)$ and define the forward variance by $\xi_{s}^{(a)}(t):=\EE^{\QQ(a)}[v_t|\mathcal{F}_s]$, $0\leq s\leq t\leq T$. Then,
\begin{enumerate}
    \item The explicit expression is given by
\begin{align*}
    \xi_{s}^{(a)}(t)= & \ D_1^{(a)}(t-s)v_s + D_2^{(a)}(t-s)\lambda_s+D_3^{(a)}(t-s),
\end{align*}
where
\begin{gather*}
D_1^{(a)}(h):=e^{-\kappa^{(a)}h}, \hspace{1cm} D_2^{(a)}(h)=C_1^{(a)}\left(e^{-(\beta-\alpha)h}-e^{-\kappa^{(a)}h}\right), \notag\\ 
D_3^{(a)}(h):=\left(\frac{C_2^{(a)}}{\kappa^{(a)}}-\Bar{v}^{(a)}\right)e^{-\kappa^{(a)}h} -\frac{C_2^{(a)}}{\beta-\alpha}e^{-(\beta-\alpha)h}+C_3^{(a)}, \notag\\
C_1^{(a)}:=\frac{\eta\EE[J_1]}{\kappa^{(a)}-(\beta-\alpha)}, \hspace{1cm} C_2^{(a)}:=\frac{\eta\EE[J_1]\beta\lambda_0}{\kappa^{(a)}-(\beta-\alpha)}, \hspace{1cm} C_3^{(a)}:=\frac{\eta\EE[J_1]\beta\lambda_0}{\kappa^{(a)}(\beta-\alpha)}+\Bar{v}^{(a)}. 
\end{gather*}

\item The dynamics of the process  $t\to\xi_{s}^{(a)}(t)$ is given by 
\begin{align}\label{Aeqdem3}
    d\xi_{s}^{(a)}(t) = \left[-\kappa^{(a)}\left(\xi_{s}^{(a)}(t)-\Bar{v}^{(a)}\right)+\eta\EE[J_1]\left(\left(\lambda_s-\frac{\beta\lambda_0}{\beta-\alpha}\right)e^{-(\beta-\alpha)(t-s)}+\frac{\beta\lambda_0}{\beta-\alpha}\right)\right]dt.
\end{align}
\item The dynamics of the process $s\to\xi_s(t)$ is given by
       \begin{align*}
        d\xi_s^{(a)}(t)= & \ \sigma D_1^{(a)}(t-s)\sqrt{v_s}dW_s^{\QQ(a)}+\eta D_1^{(a)}(t-s)\left(dL_s-\EE[J_1]\lambda_sds\right) \\ &+\alpha D_2^{(a)}(t-s)\left(dN_s-\lambda_sds\right).
    \end{align*}
\end{enumerate}
\end{prop}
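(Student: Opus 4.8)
The plan is to reduce the computation of the conditional expectation $\xi_s^{(a)}(t)=\EE^{\QQ(a)}[v_t\mid\mathcal{F}_s]$ to a linear first-order ODE in the variable $t$, to read off statement (2) directly from that ODE, to integrate it to obtain (1), and finally to derive the $s$-dynamics in (3) by applying It\^o's formula to the closed-form expression from (1).

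For the reduction, fix $0\le s\le t\le T$ and integrate the $\QQ(a)$-dynamics \eqref{volqa} of $v$ from $s$ to $t$, then take $\EE^{\QQ(a)}[\,\cdot\mid\mathcal{F}_s]$ on both sides. The $W^{\QQ(a)}$-stochastic integral is a genuine $(\mathcal{F},\QQ(a))$-martingale (this is used in the proof of Proposition \ref{formulaVIX} and rests on Theorem \ref{risk} and the moment estimates in Proposition \ref{oldprop}), so it drops out; and since $L-\Lambda^L$ is a $(\mathcal{F},\QQ(a))$-martingale with $\Lambda^L_u=\EE[J_1]\int_0^u\lambda_r\,dr$ by Proposition \ref{oldprop}, Fubini gives $\EE^{\QQ(a)}[L_t-L_s\mid\mathcal{F}_s]=\EE[J_1]\int_s^t m_s(u)\,du$, where $m_s(u):=\EE^{\QQ(a)}[\lambda_u\mid\mathcal{F}_s]$. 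The function $m_s$ is already known: by the abstract Bayes rule and the $\PP$-martingale property of $X^{(a)}$ one has $m_s(u)=\EE[\lambda_uX_u^{(a)}\mid\mathcal{F}_s]/X_s^{(a)}$, so Lemma \ref{Alemfor}(2) yields $m_s(u)=(\lambda_s-\tfrac{\beta\lambda_0}{\beta-\alpha})e^{-(\beta-\alpha)(u-s)}+\tfrac{\beta\lambda_0}{\beta-\alpha}$. Collecting terms, $\xi_s^{(a)}$ solves the Volterra equation $\xi_s^{(a)}(t)=v_s-\kappa^{(a)}\int_s^t(\xi_s^{(a)}(u)-\Bar{v}^{(a)})\,du+\eta\EE[J_1]\int_s^t m_s(u)\,du$; since its forcing term is continuous in $u$, $t\mapsto\xi_s^{(a)}(t)$ is $C^1$ and differentiation gives the linear ODE
\begin{align*}
\tfrac{\partial}{\partial t}\xi_s^{(a)}(t)=-\kappa^{(a)}\big(\xi_s^{(a)}(t)-\Bar{v}^{(a)}\big)+\eta\EE[J_1]m_s(t),\qquad \xi_s^{(a)}(s)=v_s.
\end{align*}

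Statement (2) is exactly this ODE after substituting the explicit $m_s(t)$. For (1) I would solve it with the integrating factor $e^{\kappa^{(a)}(t-s)}$, obtaining $\xi_s^{(a)}(t)=e^{-\kappa^{(a)}(t-s)}v_s+\int_s^t e^{-\kappa^{(a)}(t-u)}(\kappa^{(a)}\Bar{v}^{(a)}+\eta\EE[J_1]m_s(u))\,du$, and then evaluate the three elementary integrals (a constant term, an $e^{-\kappa^{(a)}(u-s)}$ term, and an $e^{-(\beta-\alpha)(u-s)}$ term). Grouping the coefficients of $v_s$, $\lambda_s$ and the constant, and rewriting them in terms of $C_1^{(a)},C_2^{(a)},C_3^{(a)}$, gives precisely $D_1^{(a)},D_2^{(a)},D_3^{(a)}$; this is routine algebra, and the constants are well defined by Observation \ref{obs1} (which uses Assumption \ref{as4}). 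For (3), write $\xi_s^{(a)}(t)=D_1^{(a)}(t-s)v_s+D_2^{(a)}(t-s)\lambda_s+D_3^{(a)}(t-s)$, an affine function of $(v_s,\lambda_s)$ that is $C^1$ in $s$, and apply It\^o's formula: there are no second-order contributions and the jump part is just $D_1^{(a)}(t-s)\eta\,\Delta L_s+D_2^{(a)}(t-s)\alpha\,\Delta N_s$. Substituting \eqref{volqa} and \eqref{dynlambda} and compensating $dL_s$ and $dN_s$ via Proposition \ref{oldprop}, the finite-variation part vanishes identically; this can be verified from the elementary relations $(D_1^{(a)})'=-\kappa^{(a)}D_1^{(a)}$, $(D_2^{(a)})'=-(\beta-\alpha)D_2^{(a)}+\eta\EE[J_1]D_1^{(a)}$ and $(D_3^{(a)})'=\kappa^{(a)}\Bar{v}^{(a)}D_1^{(a)}+\beta\lambda_0D_2^{(a)}$, or, more conceptually, by noting that $s\mapsto\xi_s^{(a)}(t)$ is a $(\mathcal{F},\QQ(a))$-martingale (a conditional expectation of the integrable variable $v_t$) and hence has zero drift. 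What is left is exactly the claimed SDE.

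The conceptual steps are short; the genuine work lies in (i) the integrability, Fubini and true-martingale justifications needed to pass the conditional expectation through the integrals in the reduction step and to identify the martingale part in (3) — all of which reduce to Theorem \ref{risk}, Proposition \ref{oldprop} and the Hawkes moment bounds quoted in \cite{arxiv2,momentshawkes} — and (ii) the lengthy but mechanical simplification of the constants $C_i^{(a)},D_i^{(a)}$. I expect (ii) to be the most error-prone and (i) the part demanding the most care.
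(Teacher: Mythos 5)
Your proposal is correct and follows essentially the same route as the paper: derive the ODE in $t$ for $\xi_s^{(a)}$ by conditioning the integrated variance dynamics (with the conditional intensity supplied by Lemma \ref{Alemfor}(2)), read off (2), solve with an integrating factor to get (1), and obtain (3) by differentiating the affine formula in $s$ and checking the drift cancels via the identities for $D_1^{(a)},D_2^{(a)},D_3^{(a)}$. The only cosmetic difference is that you handle $\EE^{\QQ(a)}[L_t\mid\mathcal{F}_s]$ through the $\QQ(a)$-compensator of $L$ from Proposition \ref{oldprop} and then apply Bayes to $\lambda_u$, whereas the paper applies the Bayes rule to $L_t$ directly and invokes \cite[Lemma A.8(2)]{arxiv2}.
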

\begin{proof}
   First we prove (2), then (1) and (3).

   (2) We can write
\begin{align*}
    v_t=v_s-\kappa^{(a)}\int_s^t(v_u-\Bar{v}^{(a)})du+\sigma\int_s^t\sqrt{v_u}dW_u^{\QQ(a)}+\eta (L_t-L_s),
\end{align*}
where $\kappa^{(a)}=\kappa+a\sigma$ and $\Bar{v}^{(a)}=\frac{k\Bar{v}}{k+a\sigma}$. 
By Theorem \ref{risk}. $\EE^{\QQ(a)}\left[\exp\left(\frac{\rho^2}{2}\int_0^Tv_udu\right)\right]<\infty$. In particular, $\EE^{\QQ(a)}\left[\int_0^Tv_sds\right]<\infty$ and the process $t\to\int_0^t\sqrt{v_u}dW_u^{\QQ(a)}$ is a $(\mathcal{F},\QQ(a))$-martingale. Therefore, 
\begin{align}\label{Afor}
    \xi_{s}^{(a)}(t)=v_s-\kappa^{(a)}\int_s^t\left(\xi_{s}^{(a)}(u)-\Bar{v}^{(a)}\right)du+\sigma\int_0^s\sqrt{v_u}dW_u^{\QQ(a)}+\eta(\EE^{\QQ(a)}[L_t|\mathcal{F}_s]-L_s).
\end{align}
By Theorem \ref{risk}, $X^{(a)}$ is a $(\mathcal{F},\PP)$-martingale and using \cite[Lemma A.8(2)]{arxiv2} 
\begin{align*}
    \EE^{\QQ(a)}[L_t|\mathcal{F}_s]&=\frac{\EE[L_tX^{(a)}_T|\mathcal{F}_s]}{\EE[X^{(a)}_T|\mathcal{F}_s]} \\
    & =\frac{\EE[\EE[L_tX^{(a)}_T|\mathcal{F}_t]|\mathcal{F}_s]}{X^{(a)}_s} \\
    & =\frac{\EE[L_tX^{(a)}_t|\mathcal{F}_s]}{X^{(a)}_s} \\
    & = L_s+\EE[J_1]\int_s^t\frac{\EE[\lambda_uX_u^{(a)}|\mathcal{F}_s]}{X_s^{(a)}}du.
\end{align*}
Inserting the last expression in \eqref{Afor} and using Lemma \ref{lemfor}(2) we get
\begin{align*}
    d\xi_{s}^{(a)}(t) & =  \left[-\kappa^{(a)}\left(\xi_{s}^{(a)}(t)-\Bar{v}^{(a)}\right)+\eta\EE[J_1]\frac{\EE[\lambda_tX_t^{(a)}|\mathcal{F}_s]}{X_s^{(a)}}\right]dt \\
    & = \left[-\kappa^{(a)}\left(\xi_{s}^{(a)}(t)-\Bar{v}^{(a)}\right)+\eta\EE[J_1]\left(\left(\lambda_s-\frac{\beta\lambda_0}{\beta-\alpha}\right)e^{-(\beta-\alpha)(t-s)}+\frac{\beta\lambda_0}{\beta-\alpha}\right)\right]dt.
\end{align*}
This proves \eqref{Aeqdem3}. 

(1) The solution of the previous ODE is given by
\begin{align*}
    \xi_{s}^{(a)}(t)=e^{-\kappa^{(a)}(t-s)}v_s+e^{-\kappa^{(a)} t}\int_s^te^{\kappa^{(a)} u}\left(\kappa^{(a)}\Bar{v}^{(a)}+f_s(u)\right)du
\end{align*}
where
\begin{align*}
    f_s(t)=\eta\EE[J_1]\left(\left(\lambda_s-\frac{\beta\lambda_0}{\beta-\alpha}\right)e^{-(\beta-\alpha)(t-s)}+\frac{\beta\lambda_0}{\beta-\alpha}\right).
\end{align*}
Now 
\begin{align*}
    \int_s^te^{\kappa^{(a)} u}\kappa^{(a)}\Bar{v}^{(a)}du=\Bar{v}^{(a)}\left(e^{\kappa^{(a)} t}-e^{\kappa^{(a)} s}\right)
\end{align*}
and
\begin{align*}
    &\int_s^te^{\kappa^{(a)} u}f_s(u)du =\eta\EE[J_1]\left(\lambda_s-\frac{\beta\lambda_0}{\beta-\alpha}\right)\int_s^te^{\kappa^{(a)} u-(\beta-\alpha)(u-s)}du+\frac{\eta\EE[J_1]\beta\lambda_0}{\beta-\alpha}\int_s^te^{\kappa^{(a)} u}du \\
    & = C_1^{(a)}\left(\lambda_s-\frac{\beta\lambda_0}{\beta-\alpha}\right)\left(e^{\kappa^{(a)} t-(\beta-\alpha)(t-s)}-e^{\kappa^{(a)} s}\right)+\left(C_3^{(a)}-\Bar{v}^{(a)}\right)\left(e^{\kappa^{(a)} t}-e^{\kappa^{(a)} s}\right).
\end{align*}
We conclude that
\begin{align}\label{truc}
    \xi_{s}^{(a)}(t)= & \ e^{-\kappa^{(a)}(t-s)}v_s +
    C_1^{(a)}\left(\lambda_s-\frac{\beta\lambda_0}{\beta-\alpha}\right)\left(e^{-(\beta-\alpha)(t-s)}-e^{-\kappa^{(a)}(t- s)}\right) \\
    &+C_3^{(a)}\left(1-e^{-\kappa^{(a)} (t-s)}\right) \notag\\
    = & \ D_1^{(a)}(t-s)v_s+D_2^{(a)}(t-s)\lambda_s+D_3^{(a)}(t-s)\notag, 
\end{align}
where one can check that
\begin{align*}
   -C_1^{(a)}&\frac{\beta\lambda_0}{\beta-\alpha}\left(e^{-(\beta-\alpha)h}-e^{-\kappa^{(a)}h}\right)+C_3^{(a)}\left(1-e^{-\kappa^{(a)}h}\right)= \\
    & = \left(\frac{C_1^{(a)}\beta\lambda_0}{\beta-\alpha}-C_3^{(a)}\right)e^{-\kappa^{(a)}h}-\frac{C_1^{(a)}\beta\lambda_0}{\beta-\alpha}e^{-(\beta-\alpha)h}+C_3^{(a)} \\
    & = \left(\frac{C_2^{(a)}}{\kappa^{(a)}}-\Bar{v}^{(a)}\right)e^{-\kappa^{(a)}h}-\frac{C_2^{(a)}}{\beta-\alpha}e^{-(\beta-\alpha)h}+C_3^{(a)}=D_3^{(a)}(h) .
\end{align*}

(3) For convenience we take the explicit expression of $\xi_s^{(a)}(t)$ given in \eqref{truc}, that is  
    \begin{align*}
        \xi_{s}^{(a)}(t)= & \ e^{-\kappa^{(a)}(t-s)}v_s +
    C_1^{(a)}\left(\lambda_s-\frac{\beta\lambda_0}{\beta-\alpha}\right)\left(e^{-(\beta-\alpha)(t-s)}-e^{-\kappa^{(a)}(t- s)}\right) \\
    &+C_3^{(a)}\left(1-e^{-\kappa^{(a)} (t-s)}\right).
    \end{align*}
    Then, 
    \begin{align}\label{Aeq1}
        d\xi_{s}^{(a)}(t) = & \ \kappa^{(a)}e^{-\kappa^{(a)}(t-s)}v_sds+e^{-\kappa^{(a)}(t-s)}dv_s+C_1^{(a)}\left(e^{-(\beta-\alpha)(t-s)}-e^{-\kappa^{(a)}(t-s)}\right)d\lambda_s \notag\\
        & +C_1^{(a)}\left(\lambda_s-\frac{\beta\lambda_0}{\beta-\alpha}\right)\left[\left(\beta-\alpha\right)e^{-(\beta-\alpha)(t-s)}-\kappa^{(a)}e^{-\kappa^{(a)}(t-s)}\right]ds \notag\\
        &-C_3^{(a)}\kappa^{(a)}e^{-\kappa^{(a)}(t-s)}ds \notag\\
        = & \ \kappa^{(a)}e^{-\kappa^{(a)}(t-s)}v_sds+e^{-\kappa^{(a)}(t-s)}\left[-\kappa^{(a)}\left(v_s-\Bar{v}^{(a)}\right)ds+\sigma\sqrt{v_s}dW_s^{\QQ(a)}+\eta dL_s\right] \notag\\
        & +C_1^{(a)}\left(e^{-(\beta-\alpha)(t-s)}-e^{-\kappa^{(a)}(t-s)}\right)\left[-\beta(\lambda_s-\lambda_0)ds+\alpha dN_s\right] \notag\\
        & +C_1^{(a)}\left(\lambda_s-\frac{\beta\lambda_0}{\beta-\alpha}\right)\left[\left(\beta-\alpha\right)e^{-(\beta-\alpha)(t-s)}-\kappa^{(a)}e^{-\kappa^{(a)}(t-s)}\right]ds \notag\\
        &-C_3^{(a)}\kappa^{(a)}e^{-\kappa^{(a)}(t-s)}ds.
    \end{align}
    Note that the $ds$ coefficients are
    \begin{align}\label{Auglyexpr}
        & \ \kappa^{(a)}e^{-\kappa^{(a)}(t-s)}v_s -e^{-\kappa^{(a)}(t-s)}\kappa^{(a)}\left(v_s-\Bar{v}^{(a)}\right)+C_1^{(a)}\left(e^{-(\beta-\alpha)(t-s)}-e^{-\kappa^{(a)}(t-s)}\right)\left[-\beta(\lambda_s-\lambda_0)\right] \notag\\
        & +C_1^{(a)}\left(\lambda_s-\frac{\beta\lambda_0}{\beta-\alpha}\right)\left[\left(\beta-\alpha\right)e^{-(\beta-\alpha)(t-s)}-\kappa^{(a)}e^{-\kappa^{(a)}(t-s)}\right]-C_3^{(a)}\kappa^{(a)}e^{-\kappa^{(a)}(t-s)}.
    \end{align}
    One can check that the expression in \eqref{Auglyexpr} is, indeed, equal to, 
    \begin{align}\label{Abetterexpr}
        -\alpha C_1^{(a)}\left(e^{-(\beta-\alpha)(t-s)}-e^{-\kappa^{(a)}(t-s)}\right)\lambda_s-\eta\EE[J_1]e^{-\kappa^{(a)}(t-s)}\lambda_s.
    \end{align}
    Then, the dynamics of $s\to\xi_s(t)$ in \eqref{Aeq1} can be written as
    \begin{align*}
        d\xi_s^{(a)}(t)= & \ \sigma e^{-\kappa^{(a)}(t-s)}\sqrt{v_s}dW_s^{\QQ(a)}++ \eta e^{-\kappa^{(a)}(t-s)}\left(dL_s-\EE[J_1]\lambda_sds\right)\\
        &+\alpha C_1^{(a)}\left(e^{-(\beta-\alpha)(t-s)}-e^{-\kappa^{(a)}(t-s)}\right)\left(dN_s-\lambda_sds\right) \\ 
        = & \ \sigma D_1^{(a)}(t-s)\sqrt{v_s}dW_s^{\QQ(a)}+\eta D_1^{(a)}(t-s)\left(dL_s-\EE[J_1]\lambda_sds\right) \\ &+\alpha D_2^{(a)}(t-s)\left(dN_s-\lambda_sds\right),
    \end{align*}
    which would finish the proof. We now check that the expressions in \eqref{Auglyexpr} and \eqref{Abetterexpr} are the same. Note that the coefficients of $e^{-(\beta-\alpha)(t-s)}$ in \eqref{Auglyexpr} are 
    \begin{align*}
        & \ C_1^{(a)}\left[-\beta(\lambda_s-\lambda_0)\right]+C_1^{(a)}\left(\lambda_s-\frac{\beta\lambda_0}{\beta-\alpha}\right)(\beta-\alpha)=-\alpha C_1^{(a)}\lambda_s,
    \end{align*}
    which is equal to the coefficients of $e^{-(\beta-\alpha)(t-s)}$ in \eqref{Abetterexpr}. The coefficients of $e^{-\kappa^{(a)}(t-s)}$ in \eqref{Auglyexpr} are 
    \begin{align*}
       & \  \kappa^{(a)}\Bar{v}^{(a)}+\frac{\eta\EE[J_1]}{\kappa^{(a)}-(\beta-\alpha)}\beta(\lambda_s-\lambda_0)-\frac{\eta\EE[J_1]}{\kappa^{(a)}-(\beta-\alpha)}\left(\lambda_s-\frac{\beta\lambda_0}{\beta-\alpha}\right)\kappa^{(a)}
       \\ &-\left(\frac{\eta\EE[J_1]\beta\lambda_0}{\kappa^{(a)}(\beta-\alpha)}+\Bar{v}^{(a)}\right)\kappa^{(a)} \\
        = & \ \frac{\eta\EE[J_1]}{\left[\kappa^{(a)}-(\beta-\alpha)\right](\beta-\alpha)}\left[\beta(\lambda_s-\lambda_0)(\beta-\alpha)-\lambda_s\kappa^{(a)}(\beta-\alpha)+\beta\lambda_0\kappa^{(a)}-\beta\lambda_0\left(\kappa^{(a)}-(\beta-\alpha)\right)\right] \\
        = & \ \frac{C_1^{(a)}}{\beta-\alpha}\left[\beta(\lambda_s-\lambda_0)(\beta-\alpha)-\lambda_s\kappa^{(a)}(\beta-\alpha)+\beta\lambda_0(\beta-\alpha)\right] \\
        = & \ C_1^{(a)}\left[\beta(\lambda_s-\lambda_0)-\lambda_s\kappa^{(a)}+\beta\lambda_0\right] \\
        = & \ C_1^{(a)}\left[\alpha\lambda_s-\alpha\lambda_s+\beta(\lambda_s-\lambda_0)-\lambda_s\kappa^{(a)}+\beta\lambda_0\right] \\
        = & \ \alpha C_1^{(a)}\lambda_s-\frac{\eta\EE[J_1]}{\kappa^{(a)}-(\beta-\alpha)}\left(\kappa^{(a)}-(\beta-\alpha)\right)\lambda_s \\
        = & \ \alpha C_1^{(a)}\lambda_s-\eta\EE[J_1]\lambda_s.
    \end{align*}
    This proves that the expressions in \eqref{Auglyexpr} and \eqref{Abetterexpr} are the same. 
\end{proof}

\begin{prop}\label{AlemV}
       Let $0\leq s\leq t\leq T$ and $\QQ(a)\in\mathcal{E}_m(Q_1,2+\varepsilon_1)$. Then, the explicit expression of $V_s^{(a)}(t)$ is given by
      \begin{align*}
    V_{s}^{(a)}(t)= & \ K_1^{(a)}(t-s)v_s + K_2^{(a)}(t-s)\lambda_s+K_3^{(a)}(t-s),
\end{align*}
where
\begin{gather*}
K_1^{(a)}(h):=A_{\kappa^{(a)}}(h), \hspace{1cm} K_2^{(a)}(h)=C_1^{(a)}\left[A_{\beta-\alpha}(h)-A_{\kappa^{(a)}}(h)\right], \notag\\ 
K_3^{(a)}(h):=\left[\frac{C_2^{(a)}}{\kappa^{(a)}}-\Bar{v}^{(a)}\right]A_{\kappa^{(a)}}(h) -\frac{C_2^{(a)}}{\beta-\alpha}A_{\beta-\alpha}(h)+C_3^{(a)}
\end{gather*}
where  $C_1^{(a)},C_2^{(a)}$ and $C_3^{(a)}$ are given in \eqref{eq:const}.
\end{prop}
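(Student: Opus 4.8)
The plan is to reduce the claim to the explicit formula for the forward variance obtained in Proposition \ref{AFOR} and then integrate term by term. First I would fix $s$ (with $s<t$) and note that, conditionally on $\mathcal{F}_s$, the map $u \mapsto \xi_s^{(a)}(u)$ is the deterministic, continuous (hence integrable on $[s,t]$) function
\begin{align*}
\xi_s^{(a)}(u) = D_1^{(a)}(u-s)\,v_s + D_2^{(a)}(u-s)\,\lambda_s + D_3^{(a)}(u-s),
\end{align*}
in which $v_s$ and $\lambda_s$ are $\mathcal{F}_s$-measurable and hence behave as constants with respect to the integration variable $u$. Plugging this into the definition $V_s^{(a)}(t) = \frac{1}{t-s}\int_s^t \xi_s^{(a)}(u)\,du$ and using linearity of the integral gives
\begin{align*}
V_s^{(a)}(t) = v_s\,\frac{1}{t-s}\int_s^t D_1^{(a)}(u-s)\,du + \lambda_s\,\frac{1}{t-s}\int_s^t D_2^{(a)}(u-s)\,du + \frac{1}{t-s}\int_s^t D_3^{(a)}(u-s)\,du.
\end{align*}

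Next I would carry out the change of variables $w = u - s$ in each of the three integrals, which turns $\frac{1}{t-s}\int_s^t D_i^{(a)}(u-s)\,du$ into $\frac{1}{h}\int_0^h D_i^{(a)}(w)\,dw$ with $h := t-s$. The point is that each $D_i^{(a)}$ is a finite linear combination of the two exponentials $w \mapsto e^{-\kappa^{(a)}w}$ and $w \mapsto e^{-(\beta-\alpha)w}$ together with the constant function $1$; by Definition \ref{intdef} we have $\frac{1}{h}\int_0^h e^{-kw}\,dw = A_k(h)$ for $k>0$, and trivially $\frac{1}{h}\int_0^h 1\,dw = 1$. Applying this to $D_1^{(a)}(w) = e^{-\kappa^{(a)}w}$ produces the $v_s$-coefficient $A_{\kappa^{(a)}}(h) = K_1^{(a)}(h)$; applying it to $D_2^{(a)}(w) = C_1^{(a)}\big(e^{-(\beta-\alpha)w} - e^{-\kappa^{(a)}w}\big)$ produces $C_1^{(a)}\big(A_{\beta-\alpha}(h) - A_{\kappa^{(a)}}(h)\big) = K_2^{(a)}(h)$; and applying it to $D_3^{(a)}(w) = \big(\tfrac{C_2^{(a)}}{\kappa^{(a)}} - \Bar{v}^{(a)}\big)e^{-\kappa^{(a)}w} - \tfrac{C_2^{(a)}}{\beta-\alpha}e^{-(\beta-\alpha)w} + C_3^{(a)}$ produces $\big(\tfrac{C_2^{(a)}}{\kappa^{(a)}} - \Bar{v}^{(a)}\big)A_{\kappa^{(a)}}(h) - \tfrac{C_2^{(a)}}{\beta-\alpha}A_{\beta-\alpha}(h) + C_3^{(a)} = K_3^{(a)}(h)$, the constant term $C_3^{(a)}$ surviving the averaging unchanged. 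Collecting the three contributions with $h = t-s$ yields exactly $V_s^{(a)}(t) = K_1^{(a)}(t-s)v_s + K_2^{(a)}(t-s)\lambda_s + K_3^{(a)}(t-s)$.

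As for the difficulty, there is essentially no analytic obstacle: the computation is a term-by-term integration of a deterministic function built from exponentials and a constant, so the only points deserving a sentence are (a) the $\mathcal{F}_s$-measurability of $v_s$ and $\lambda_s$, which legitimizes pulling them out of the $du$-integral, and (b) the well-definedness of the constants involved, namely $\kappa^{(a)} > 0$, $\beta - \alpha > 0$ and $\kappa^{(a)} \neq \beta - \alpha$, which is already recorded in Observation \ref{obs1} and in particular makes $A_{\kappa^{(a)}}$ and $A_{\beta-\alpha}$ meaningful. The only mildly tedious part is bookkeeping the coefficients of $D_3^{(a)}$ carefully so that the three pieces reassemble into the stated $K_3^{(a)}$.
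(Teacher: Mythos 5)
Your proposal is correct and follows exactly the paper's own argument: substitute the explicit forward-variance expression \eqref{expl2} into $V_s^{(a)}(t)=\frac{1}{t-s}\int_s^t\xi_s^{(a)}(u)\,du$, integrate the exponentials and the constant term by term via Definition \ref{intdef}, and collect the coefficients into $K_1^{(a)},K_2^{(a)},K_3^{(a)}$. The extra remarks on $\mathcal{F}_s$-measurability and the well-definedness of the constants (as in Observation \ref{obs1}) are fine but add nothing beyond what the paper already records.
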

\begin{proof}
Using the explicit expression of $\xi_{s}^{(a)}$ given in \eqref{expl2} and the Definition \ref{intdef} we obtain
\begin{align*}
    V_s^{(a)}(t) = & \ \frac{1}{t-s}\int_s^t\xi_{s}^{(a)}(u)du \\
     = & \ \frac{1}{t-s}\int_s^tD_1^{(a)}(u-s)duv_s+\frac{1}{t-s}\int_s^tD_2^{(a)}(u-s)du\lambda_s+\frac{1}{t-s}\int_s^tD_3^{(a)}(u-s)du \\
     = & \ A_{\kappa^{(a)}}(t-s)v_s +C_1^{(a)}\left[A_{\beta-\alpha}(t-s)-A_{\kappa^{(a)}}(t-s)\right]\lambda_s \\
     &+\left[\frac{C_2^{(a)}}{\kappa^{(a)}}-\Bar{v}^{(a)}\right]A_{\kappa^{(a)}}(t-s) -\frac{C_2^{(a)}}{\beta-\alpha}A_{\beta-\alpha}(t-s)+C_3^{(a)} \\ 
     = & \ K_1^{(a)}(t-s)v_s + K_2^{(a)}(t-s)\lambda_s+K_3^{(a)}(t-s).
\end{align*}
\end{proof}
\begin{lem}\label{ineqA}
    Let $k_1,k_2,h>0$ with $k_1\neq k_2$. Then, 
    \begin{align*}
        \frac{A_{k_2}(h)-A_{k_1}(h)}{k_1-k_2}>0.
    \end{align*}
\end{lem}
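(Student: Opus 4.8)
The plan is to recognize the quantity $\frac{A_{k_2}(h)-A_{k_1}(h)}{k_1-k_2}$ as (minus) a difference quotient in the parameter $k$ of the function $k\mapsto A_k(h)$, and to prove that this function is strictly decreasing for every fixed $h>0$. Note first that the expression is invariant under exchanging $k_1$ and $k_2$ (both numerator and denominator change sign), so it suffices to treat the case $k_1<k_2$ and show $A_{k_1}(h)>A_{k_2}(h)$; the ratio is then a negative number over a negative number, hence strictly positive.

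For the monotonicity I would use the integral representation from Definition \ref{intdef}, namely $A_k(h)=\frac{1}{h}\int_0^h e^{-ku}\,du$. For each fixed $u\in(0,h]$ the map $k\mapsto e^{-ku}$ is strictly decreasing, so $e^{-k_1 u}>e^{-k_2 u}$ whenever $k_1<k_2$. Integrating this inequality over $u\in(0,h)$, where the integrands are continuous and the inequality is strict on a set of positive measure, yields $\int_0^h e^{-k_1 u}\,du>\int_0^h e^{-k_2 u}\,du$, i.e. $A_{k_1}(h)>A_{k_2}(h)$, which is exactly what is needed.

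There is essentially no obstacle here; the only point requiring a little care is to keep the inequality strict, which is immediate from strict monotonicity of the exponential together with continuity of the integrand. As an alternative one could differentiate $A_k(h)=\frac{1}{kh}(1-e^{-kh})$ with respect to $k$ and check that $\partial_k A_k(h)<0$ for all $k,h>0$, but the integral argument sidesteps that computation and makes the sign transparent.
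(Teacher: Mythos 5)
Your proof is correct and follows essentially the same route as the paper: both arguments reduce to the integral representation $A_k(h)=\frac{1}{h}\int_0^h e^{-ku}\,du$ and the strict pointwise monotonicity of $k\mapsto e^{-ku}$ for $u>0$ (the paper fixes $k_1>k_2$ and bounds the integrand of the difference quotient below by zero, while you fix $k_1<k_2$ and compare signs of numerator and denominator — a cosmetic difference only). Your remark about keeping the inequality strict via positivity on a set of positive measure is the same point the paper's strict inequality implicitly relies on.
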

\begin{proof}
    We can assume without loss of generality that $k_1>k_2$. Now,
\begin{align*}
\frac{A_{k_2}(h)-A_{k_1}(h)}{k_1-k_2}=\frac{1}{h}\int_0^h\frac{e^{-k_2u}-e^{-k_1u}}{k_1-k_2}du>\frac{1}{h}\int_0^h\frac{e^{-k_1u}-e^{-k_1u}}{k_1-k_2}du=0.
\end{align*}
    This finishes the proof.
\end{proof}

\newpage
\printbibliography

\end{document}